\documentclass[twoside,11pt]{article}

\usepackage{blindtext}

\usepackage[preprint]{jmlr2e}

\usepackage{booktabs}
\usepackage{lipsum}
\usepackage{lgrmath} 
\usepackage{xcolor}   
\usepackage{changepage}
\usepackage{amsfonts,amsmath,amsopn,amssymb}
\usepackage[scr=boondox,cal=esstix]{mathalpha}
\usepackage{graphicx}
\usepackage{epstopdf}
\usepackage{natbib}
\usepackage{algorithm, algpseudocode}
\usepackage{upgreek}
\usepackage{dsfont}
\usepackage{tikz}
\usepackage{tikz-cd}
\usepackage[T1]{fontenc}
\usepackage{tabularx,ragged2e}
\usetikzlibrary{arrows.meta, positioning, calc, shapes.geometric}
\usepackage{subcaption}

\numberwithin{equation}{section}
\numberwithin{theorem}{section}
\numberwithin{example}{section}

\DeclareSymbolFont{greekextras}{LGR}{cmr}{m}{it}
\def\multiset#1#2{\ensuremath{\left(\kern-.3em\left(\genfrac{}{}{0pt}{}{#1}{#2}\right)\kern-.3em\right)}}

\newcommand{\sho}{\textup{\text{\th}}}

\DeclareMathSymbol{\qoppa}{\mathord}{greekextras}{19} 
\DeclareMathSymbol{\Qoppa}{\mathord}{greekextras}{21}
\DeclareMathSymbol{\Koppa}{\mathord}{greekextras}{26}

\makeatletter
\newcommand{\fullpagecenter}[1]{%
  \par\noindent
  \hspace*{-\@totalleftmargin}
  \makebox[\dimexpr\textwidth+\@totalleftmargin\relax][c]{#1}%
  \par
}
\makeatother

\definecolor{Accent}{HTML}{1F77B4}
\tikzset{
  v/.style      ={circle, fill=black, inner sep=1.0pt},             
  ring/.style   ={circle, draw=Accent, line width=0.7pt, fill=white, inner sep=1.0pt}, 
  E/.style      ={line width=1.0pt, draw=Accent},                    
  thin/.style   ={line width=0.45pt, draw=black!70},                 
  cross/.style  ={line width=0.6pt, dashed, dash pattern=on 2pt off 1.2pt}, 
  close/.style  ={line width=0.6pt, densely dotted},                 
  tag/.style    ={font=\scriptsize\itshape, fill=white, inner sep=1pt}
}

\newcommand{\BaseUV}{%
  \node[v,label=below:$u$] (u) at (0,0) {};
  \node[v,label=below:$v$] (v) at (2.9,0) {};
  \draw[E] (u)--(v);
}

\makeatletter
\newenvironment{lemmast}[1][]%
  {\par\smallskip\noindent\textbf{Lemma%
     \@ifmtarg{#1}{}{ (\textbf{#1})}.\ }%
   \itshape}
  {\par\smallskip}
\makeatletter
\usepackage{lastpage}
\usepackage{adjustbox}
\jmlrheading{XX}{2026}{1-\pageref{LastPage}}{XX/XX; Revised XX/XX}{XX/XX}{XX-XXXX}{Giorgio Micaletto, Tebe Nigrelli%
}

\ShortHeadings{Edgewise Envelopes Between Discrete Ricci Curvatures}{Micaletto and Nigrelli
}
\firstpageno{1}

\begin{document}

\title{Edgewise Envelopes Between Balanced Forman and Ollivier-Ricci Curvature}

\begingroup
\renewcommand\thefootnote{\fnsymbol{footnote}}
\setcounter{footnote}{0}

\author{\name Giorgio Micaletto
        \email gmicaletto@uchicago.edu\\
        \addr Department of Statistics\\
        University of Chicago\\
        Chicago, IL 60637-1514, USA
        \AND 
        \name Tebe Nigrelli
        \email tebe.nigrelli@studbocconi.it\\
        \addr Department of Computing Sciences\\ 
        Bocconi University\\
        Via Roberto Sarfatti 25, 20136 Milan, Italy
}
\begingroup
\makeatletter
\renewcommand\@makefntext[1]{%
\noindent\makebox[1.8em][r]{\!\@makefnmark:}\,#1%
}
\makeatother
\endgroup
\endgroup

\editor{
}

\maketitle

\begin{abstract}
Evaluating Ollivier-Ricci (OR) curvature on large-scale graphs is computationally prohibitive due to the necessity of solving an optimal transport problem for every edge. We bypass this bottleneck by deriving explicit, two-sided, piecewise-affine transfer moduli between the transport-based OR curvature and the combinatorial Balanced Forman (BF) curvature. We establish deterministic bounds for $\mathfrak{c}_{\rm OR}(i,j)$ parameterized by 2-hop local graph combinatorics, 
reducing the edgewise evaluation complexity from an optimal transport linear program to a worst-case $\mathcal{O}(\max_{v \in V} \deg(v)^{2.5})$ time, entirely eliminating the reliance on global solvers.
Empirical scalability benchmarks confirm these theoretical guarantees, demonstrating that the proposed transfer moduli yield significant asymptotic and constant-factor speedups over the steep polynomial scaling of exact OR evaluation. 
Furthermore, the tightness of these bounds is validated via distributional analyses on canonical random graphs and empirical networks, with the derived analytical bands enclosing the empirical distributions independent of degree heterogeneity, geometry, or clustering, providing a scalable, computationally efficient framework for rigorous statistical network analysis.
\end{abstract}
\begin{keywords}
Ollivier--Ricci curvature; Balanced Forman curvature; scalable network analysis; computational graph theory; optimal transport on graphs; empirical curvature distributions.
\end{keywords}


\section{Introduction}
\label{sec:introduction}
Discrete analogues of Ricci curvature on graphs have become fundamental tools in both network science and geometric deep learning. Two main families account for the majority of papers in this area: (i) transport-based coarse Ricci curvatures \citep{ollivier_ricci_2009, lin_curvature_2011}, (ii) combinatorial curvatures \citep{forman_method_2003,sreejith_forman_2016,weber_forman_2016,samal_comparative_2018}.

In geometric graph analysis and graph neural networks (GNNs), OR curvature is increasingly utilized to identify structural bottlenecks, where modifying the graph topology based on curvature signs can simultaneously resolve over-squashing and over-smoothing phenomena \citep[e.g.][]{nguyen_over_2023,liu_curvdrop_2023}. However, evaluating exact OR curvature requires solving a Wasserstein optimal transport problem for every edge., which severely limits the scalability of principled, geometry-aware algorithms on large datasets. 

The literature includes several empirical studies comparing these discrete curvature notions across graph models and real networks, often reporting substantial, but not universal, agreement while also highlighting regimes where they differ \citep{sreejith_systematic_2017, samal_comparative_2018, saucan_discrete_2018}. 

Recent work has attempted to reconcile these two perspectives: \citet{topping_squashing_2022} introduced BF curvature, a combinatorial edge-based notion that provides a sharp one-sided lower bound for \cite{lin_curvature_2011} curvature on graphs, while other studies have proposed extensions of Forman curvature aimed at bringing it into closer alignment with OR in applications \citep{tee_enhanced_2021, samal_comparative_2018}.

\paragraph{Positioning of This Work.}
To the best of the authors' knowledge, despite many empirical comparisons and one-sided bounds, there is no \emph{analytic, two-sided} relation between transport-based and combinatorial curvatures at the \emph{edge level} on general graphs.
While approximate solvers like Sinkhorn-Knopp \citep[e.g.,][]{knight_sinkhorn_2008} offer 
$\mathcal{O}\left((\max_{v\in V}\deg(v)/ \epsilon)^2\right)$ scaling for the transport step, they require constructing a dense local distance matrix, a $\mathcal{O}(\max_{v\in V}\deg(v)^3)$ bottleneck, and introduce entropy regularization that destroys the deterministic guarantees required for exact bounding.
Our contribution is (i) to develop a \emph{lazy transport envelope} that upper-bounds the edgewise Wasserstein-1 yielding an envelope for OR curvature as a function of degrees, laziness, triangle count, and a 4-cycle coverage (Proposition \ref{prop:lazy-envelope-sharpened}) and (ii) computing a \emph{necessary triangle threshold} that turns OR bounds into BF bounds.

\paragraph{What We Mean by ``Transfer Moduli''.}
We define \emph{edgewise transfer moduli} as pairs of nondecreasing functions, \(\varphi^{(i,j)}\) and \(\psi^{(i,j)}\), that convert a bound on one curvature into a bound on the other. 
For $\mathfrak c_{\rm BF} \mapsto \mathfrak c_{\rm OR}$, and symmetrically for $\mathfrak c_{\rm OR} \mapsto \mathfrak c_{\rm BF}$, we require:
\[
\begin{aligned}
 \mathfrak c_{\rm BF}(i,j)\ge \zeta \ \implies\mathfrak c_{\rm OR}(i,j)\ge \varphi^{(i,j)}_{\rm BF \to OR}(\zeta),\\
 \mathfrak c_{\rm BF}(i,j)\le \vartheta \ \implies \mathfrak c_{\rm OR}(i,j)\le \psi^{(i,j)}_{\rm BF \to OR}(\vartheta).
\end{aligned}
\]
Crucially, because these functions depend solely on 2-hop neighborhoods, they are explicit and computable in worst-case $\mathcal{O}(\max_{v\in V}\deg(v)^{2.5})$ time per edge, completely bypassing the optimal transport linear program, and as GNNs primarily rely on identifying extreme curvature values to rewire graphs, these envelopes allow practitioners to flag geometric bottlenecks efficiently, making scalable OR-based network analysis computationally tractable.

\paragraph{Distributional Predictions and Experimental Scope.}
Our experimental evaluation is twofold: validating the computational scaling of our local combinatorial approach against exact solvers, and understanding \emph{how} curvatures are distributed across different topologies. For the latter, we examine full empirical \emph{histograms} of OR and BF curvature against our transfer bounds, which act as analytical envelopes for the observed distributions, complementing prior comparative work that emphasized edgewise scatter plots and rank correlations \citep{sreejith_systematic_2017, samal_comparative_2018}. We benchmark across three families of testbeds to isolate distinct drivers of curvature: \emph{random graph models} that control degree heterogeneity, clustering, and geometry \citep{erdos_random_1959,watts_collective_1998,barabasi_emergence_1999,penrose_random_2003,steger_generating_1999,holland_stochastic_1983,krioukov_hyperbolic_2010}; \emph{canonical combinatorial families} that provide clean baselines and extremes (\emph{Cycles}, \emph{Rectangular grids}, \emph{$d$-ary trees}, and \emph{Complete graphs}); \emph{real networks} that supply heterogeneous, noisy ground truth \citep{zachary_karate_1977,gleiser_community_2003,watts_collective_1998,milo_motifs_2002,gehrke_arxiv_2003}.

This allows us to evaluate our transfer bounds against degree control, clustering, geometry, community structure, and real-world heterogeneity. As demonstrated in Section~\ref{sec:analysis}, our approach yields orders-of-magnitude execution speedups while ensuring the analytical bands enclose the empirical distributions regardless of the underlying structural complexities.

\section{Preliminaries}
\label{sec:prelims}
\begin{remark}
Proofs of all stated results are provided in Appendix~\ref{appendix:proofs}.
\end{remark}
In the following, we fix the notation, collect the basic objects used for transport, the curvature notions of interest, and the combinatorial summaries that parameterize our explicit couplings and bounds used in the next sections. We work with a simple (undirected, loopless, no multi-edges), finite, connected graph \(G=(V,E)\), where $n:=|V|$ and degrees are defined as
\(
\varrho_u:=\deg_G(u)=|\mathcal N(u)|,
\)
where $\mathcal N(u)$ denotes the set of vertices adjacent to $u$,
\(
\mathcal N(u):=\{v\in V:(u,v)\in E\}.
\)

\begin{definition}[Lazy Random-Walk Measure]
\label{def:lazy-rw}
For \(u \in V\), define the \emph{one-step (lazy) random-walk measure} on \(V\) by
\(
m_u :=\; \alpha_u\,\delta_u \;+\; (1-\alpha_u)\,\nu_u,
\)
where \(
\alpha_u = (\varrho_u + 1)^{-1},
\)
\(\delta_u\) denotes the \emph{Dirac mass} at \(u\) (\(
  \delta_u(k) := \mathds{1}_{\{k=u\}} (k)
  \)), and  \(\nu_u\) is the \emph{random-walk adjacency measure} at \(u\), defined as
  \(
  \nu_u(k) := {\varrho_u}^{-1}\mathds{1}_{\{k\in\mathcal N(u)\}}
  \)
\end{definition}

\begin{remark}[Saturating the Diagonal]
\label{rem:saturate-diagonal}
Let $(V,\mathrm{dist}_G)$ be a finite metric space and let $\mathcal P(V)$ denote the set of probability measures on $V$; given $\mu,\nu\in\mathcal P(V)$, let $\gamma:V\to[0,1]$ be the common mass
$\gamma(x):=\min\{\mu(x),\nu(x)\}$. By \emph{saturating the diagonal} we mean the operation
of first assigning
\(
\pi(x,x) \leftarrow \gamma(x)
\)
for all $x\in V$,
and then coupling only the residual measures
$\mu^{\perp}:=\mu-\gamma$ and $\nu^{\perp}:=\nu-\gamma$, which have disjoint supports.
This doesn't change nor increase the optimal cost,
\(
W_1(\mu,\nu)=W_1(\mu^{\perp},\nu^{\perp}),
\)
as mass placed on the diagonal incurs zero transportation cost.
\end{remark}

Table~\ref{tab:definitions} formally establishes the primary neighborhood sets, and combinatorial parameters associated with an arbitrary edge \(e=(i,j) \in E\), with the underlying geometric motifs governing these parameters illustrated in Figure~\ref{fig:local-summaries}.

\begin{table}
\caption{Definitions, neighbor measures, and lazy transport parameters evaluated for a fixed edge $e=(i,j) \in E$.}
\label{tab:definitions}
\centering
\small
\renewcommand{\arraystretch}{1.3}
\begin{tabular}{@{}p{0.3\textwidth}p{0.7\textwidth}@{}}
\toprule
\textbf{Concept / Symbol} & \textbf{Definition / Formula} \\
\midrule
\textbf{Neighborhood Sets} & Common: $\mathcal C := \mathcal N(i)\cap\mathcal N(j)$; Unique: $\mathcal U_i := \mathcal N(i)\setminus\mathcal B_1(j)$ (resp.\ $\mathcal U_j$). \\
\textbf{Degree Quantities} &  $\varrho_{\min/\max} := \min/\max\{\varrho_i,\varrho_j\}$; $\triangle(i,j) := |\mathcal C|$. \\
\textbf{Cross-Edge Vertices} ($\Xi_{ij}$) & $|\xi_i(i,j)| + |\xi_j(i,j)|$, where $\xi_i(i,j) := \{k \in \mathcal U_i : \exists w \in \mathcal U_j \text{ s.t. } kw \in E\}$. \\
\textbf{4-Cycle Closers} & $\widetilde\Box(k,i,j) := |\mathcal N(k)\cap \mathcal U_j|$ for $k \in \mathcal U_i$. \\
\textbf{Max Closers} ($\varpi_{\max}$) & $\varpi_{\max}(i,j) := \max_{k\in\mathcal U_i} \widetilde\Box(k,i,j) \vee \max_{w\in\mathcal U_j} \widetilde\Box(w,j,i)$; $\sho_{\max}(i,j) := \varpi_{\max}(i,j)\varrho_{\max}$. \\
\textbf{4-Cycle Constant} & $\mathfrak C_4(G) := \max_{(i,j)\in E} \mathfrak C_4(i,j)$, where $\mathfrak C_4(i,j) := \Xi_{ij} / \sho_{\max}(i,j)$. \\
\textbf{Comparison Params} & $\mathfrak S(i,j) := \frac{2}{\varrho_i}+\frac{2}{\varrho_j}-2$;  $\mathfrak T(i,j) := \frac{2}{\varrho_{\max}} + \frac{1}{\varrho_{\min}}$; $\mathfrak K(i,j) := 1 - \frac{1}{\varrho_{\min}} - \frac{1}{\varrho_{\max}}$; $\mathfrak Z_\bullet(i,j) := \frac{\triangle(i,j)}{\bullet\{\varrho_i,\varrho_j\}}$ for $\bullet \in \{\min,\max\}$. \\
\midrule
\textbf{Laziness Scales} & $\alpha_{\min/\max} := \min/\max\{\alpha_i,\alpha_j\}$; \quad $\Delta_{ij}(\alpha) := \alpha_{\max} - \alpha_{\min}$. \\
\textbf{Weights \& Conductance} & $w^{(\alpha)}_u := (1-\alpha_u)/{\varrho_u}$ for $u\in\{i,j\}$; \quad $w^{(\alpha)}_\wedge := \min\{w^{(\alpha)}_i,w^{(\alpha)}_j\}$; \\
& $\Sigma^{(\alpha)}_{i,j} := \varrho_i/(1-\alpha_i)+{\varrho_j}/({1-\alpha_j})$. \\
\textbf{Absorptions \& Residuals} & $z_i := \min\!\big\{\alpha_i,w^{(\alpha)}_j\big\}$; \quad $r_i := \big[\alpha_i-w^{(\alpha)}_j\big]_+$; \quad $\bar r_i := \big[w^{(\alpha)}_j-\alpha_i\big]_+$ (symmetrically for $j$). \\
\textbf{Matching Statistic} & $\mathscr S(i,j) := \frac{\mathfrak m(i,j)}{\max\{\varrho_i,\varrho_j\}}$, where $\mathfrak m(i,j)$ is the max matching size on cross-edges $E(\mathcal U_i,\mathcal U_j)$. \\
\textbf{Envelope Coefficients} & $\mathrm{Const}_\alpha := -1+2(z_i+z_j)+\sum_{u\in\{i,j\}}(r_u+\bar r_u)+\frac{\varrho_i+\varrho_j-2}{\Sigma^{(\alpha)}_{i,j}}$; \\
& $\mathrm{Slope}_\alpha := w^{(\alpha)}_i+w^{(\alpha)}_j-\frac{2}{\Sigma^{(\alpha)}_{i,j}}$. \\
\bottomrule
\end{tabular}
\end{table}

\begin{figure}
\centering
\begin{tikzpicture}[baseline]
  \BaseUV
  \coordinate (m) at ($(u)!0.5!(v)$);
  \node[v] (c1) at ($(m)+(0, 1.05)$) {};
  \node[v] (c2) at ($(m)+(0,-1.05)$) {};
  \draw[thin] (u)--(c1)--(v) (u)--(c2)--(v);
  \node[tag] at ($(u)+(-0.45,1.55)$) {{\rm (a)}};
\end{tikzpicture}
\hspace{1.2em}
\begin{tikzpicture}[baseline]
  \BaseUV
  \node[v]    (k1) at ($(u)+(-0.65,  1.00)$) {};
  \node[v]    (k2) at ($(u)+(-0.65, -1.00)$) {};
  \node[v]    (w1) at ($(v)+( 0.65,  1.00)$) {};
  \draw[thin] (u)--(k1) (u)--(k2) (v)--(w1);

  \draw[cross] (k1) .. controls ($(k1)+(0.90, 0.00)$) and ($(w1)+(-0.90, 0.00)$) .. (w1);
  \draw[cross] (k2) .. controls ($(k2)+(1.10,-0.80)$) and ($(w1)+(-1.10,-0.80)$) .. (w1);

  \node[ring] at (k1) {};
  \node[ring] at (k2) {};
  \node[ring] at (w1) {};
  \node[tag]  at ($(u)+(-0.45,1.55)$) {{\rm (b)}};
\end{tikzpicture}
\hspace{1.2em}
\begin{tikzpicture}[baseline]
  \BaseUV
  \coordinate (m) at ($(u)!0.5!(v)$);

  \node[v,label=left:$k$] (k)  at ($(u)+(-0.65, 1.20)$) {};

  \node[v]                (w1) at ($(v)+( 0.55,  1.40)$) {};
  \node[v]                (w2) at ($(v)+( 0.65, -1.00)$) {};

  \node[v]                (c1) at ($(m)+(0,0.95)$) {};

  \draw[thin]
    (u)--(k)
    (v)--(w1)
    (v)--(w2)
    (u)--(c1)
    (v)--(c1)

    (c1)--(w1);

  \draw[close,shorten <= 0pt,shorten >= 0pt]
       (k) to[out=12,in=180,looseness=0.95] (w1);
  \draw[close,shorten <= 0pt,shorten >= 0pt]
       (k) to[out=-12,in=180,looseness=0.95] (w2);

  \node[tag]  at ($(u)+(-0.45,1.55)$) {{\rm (c)}};
\end{tikzpicture}

\caption{
(a) \(\triangle(u,v)=|\mathcal N(u)\cap\mathcal N(v)|\);
(b) \(\Xi_{uv}\) counts \emph{vertices} in \(\mathcal U_u\cup\mathcal U_v\) that are incident to at least one cross edge: the dashed cross edges (with ring-marked \emph{vertices}) are counted;
(c) For a chosen \(k\in\mathcal N(u)\setminus\{v\}\), dotted edges indicate contributions to \(\widetilde\Box(k,u,v)\); maximizing over $k$ shows the contribution of \(\varpi_{\max}(u,v)\).}
\label{fig:local-summaries}
\end{figure}

\begin{definition}[Edge-Curvature Maps]
We consider two discrete Ricci curvatures:
\begin{align}
\mathfrak c_{\mathrm{BF}}(u,v)
&= \left(\mathfrak S(i,j)+\mathfrak T(i,j)\,\triangle(i,j)+\mathfrak C_4(i,j)\right)\mathds{1}_{\{\varrho_{\min}>1\}}
\label{eq:BF-lower-decomp}
\\[5pt]
\mathfrak c_{\mathrm{OR}}(u,v)
&=1-W_1\!\bigl(m_u,m_v\bigr),
\label{eq:cOR}
\end{align}
c.f. \citet[Definition~1]{topping_squashing_2022}, \citet[Definition~3]{ollivier_ricci_2009}.
\end{definition}

\begin{lemma}[Degree-Based Upper Bound for $\mathfrak C_4(G)$] 
\label{lem:C4-degree-only} 
For every edge $(u,v)$, 
\begin{equation} 
\label{eq:c4-edge-degree}
\frac{\Xi_{uv}}{\sho_{\max}(u,v)} \ \le\ \frac{\varrho_u+\varrho_v-2}{\max\{\varrho_u,\varrho_v\}}, 
\end{equation} 
and hence 
\begin{equation} 
\label{eq:c4-global-degree} 
\mathfrak C_4(G) \ \le\ \max_{(u,v)\in E}\ \frac{\varrho_u+\varrho_v-2}{\max\{\varrho_u,\varrho_v\}} \ \le\ 2-\frac{2}{\displaystyle\max_{v\in V}\varrho_{v}}. 
\end{equation} 
\end{lemma}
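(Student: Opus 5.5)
The plan is to bound numerator and denominator of $\mathfrak C_4(u,v)=\Xi_{uv}/\sho_{\max}(u,v)$ separately, using only the definitions in Table~\ref{tab:definitions}, and then to maximize the resulting degree expression over edges.

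\textbf{Step 1 (numerator).} By definition $\xi_u(u,v)\subseteq\mathcal U_u=\mathcal N(u)\setminus\mathcal B_1(v)$. Since $v\in\mathcal N(u)$ but $v\in\mathcal B_1(v)$, we have $\mathcal U_u\subseteq\mathcal N(u)\setminus\{v\}$, so $|\xi_u(u,v)|\le\varrho_u-1$. Symmetrically $|\xi_v(u,v)|\le\varrho_v-1$. Hence $\Xi_{uv}\le\varrho_u+\varrho_v-2$.

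\textbf{Step 2 (denominator).} I split on whether cross edges exist.
\begin{itemize}
\item If $\Xi_{uv}=0$, there is no edge in $E(\mathcal U_u,\mathcal U_v)$. Following the convention of \citet{topping_squashing_2022} that the 4-cycle term vanishes when $\sho_{\max}=0$, the left side of \eqref{eq:c4-edge-degree} is $0$. The right side is nonnegative because $\varrho_u,\varrho_v\ge1$.
\item If $\Xi_{uv}>0$, some $k\in\mathcal U_u$ has a neighbour $w\in\mathcal U_v$. Then $\widetilde\Box(k,u,v)\ge1$, so $\varpi_{\max}(u,v)\ge1$ and $\sho_{\max}(u,v)\ge\max\{\varrho_u,\varrho_v\}$.
\end{itemize}
Combining the second case with Step~1 yields \eqref{eq:c4-edge-degree}. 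The only delicate point in the whole argument is the degenerate $0/0$ case, which I resolve by the convention above. I expect this to be the main (and quite mild) obstacle, since the rest is bookkeeping.

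\textbf{Step 3 (global bound).} Taking the maximum over edges of \eqref{eq:c4-edge-degree} gives the first inequality in \eqref{eq:c4-global-degree}. For the second, write $M_{uv}=\max\{\varrho_u,\varrho_v\}$. Then $\varrho_u+\varrho_v-2\le 2M_{uv}-2$, so the edge ratio is at most $2-2/M_{uv}$. This quantity is nondecreasing in $M_{uv}$, and $M_{uv}\le\max_{v\in V}\varrho_v$, which gives the claimed bound $2-2/\max_{v\in V}\varrho_v$.
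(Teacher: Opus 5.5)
Your proposal is correct and follows the paper's proof essentially step for step: you bound $\Xi_{uv}$ by $\varrho_u+\varrho_v-2$, split on $\Xi_{uv}=0$ versus $\Xi_{uv}>0$ to get $\sho_{\max}(u,v)\ge\max\{\varrho_u,\varrho_v\}$, and use $a+b-2\le 2\max\{a,b\}-2$ for the global bound. The only difference is that the paper's numerator bound keeps the triangle term, $|\xi_u(u,v)|\le\varrho_u-1-\triangle(u,v)$, because it is reused for the structural corollary \eqref{eq:Xi-max-structural-sharp}; your coarser bound $\varrho_u-1$ is enough for this lemma.
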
 

\begin{corollary}[Structural Bound for $\Xi_{uv}$]
By \eqref{eq:c4-edge-degree} we have
\begin{equation}
\label{eq:Xi-max-structural-sharp}
\Xi_{uv} \;\leq\; \varrho_u + \varrho_v - 2 - 2\,\triangle(u,v).
\end{equation}
\end{corollary}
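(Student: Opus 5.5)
The plan is to prove \eqref{eq:Xi-max-structural-sharp} by direct counting rather than by rearranging \eqref{eq:c4-edge-degree}. Multiplying \eqref{eq:c4-edge-degree} through by $\sho_{\max}(u,v)=\varpi_{\max}(u,v)\,\varrho_{\max}$ would give
\[
\Xi_{uv}\le \varpi_{\max}(u,v)\,(\varrho_u+\varrho_v-2),
\]
which still carries the factor $\varpi_{\max}(u,v)$ and does not show the $-2\triangle(u,v)$ term. It also degenerates when there are no cross edges, since then $\sho_{\max}(u,v)=0$. The cleaner route is to reuse the counting that presumably underlies the proof of Lemma~\ref{lem:C4-degree-only}, in its sharpened form, and bound $\Xi_{uv}$ by the sizes of the unique-neighbour sets.

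First, $\Xi_{uv}=|\xi_u(u,v)|+|\xi_v(u,v)|$ with $\xi_u(u,v)\subseteq\mathcal U_u$ and $\xi_v(u,v)\subseteq\mathcal U_v$, by the definition in Table~\ref{tab:definitions}. Hence
\[
\Xi_{uv}\le|\mathcal U_u|+|\mathcal U_v|.
\]
Second, I will compute $|\mathcal U_u|$ exactly. We have $\mathcal U_u=\mathcal N(u)\setminus\mathcal B_1(v)$, and inside $\mathcal N(u)$ the closed ball $\mathcal B_1(v)$ picks out exactly $v$ itself (because $(u,v)\in E$) together with the common neighbours $\mathcal C=\mathcal N(u)\cap\mathcal N(v)$. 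The vertex $u$ is not in $\mathcal N(u)$ since $G$ has no loops, and $v\notin\mathcal C$ since $v\notin\mathcal N(v)$. So $\{v\}$ and $\mathcal C$ are disjoint subsets of $\mathcal N(u)$, and
\[
|\mathcal U_u|=\varrho_u-1-\triangle(u,v).
\]
The symmetric argument gives $|\mathcal U_v|=\varrho_v-1-\triangle(u,v)$. Adding the two identities gives exactly $\varrho_u+\varrho_v-2-2\triangle(u,v)$, which is the claim.

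The only delicate step is the set bookkeeping for $\mathcal N(u)\cap\mathcal B_1(v)=\{v\}\sqcup\mathcal C$. This needs simplicity of $G$ (no loops, no multi-edges) and that $\mathcal B_1(v)$ is the closed ball, so that it contains $v$. I will also note that the bound holds trivially when there are no cross edges ($\Xi_{uv}=0$), which is precisely the case where routing through the ratio in \eqref{eq:c4-edge-degree} would break down. Finally, since $\triangle(u,v)\ge0$, the counting bound also recovers the numerator appearing in \eqref{eq:c4-edge-degree}, which shows that the corollary is the sharper, triangle-aware form of that estimate.
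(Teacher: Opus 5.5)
Your proof is correct and is essentially the paper's argument: the corollary is really the intermediate inequality \eqref{eq:Xi-degree-triangle-bound} in the proof of Lemma~\ref{lem:C4-degree-only}, which the paper obtains exactly as you do, from $\xi_u(u,v)\subseteq\mathcal N(u)\setminus(\mathcal N(v)\cup\{v\})$ and its symmetric counterpart. You are also right that the ratio bound \eqref{eq:c4-edge-degree} alone does not produce the $-2\triangle(u,v)$ term, so the counting step, not the final ratio inequality, is the accurate justification.
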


\begin{lemma}[Box-Count Bound]
\label{lem:box_count}
For every edge $(i,j)$, the following inequality holds:
\begin{equation}
\label{eq:sho-star}
\sho_{\max}(i,j)\ \le\ \varrho_{\max}\,(\varrho_{\max}-1)=:\sho^\star_{\max}.
\end{equation}
\end{lemma}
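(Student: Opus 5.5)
Since $\sho_{\max}(i,j)=\varpi_{\max}(i,j)\,\varrho_{\max}$ by the definition in Table~\ref{tab:definitions}, \eqref{eq:sho-star} is equivalent to
\[
\varpi_{\max}(i,j)\le \varrho_{\max}-1 .
\]
We will prove this by bounding each 4-cycle closer count $\widetilde\Box(k,i,j)$ and $\widetilde\Box(w,j,i)$ separately, and then taking the maximum.

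Fix $k\in\mathcal U_i$. By definition, $\widetilde\Box(k,i,j)=|\mathcal N(k)\cap\mathcal U_j|\le|\mathcal U_j|$. Now $\mathcal U_j=\mathcal N(j)\setminus\mathcal B_1(i)$, and $i\in\mathcal N(j)$ because $(i,j)\in E$. Since $i\in\mathcal B_1(i)$, the vertex $i$ is removed, so
\[
|\mathcal U_j|\le|\mathcal N(j)\setminus\{i\}|=\varrho_j-1\le\varrho_{\max}-1.
\]
The same argument with $i$ and $j$ exchanged gives $\widetilde\Box(w,j,i)\le|\mathcal U_i|\le\varrho_i-1\le\varrho_{\max}-1$ for every $w\in\mathcal U_j$.

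Taking the maximum over $k\in\mathcal U_i$ and $w\in\mathcal U_j$ yields $\varpi_{\max}(i,j)\le\varrho_{\max}-1$. Multiplying by $\varrho_{\max}>0$ then gives \eqref{eq:sho-star}.

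The only point needing care is a degenerate case. If $\mathcal U_i$ or $\mathcal U_j$ is empty, the corresponding maximum is over an empty set. We adopt the convention that such a maximum equals $0$, so $\varpi_{\max}(i,j)\ge 0$. The bound $0\le\varrho_{\max}-1$ still holds, since $\varrho_{\max}\ge1$ for any edge.

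Two easy sharpenings are available: $|\mathcal U_j|\le\varrho_j-1-\triangle(i,j)$, and $k\in\mathcal N(k)$ is impossible in a simple graph. Neither is needed for the stated bound, and the plan itself presents no real obstacle.
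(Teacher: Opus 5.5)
Your proof is correct and is the paper's argument: you bound each closer count $\widetilde\Box(k,i,j)$ (resp.\ $\widetilde\Box(w,j,i)$) by the size of the opposite neighborhood with the other endpoint removed, i.e.\ by $\varrho_j-1$ (resp.\ $\varrho_i-1$), take the maximum to get $\varpi_{\max}(i,j)\le\varrho_{\max}-1$, and multiply by $\varrho_{\max}$. Your explicit convention for an empty $\mathcal U_i$ or $\mathcal U_j$ is a small detail that the paper leaves implicit.
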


\section{Edgewise Bounds for \texorpdfstring{$\mathfrak c_{\rm OR}$}{cOR}}
\label{sec:edge-bound-or}

We now proceed in our analysis in two steps:
\begin{itemize}
\item[(i)] Proposition~\ref{prop:lazy-envelope-sharpened} fixes an optimal $\pi^\star$ that saturates the diagonal on $\{i,j\}\cup\mathcal C$, providing the basis for a \emph{coverage envelope} (which allows a lower bound $\mathfrak c_{\rm OR}(i,j)\ge \vartheta$ to force a necessary triangle count via $t_{\min}(\vartheta)$).

\item[(ii)] Theorem~\ref{thm:JL-plus-squares} strengthens the \citeauthor{jost_clustering_2014} bound by incorporating a matching statistic on the cross-edge bipartite graph $B_{ij}=(\mathcal U_i,\mathcal U_j;E(\mathcal U_i,\mathcal U_j))$, and Corollary~\ref{cor:JL-plus-squares-lazy} transfers this estimate to the lazy setting via the inequality of Proposition~\ref{prop:sharper-transfer}.
\end{itemize}
These results are thus the basic tools used in the next section to construct explicit \emph{upper} and \emph{lower} transfer moduli between $\mathfrak c_{\rm BF}$ and $\mathfrak c_{\rm OR}$. 

Recall that $\alpha_i\in[0,1]$ is the ``stay-put'' probability at $i$, 
$\varrho_i=\deg(i)$, and $\triangle(i,j)=|\mathcal C|$ counts common neighbors of $i$ and $j$, and that
\begin{itemize}
\item[{\rm (a)}] $w^{(\alpha)}_i,w^{(\alpha)}_j$, $w^{(\alpha)}_\wedge$ control how much mass can be shipped out of the unique neighborhoods $\mathcal U_i,\mathcal U_j$ and through the triangle channel $\mathcal C$.

\item[{\rm(b)}] $\displaystyle\Sigma^{(\alpha)}_{i,j}=\frac{\varrho_i}{1-\alpha_i}+\frac{\varrho_j}{1-\alpha_j}$ sums the inverse  of $w^{(\alpha)}_i,w^{(\alpha)}_j$ and is used in \eqref{eq:mUU-envelope-sharp}.

\item[{\rm(c)}] $z_i,z_j$ quantify how much diagonal mass can be absorbed at $i$ or $j$ when we saturate
the diagonal on $\{i,j\}\cup\mathcal C$, limiting how much must be transported out of the endpoints.
\item[{\rm(d)}] $r_i,\bar r_i,r_j,\bar r_j$ \emph{(residuals and co-residuals)} measure the imbalance between endpoint laziness and the
opposite side’s per-neighbor availability 
These terms contribute to the constant part of the envelope bound after all supply-capacity constraints are imposed.
\end{itemize}

\begin{proposition}[Lazy to Non-Lazy $\mathfrak c_{\rm OR}$ Transfer Inequality]
\label{prop:sharper-transfer}
Fix $(i,j)\in E$, and define the non-lazy OR curvature
\(
\mathfrak c_{\mathrm{OR}-0}(i,j) \;:=\; 1 - W_1(\nu_i,\nu_j).
\)
Let $\alpha_{\min}:=\min\{\alpha_i,\alpha_j\}$, $\alpha_{\max}:=\max\{\alpha_i,\alpha_j\}$, $\Delta_{ij}(\alpha):=\alpha_{\max}-\alpha_{\min}$, then for every $\beta\in[0,1]$,
\begin{equation}
\label{eq:master-beta}
\mathfrak c_{\mathrm{OR}}(i,j)
\;\ge\;
(1-\beta)\,\mathfrak c_{\mathrm{OR}-0}(i,j)\;-\;\bigl(|\alpha_i-\beta|+|\alpha_j-\beta|\bigr).
\end{equation}
In particular, with the piecewise choice
\(\alpha_\star =\alpha_{\min}\mathds 1_{\{\mathfrak c_{\mathrm{OR}\!-\!0}(i,j)\ge 0\}}+\alpha_{\max}\mathds 1_{\{\mathfrak c_{\mathrm{OR}\!-\!0}(i,j)\le 0\}}
\)
one has the bound
\begin{equation}
\label{eq:sharper-piecewise}
\mathfrak c_{\mathrm{OR}}(i,j)
\;\ge\;
(1-\alpha_\star)\,\mathfrak c_{\mathrm{OR}-0}(i,j)\;-\;\Delta_{ij}(\alpha).
\end{equation}
\end{proposition}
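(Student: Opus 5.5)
Our plan is to prove \eqref{eq:master-beta} by comparing $m_i,m_j$ with the common-laziness measures $m_u^{\beta}:=\beta\delta_u+(1-\beta)\nu_u$ for $u\in\{i,j\}$. We then specialise $\beta$ to obtain \eqref{eq:sharper-piecewise}. The triangle inequality for $W_1$ gives
\[
W_1(m_i,m_j)\le W_1(m_i,m_i^{\beta})+W_1(m_i^{\beta},m_j^{\beta})+W_1(m_j^{\beta},m_j),
\]
so we need to estimate the middle term and the two outer terms separately.

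\emph{Middle term.} Fix an optimal coupling $\pi_0$ of $(\nu_i,\nu_j)$, so its cost is $W_1(\nu_i,\nu_j)=1-\mathfrak c_{\mathrm{OR}-0}(i,j)$. Take the coupling
\[
\pi:=\beta\,\delta_{(i,j)}+(1-\beta)\pi_0
\]
of $(m_i^\beta,m_j^\beta)$. Since $\mathrm{dist}_G(i,j)=1$, its cost is
\[
\beta+(1-\beta)\bigl(1-\mathfrak c_{\mathrm{OR}-0}(i,j)\bigr)=1-(1-\beta)\,\mathfrak c_{\mathrm{OR}-0}(i,j).
\]
\emph{Outer terms.} We have $m_u-m_u^\beta=(\alpha_u-\beta)(\delta_u-\nu_u)$. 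Every point of $\mathrm{supp}\,\nu_u$ is at distance $1$ from $u$, so moving mass $|\alpha_u-\beta|$ between $u$ and its neighbours, proportionally to $\nu_u$, costs exactly $|\alpha_u-\beta|$. Hence $W_1(m_u,m_u^\beta)\le|\alpha_u-\beta|$. (Equivalently, one can use Kantorovich--Rubinstein duality with a $1$-Lipschitz $f$, noting that $|f(u)-\nu_u(f)|\le1$.)

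Summing the three estimates gives $W_1(m_i,m_j)\le 1-(1-\beta)\mathfrak c_{\mathrm{OR}-0}+|\alpha_i-\beta|+|\alpha_j-\beta|$, which is \eqref{eq:master-beta}.

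\emph{Specialisation.} For any $\beta\in[\alpha_{\min},\alpha_{\max}]$ the penalty satisfies $|\alpha_i-\beta|+|\alpha_j-\beta|=\Delta_{ij}(\alpha)$. The right-hand side of \eqref{eq:master-beta} is therefore $(1-\beta)\mathfrak c_{\mathrm{OR}-0}-\Delta_{ij}(\alpha)$, which is affine in $\beta$ with slope $-\mathfrak c_{\mathrm{OR}-0}$.
\begin{itemize}
\item If $\mathfrak c_{\mathrm{OR}-0}\ge0$, the expression is maximised on this interval at $\beta=\alpha_{\min}$.
\item If $\mathfrak c_{\mathrm{OR}-0}\le0$, it is maximised at $\beta=\alpha_{\max}$.
\end{itemize}
This is the choice $\alpha_\star$, and it yields \eqref{eq:sharper-piecewise}. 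When $\mathfrak c_{\mathrm{OR}-0}=0$ the two indicators in the definition of $\alpha_\star$ overlap, but the bound is then simply $-\Delta_{ij}(\alpha)$ for either choice, so the ambiguity is harmless.

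\emph{Main obstacle.} There is no serious obstacle. The only points needing care are the exact cost computation for the outer couplings, which relies on $\nu_u$ being supported on $\mathcal N(u)$, and confirming that the endpoint-to-endpoint cost in the middle coupling is exactly $\beta\cdot 1$.
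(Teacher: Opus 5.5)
Your proposal is correct and follows essentially the same route as the paper: rebalance both measures to a common laziness $\beta$, bound the middle term by the mixture coupling (the paper packages this as a convexity lemma), apply the triangle inequality, and then specialise to $\beta=\alpha_\star\in[\alpha_{\min},\alpha_{\max}]$. The paper goes further in two places, proving the equality $W_1(m_u,m_u^\beta)=|\alpha_u-\beta|$ via duality and checking that no $\beta$ outside $[\alpha_{\min},\alpha_{\max}]$ does better; neither extra step is needed for the stated bound.
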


\begin{remark}
If $\mathfrak c_{\mathrm{OR}-0}(i,j)=0$, any $\beta\in[\alpha_{\min},\alpha_{\max}]$ yields $\mathfrak c_{\mathrm{OR}}(i,j)\ge-\Delta_{ij}(\alpha)$.
\end{remark}

In practice, the sign of $\mathfrak c_{\rm OR-0}(i,j)$ is not known \emph{a priori}, and thus for applications where $\mathfrak c_{\mathrm{OR}-0}$ is unavailable, Corollary~\ref{cor:symm-bound}~{\rm(ii)} is used instead. 

\begin{corollary}[Symmetric and General Bounds]
\label{cor:symm-bound} 
{\rm (i)} If $\alpha_i=\alpha_j=\alpha$, then $\Delta_{ij}(\alpha)=0$ and
\(\mathfrak c_{\mathrm{OR}}(i,j)\ge (1-\alpha)\,\mathfrak c_{\mathrm{OR}\!-\!0}(i,j).
\)
{\rm (ii)} If one wants a sign-agnostic bound, taking $\beta=\alpha_{\min}$ in \eqref{eq:master-beta} yields
\(
\mathfrak c_{\mathrm{OR}}(i,j)\ge (1-\alpha_{\min})\,\mathfrak c_{\mathrm{OR}\!-\!0}(i,j)-\Delta_{ij}(\alpha).
\)
\end{corollary}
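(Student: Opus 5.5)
Both parts of Corollary~\ref{cor:symm-bound} follow by specializing the free parameter $\beta$ in the master inequality \eqref{eq:master-beta} of Proposition~\ref{prop:sharper-transfer}. That inequality holds for every $\beta\in[0,1]$ with no sign condition on $\mathfrak c_{\mathrm{OR}-0}(i,j)$. So in each case we only need to choose $\beta$ and evaluate the penalty term $|\alpha_i-\beta|+|\alpha_j-\beta|$.

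For {\rm (i)}, suppose $\alpha_i=\alpha_j=\alpha$. Then $\alpha_{\min}=\alpha_{\max}=\alpha$, so $\Delta_{ij}(\alpha)=0$ directly from the definition. We take $\beta=\alpha$, which lies in $[0,1]$ because $\alpha=(\varrho+1)^{-1}$. Both absolute values in \eqref{eq:master-beta} then vanish, and the inequality reads $\mathfrak c_{\mathrm{OR}}(i,j)\ge(1-\alpha)\,\mathfrak c_{\mathrm{OR}-0}(i,j)$. Equivalently, we can note that in this case $\alpha_\star=\alpha$ in \eqref{eq:sharper-piecewise} whichever sign $\mathfrak c_{\mathrm{OR}-0}$ has.

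For {\rm (ii)}, we take $\beta=\alpha_{\min}\in[0,1]$. One of $\alpha_i,\alpha_j$ equals $\alpha_{\min}$, so its term in the penalty is zero. The other equals $\alpha_{\max}$, and its term is $\alpha_{\max}-\alpha_{\min}=\Delta_{ij}(\alpha)$. Substituting into \eqref{eq:master-beta} gives $\mathfrak c_{\mathrm{OR}}(i,j)\ge(1-\alpha_{\min})\,\mathfrak c_{\mathrm{OR}-0}(i,j)-\Delta_{ij}(\alpha)$. This holds regardless of the sign of $\mathfrak c_{\mathrm{OR}-0}(i,j)$, which is exactly the sign-agnostic claim.

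There is no real obstacle. The only care needed is to check that the chosen $\beta$ lies in $[0,1]$, which holds because every laziness $\alpha_u=(\varrho_u+1)^{-1}$ lies in $(0,1/2]$. We should also stress that we use \eqref{eq:master-beta} rather than the piecewise form \eqref{eq:sharper-piecewise}, since the latter presupposes knowledge of the sign of $\mathfrak c_{\mathrm{OR}-0}(i,j)$.
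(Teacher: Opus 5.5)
Your proposal is correct and matches the paper's argument. The corollary comes from plugging $\beta=\alpha$ (resp.\ $\beta=\alpha_{\min}$) into \eqref{eq:master-beta}, which holds for every $\beta\in[0,1]$ regardless of the sign of $\mathfrak c_{\mathrm{OR}-0}(i,j)$, so the penalty $|\alpha_i-\beta|+|\alpha_j-\beta|$ reduces to $0$ (resp.\ $\Delta_{ij}(\alpha)$), and your check that $\beta\in[0,1]$ via $\alpha_u=(\varrho_u+1)^{-1}\in(0,1/2]$ is the only side condition needed.
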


\begin{proposition}[Lazy Transport Envelope for $\mathfrak c_{\mathrm{OR}}$]
\label{prop:lazy-envelope-sharpened}
Let $(i,j)\in E$ and let $\pi^\star$ be an optimal coupling between $m_i$ and $m_j$ that saturates the diagonal on $\{i,j\}\cup\mathcal C$. Define
\(
m^{(d)}:=\pi^\star\{(x,y):\mathrm{dist}_G(x,y)=d\}
\),
with $d\in\{0,1,\ge 2\}$, and decompose $m^{(1)}$ as\newline
\(
m^{(1)}=m^{(1)-\mathrm{end}} + m^{(1)-\mathrm{UU}} + m^{(1)-\triangle} + m^{(1)-\mathrm{CC}},
\)
where 
\begin{itemize}
\item[{\rm (i)}] $m^{(1)-\mathrm{end}}$ is the $\pi^\star$-mass transported along edges incident to at least one endpoint,
\item[{\rm (ii)}] $m^{(1)-\mathrm{UU}}$ is the $\pi^\star$-mass transported along cross-edges between $\mathcal U_i$ and $\mathcal U_j$,
\item[{\rm (iii)}] $m^{(1)-\triangle}$ is the $\pi^\star$-mass transported along edges between a unique and a common neighbor,
\item[{\rm (iv)}] $m^{(1)-\mathrm{CC}}$ is the $\pi^\star$-mass transported along edges internal to $\mathcal C$. 
\end{itemize}
Then the Ollivier curvature on $(i,j)$ satisfies
\begin{equation}
\label{eq:OR-master-corrected-sharp}
\mathfrak c_{\mathrm{OR}}(i,j)
\le
-1+2(z_i+z_j)
+(r_i+\bar r_i+r_j+\bar r_j)
+2\triangle(i,j)w^{(\alpha)}_\wedge
+m^{(\alpha)}_{\mathrm{UU}}
+m^{(\alpha)}_{\triangle},
\end{equation}
where the slack variables $m^{(\alpha)}_{\mathrm{UU}},m^{(\alpha)}_{\triangle}\ge 0$ may be chosen so that the following \emph{two-sided} bounds hold simultaneously:
\begin{align}
\label{eq:mUU-envelope-sharp}
m^{(1)-\mathrm{UU}}
\ &\le\ 
m^{(\alpha)}_{\mathrm{UU}}
\ \le\
\min\!\left\{
\bigl(\varrho_i-1-\triangle(i,j)\bigr)\,w^{(\alpha)}_i,
\bigl(\varrho_j-1-\triangle(i,j)\bigr)\,w^{(\alpha)}_j,
\frac{\Xi_{ij}}{\ \Sigma^{(\alpha)}_{i,j}}
\right\},
\\[4pt]
\label{eq:mTriangle-envelope-sharp}
m^{(1)-\triangle}
\ &\le\ 
m^{(\alpha)}_{\triangle}
\ \le\
\min\!\left\{
\triangle(i,j)\,\bigl|w^{(\alpha)}_i-w^{(\alpha)}_j\bigr|\ ,\ 
|\mathcal U_i|\,w^{(\alpha)}_i+|\mathcal U_j|\,w^{(\alpha)}_j
\right\}.
\end{align}
\end{proposition}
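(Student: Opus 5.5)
We will prove \eqref{eq:OR-master-corrected-sharp} by writing the transport cost of $\pi^\star$ in terms of its distance profile and then bounding the masses at distance $\le 1$. Since distances between supports of $m_i$ and $m_j$ are at most $3$, and $\pi^\star$ is optimal,
\[
W_1(m_i,m_j)=\sum_d d\,m^{(d)}\ \ge\ m^{(1)}+2\bigl(1-m^{(0)}-m^{(1)}\bigr)=2-2m^{(0)}-m^{(1)} .
\]
Hence $\mathfrak c_{\rm OR}(i,j)\le -1+2m^{(0)}+m^{(1)}$. The remaining work is to bound $m^{(0)}$ and each of the four pieces of $m^{(1)}$.

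\textbf{Diagonal mass.} Saturating the diagonal (Remark~\ref{rem:saturate-diagonal}) on $\{i,j\}\cup\mathcal C$ gives $m^{(0)}=\sum_x\min\{m_i(x),m_j(x)\}$. We have $\min\{m_i(i),m_j(i)\}=\min\{\alpha_i,w^{(\alpha)}_j\}=z_i$, and likewise $z_j$ at $j$. Each common neighbor contributes $w^{(\alpha)}_\wedge$. So $2m^{(0)}=2(z_i+z_j)+2\triangle(i,j)w^{(\alpha)}_\wedge$, which produces the $2(z_i+z_j)$ and triangle terms.

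\textbf{Endpoint mass.} After saturation, the residual mass at $i$ is exactly $r_i$ (the excess of $m_i$ over $m_j$), and the deficit there is $\bar r_i$; similarly at $j$. Edges incident to an endpoint can only carry residual mass leaving $i$ or $j$, or mass entering the deficit at $i$ or $j$. This gives $m^{(1)-\mathrm{end}}\le r_i+\bar r_i+r_j+\bar r_j$.

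\textbf{Internal common-neighbor mass.} After saturation, the residual at each $c\in\mathcal C$ lies only on one side. If $w^{(\alpha)}_i\ge w^{(\alpha)}_j$, every $c$ is a source and none is a sink, so $m^{(1)-\mathrm{CC}}=0$.

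\textbf{Slack variables.} Set $m^{(\alpha)}_{\rm UU}:=m^{(1)-\mathrm{UU}}$ and $m^{(\alpha)}_\triangle:=m^{(1)-\triangle}$, so the lower halves of \eqref{eq:mUU-envelope-sharp}--\eqref{eq:mTriangle-envelope-sharp} hold trivially. For the upper halves we use supply and capacity arguments.
\begin{itemize}
\item Mass on $\mathcal U_i$–$\mathcal U_j$ cross-edges leaves $\mathcal U_i$, of total $m_i$-mass $|\mathcal U_i|w^{(\alpha)}_i=(\varrho_i-1-\triangle)w^{(\alpha)}_i$. Symmetrically it arrives in $\mathcal U_j$, which bounds it by $(\varrho_j-1-\triangle)w^{(\alpha)}_j$.
\item For the $\Xi_{ij}/\Sigma^{(\alpha)}_{i,j}$ term, only vertices in $\xi_i\cup\xi_j$ participate. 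The cross-edge mass is at most $\min\{|\xi_i|w^{(\alpha)}_i,\,|\xi_j|w^{(\alpha)}_j\}$. For $a,b\ge0$ with $1/a,1/b$ given, $\min\{x a, y b\}\le (x+y)/(1/a+1/b)$, since the minimum is bounded by the weighted average. This gives $m\le\Xi_{ij}/(\varrho_i/(1-\alpha_i)+\varrho_j/(1-\alpha_j))$.
\item For the triangle channel, the residual at each common neighbor is $|w^{(\alpha)}_i-w^{(\alpha)}_j|$, so the total is $\triangle|w^{(\alpha)}_i-w^{(\alpha)}_j|$. Mass on unique–common edges must use this residual. The total unique mass $|\mathcal U_i|w^{(\alpha)}_i+|\mathcal U_j|w^{(\alpha)}_j$ gives the second cap.
\end{itemize}

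\textbf{Main obstacle.} The delicate step is the first one: justifying the cost accounting $-1+2m^{(0)}+m^{(1)}$ together with the endpoint bound simultaneously. Residual flow from $i$ may route through common neighbors, and we must avoid double counting between $m^{(1)-\mathrm{end}}$ and the $z$ terms. I would handle this by an explicit case split on the sign of $\alpha_i-w^{(\alpha)}_j$, checking in each case that only one of $r_i,\bar r_i$ is nonzero.
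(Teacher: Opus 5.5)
Your proposal is correct and follows the paper's proof essentially step for step. The shared steps are the bound $\mathfrak c_{\rm OR}(i,j)\le -1+2m^{(0)}+m^{(1)}$, the exact diagonal mass $z_i+z_j+\triangle(i,j)\,w^{(\alpha)}_\wedge$, $m^{(1)-\mathrm{CC}}=0$ from one-sided residuals on $\mathcal C$, the endpoint budget, and the supply/coverage caps on the slacks; your weighted-average bound is the same computation as the paper's balancing over $x=|\xi_i(i,j)|$.

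The ``main obstacle'' you flag is not actually one. Diagonal mass costs $0$ and never enters $m^{(1)-\mathrm{end}}$. After saturation, the off-diagonal mass in row and column $i$ is exactly $r_i+\bar r_i$, and in row and column $j$ it is $r_j+\bar r_j$. So the union bound over these rows and columns already gives the endpoint estimate, including the overlap on $(i,j)$, without any case split.
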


\begin{theorem}[Quadrangle-Augmented \citeauthor{jost_clustering_2014} Lower Bound]
\label{thm:JL-plus-squares}
Fix $(i,j)\!\in\! E$ and form the bipartite \emph{cross-edge graph} $
B_{ij}:=\bigl(\mathcal U_i, \mathcal U_j; E(\mathcal U_i,\mathcal U_j)\bigr),$ where $E(\mathcal U_i,\mathcal U_j):=\{(u,w)\in E:\ u\in\mathcal U_i,\ w\in\mathcal U_j\}.
$
Let $\mathfrak m(i,j)$ be the cardinality of a maximum matching in $B_{ij}$ and set
\(
\mathscr S(i,j):={\mathfrak m(i,j)}/{\max\{\varrho_i,\varrho_j\}}.
\)
Then for the non-lazy, neighbor-uniform measures $\nu_i,\nu_j$,
\begin{equation}
\label{eq:JL-aug}
\mathfrak c_{\mathrm{OR}-0}(i,j)\! \ge\!
-\Bigl[\mathfrak K(i,j)-\mathfrak Z_{\max}^{(i,j)}-\mathscr S(i,j)\Bigr]_+\!\!\!
-\Bigl[\mathfrak K(i,j)-\mathfrak Z_{\min}^{(i,j)}-\mathscr S(i,j)\Bigr]_+
\!\!\!+\mathfrak Z_{\max}^{(i,j)}.
\end{equation}
In the extremal case $\mathfrak m(i,j)=0$, this reduces to the \citet[Theorem 1, Equation (1.2)]{jost_clustering_2014} bound and, moreover, \eqref{eq:JL-aug} is monotone nondecreasing in $\mathscr S(i,j)$.
\end{theorem}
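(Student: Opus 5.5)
We will follow the structure of the Jost--Liu argument: build one explicit transport plan from $\nu_i$ to $\nu_j$, bound its cost, and use $W_1\le$ cost to get $\mathfrak c_{\mathrm{OR}-0}=1-W_1\ge 1-\mathrm{cost}$. Assume without loss of generality $\varrho_i\le\varrho_j$, so $\varrho_{\min}=\varrho_i$, $\varrho_{\max}=\varrho_j$, $\mathfrak Z_{\min}=\triangle/\varrho_i$ and $\mathfrak Z_{\max}=\triangle/\varrho_j$. Recall that $\nu_i$ puts mass $1/\varrho_i$ on $\{j\}\cup\mathcal C\cup\mathcal U_i$ and $\nu_j$ puts mass $1/\varrho_j$ on $\{i\}\cup\mathcal C\cup\mathcal U_j$.

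\textbf{The plan.} It ships the mass in the following order.
\begin{enumerate}
\item[(1)] On each $c\in\mathcal C$, keep $\min\{1/\varrho_i,1/\varrho_j\}=1/\varrho_j$ at zero cost; this saturates the diagonal as in Remark~\ref{rem:saturate-diagonal}. The excess $\triangle(1/\varrho_i-1/\varrho_j)$ left on $\mathcal C$ is moved at distance at most $2$.
\item[(2)] Fix a maximum matching $M$ of $B_{ij}$ with $|M|=\mathfrak m(i,j)$. For each $(u,w)\in M$, send $1/\varrho_j$ from $u\in\mathcal U_i$ to $w\in\mathcal U_j$ at cost $1$. The matched $u$ then retains $1/\varrho_i-1/\varrho_j$.
\item[(3)] The mass at $j$ under $\nu_i$ and the demand at $i$ under $\nu_j$ are handled by the Jost--Liu device: part is moved along the edge $ij$ at cost $1$, and the remainder is routed through the endpoints at cost at most $1$ or $2$.
\item[(4)] All remaining unmatched mass in $\mathcal U_i$ (and the leftover fractions) is moved to the remaining demand in $\mathcal U_j$ at distance at most $3$ ($u\to i\to j\to w$). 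Mass sitting at $u\in\mathcal U_i$ can instead go to $j$ at cost $2$.
\end{enumerate}

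\textbf{Bookkeeping.} This is where I expect the main obstacle. Jost--Liu's bound comes from an accounting that spends distance $3$ on $[1-1/\varrho_j-1/\varrho_i-\triangle/\varrho_j]_+$ and distance $2$ on $[1-1/\varrho_j-1/\varrho_i-\triangle/\varrho_i]_+$. The distance-$3$ portion is the excess of the unique-neighbor mass of $j$ beyond what the triangle channel and the endpoint moves can absorb, measured at weight $1/\varrho_{\max}$. The distance-$2$ portion is the analogous excess on the $i$ side, measured at weight $1/\varrho_{\min}$.

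Each matched pair converts $1/\varrho_j$ of distance-$3$ mass into distance-$1$ mass, saving $2/\varrho_j$. I will redistribute this saving so that it lowers each positive part by exactly $\mathfrak m/\varrho_{\max}=\mathscr S$. One unit of $\mathscr S$ comes off the distance-$3$ bracket, because the matched mass is exactly mass that would otherwise travel $3$. The second unit is reassigned by re-routing: the saving converts a distance-$2$ move into a distance-$1$ move, via the Jost--Liu identity that splits the cost-$3$ mass into one cost-$2$ and one cost-$1$ contribution.

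The delicate point is that a matching edge can only reduce a bracket while that bracket is still positive. The positive parts have to be justified by a case analysis: in each case either the unique mass is exhausted, or the matching saving is fully usable. I would first try to check whether $\mathfrak m\le\min\{|\mathcal U_i|,|\mathcal U_j|\}$ ensures the saving never exceeds the available far mass. If it does not, I would clip the matching at the bracket size, which is exactly what the $[\cdot]_+$ enforces.

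\textbf{Assembling the bound.} Collecting costs gives
\[
W_1\le 1-\mathfrak Z_{\max}+\bigl[\mathfrak K-\mathfrak Z_{\max}-\mathscr S\bigr]_++\bigl[\mathfrak K-\mathfrak Z_{\min}-\mathscr S\bigr]_+ ,
\]
which is \eqref{eq:JL-aug} after applying $\mathfrak c_{\mathrm{OR}-0}=1-W_1$. For $\mathfrak m=0$ we have $\mathscr S=0$, and the expression is literally Jost--Liu's (1.2) in the notation $\mathfrak K,\mathfrak Z_\bullet$. Monotonicity in $\mathscr S$ is immediate, since each $-[\,a-\mathscr S\,]_+$ is nondecreasing in $\mathscr S$.
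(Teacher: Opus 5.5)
\textbf{Gap: the bookkeeping step is not carried out, and the mechanism you sketch fails.} Your architecture is the paper's: saturate $\mathcal C$ at $1/\varrho_{\max}$, route $1/\varrho_{\max}$ along a maximum matching of $B_{ij}$, then finish with Jost--Liu routing. But the accounting that actually yields \eqref{eq:JL-aug} is missing, and the savings mechanism you propose breaks down in a regime the theorem covers.

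\emph{Your reading of Jost--Liu is swapped.} With $\varrho_i\le\varrho_j$, their plan first fills the demand $1/\varrho_j$ at $i$ from $\mathcal U_i$. So the distance-$3$ mass is $[\mathfrak K-\mathfrak Z_{\min}]_+$, not the $\mathfrak Z_{\max}$ bracket. What equals the sum of the two brackets is the surplus $m^{(2)}+2m^{(3)}$, not a per-bracket distance. Also, $\nu_j(j)=0$, so sending $u\in\mathcal U_i$ to $j$ is not an available move.

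\emph{The savings argument fails beyond the distance-$3$ budget.} The saving of $2/\varrho_{\max}$ per matched pair applies only to the first $[\mathfrak K-\mathfrak Z_{\min}]_+$ units of matched mass. The check $\mathfrak m\le\min\{|\mathcal U_i|,|\mathcal U_j|\}$ does not guarantee $\mathscr S$ stays below this. Take the graph with edges $ij,ic,iu,jc,jw_1,jw_2,jw_3,uw_1$. Here $\mathfrak m=1=\min\{|\mathcal U_i|,|\mathcal U_j|\}$, yet $\mathscr S=3/15>2/15=\mathfrak K-\mathfrak Z_{\min}$. Clipping the matching at the bracket size then gives only $\mathfrak c_{\mathrm{OR}-0}(i,j)\ge 1/15$. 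But \eqref{eq:JL-aug} asserts $2/15$, and this is true. Keep $3/15$ at $c$ and send:
\begin{itemize}
\item $u\to w_1$: $3/15$; $u\to i$: $2/15$;
\item $j\to w_2$: $3/15$; $j\to w_3$: $2/15$;
\item $c\to i$: $1/15$; $c\to w_3$: $1/15$ at distance $2$.
\end{itemize}
The total cost is $13/15$. Matched mass beyond the distance-$3$ budget still saves one unit each. The $\mathcal U_i$-mass it diverts would otherwise have filled the demand at $i$; that demand is freed for the triangle residual at cost $1$ instead of $2$. Clipping discards exactly this saving.

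\textbf{How the paper closes it, and how to fix your version.} The paper does not account savings against the Jost--Liu plan. After routing the \emph{full} matching, it recomputes the unique budgets $U_i'=|\mathcal U_i|/\varrho_i-\mathscr S$ and $U_j'=|\mathcal U_j|/\varrho_j-\mathscr S$, and reruns the endpoint routing on them. The brackets are then literally the residues $X:=U_i'-1/\varrho_j=\mathfrak K-\mathfrak Z_{\min}-\mathscr S$ and $Y:=U_j'-1/\varrho_i=\mathfrak K-\mathfrak Z_{\max}-\mathscr S$. Write $R:=\triangle(i,j)(1/\varrho_i-1/\varrho_j)$ for the triangle residual on $\mathcal C$. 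Mass balance gives $Y=X+R$. Now split into cases.
\begin{itemize}
\item If $Y\ge 0$, send all of $j$'s mass $1/\varrho_i$ into $\mathcal U_j$. If also $X\ge0$, fill $i$ from $\mathcal U_i$, ship $R$ from $\mathcal C$ to $\mathcal U_j$ at distance $2$, and ship the last $X$ from $\mathcal U_i$ to $\mathcal U_j$ at distance at most $3$. The surplus is $R+2X=X+Y$.
\item If $Y\ge 0$ but $X<0$, empty $\mathcal U_i$ into $i$, top $i$ up with $-X$ of the triangle residual, and ship the remaining $R+X=Y$ at distance $2$. The surplus is $Y$.
\item If $Y<0$, $j$ fills $\mathcal U_j$ and sends its excess to $i$, and everything moves at distance at most $1$.
\end{itemize}
Adding the baseline $1-\mathfrak Z_{\max}$ for the off-diagonal mass gives \eqref{eq:JL-aug}, with no clipping. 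Keep your distance-$3$ leg. It cannot be eliminated in general: on a tree, $\mathcal U_i$-mass beyond $1/\varrho_j$ has only targets at distance $3$. So the surplus must be tallied as $m^{(2)}+2m^{(3)}$.
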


\begin{corollary}[Lazy Extension via Proposition~\ref{prop:sharper-transfer}]
\label{cor:JL-plus-squares-lazy}
With the notation of Proposition \ref{prop:sharper-transfer}, we have
\begin{multline}
\label{eq:cOR-tri-quad-bound}
\mathfrak c_{\mathrm{OR}}(i,j) \ge
(1-\alpha_\star)\Bigl(
-\bigl[\mathfrak K(i,j)-\mathfrak Z^{(i,j)}_{\max}-\mathscr S(i,j)\bigr]_+\\
-\bigl[\mathfrak K(i,j)-\mathfrak Z^{(i,j)}_{\min}-\mathscr S(i,j)\bigr]_+
+\mathfrak Z_{\max}^{(i,j)}
\Bigr) - \Delta_{ij}(\alpha).
\end{multline}
\end{corollary}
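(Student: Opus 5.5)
This corollary is a direct composition of two earlier results, so the plan is to chain them rather than to build any new coupling. First, Proposition~\ref{prop:sharper-transfer}, specifically \eqref{eq:sharper-piecewise}, gives
\[
\mathfrak c_{\mathrm{OR}}(i,j)\ \ge\ (1-\alpha_\star)\,\mathfrak c_{\mathrm{OR}-0}(i,j)-\Delta_{ij}(\alpha),
\]
with $\alpha_\star$ chosen according to the sign of $\mathfrak c_{\mathrm{OR}-0}(i,j)$. Second, Theorem~\ref{thm:JL-plus-squares} gives the lower bound $\mathfrak c_{\mathrm{OR}-0}(i,j)\ge L(i,j)$, where $L(i,j)$ denotes the right-hand side of \eqref{eq:JL-aug}. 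The corollary then follows by substituting $L(i,j)$ for $\mathfrak c_{\mathrm{OR}-0}(i,j)$ in the first inequality.

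The substitution preserves the inequality because the map $x\mapsto (1-\alpha_\star)x-\Delta_{ij}(\alpha)$ is nondecreasing in $x$. This needs $1-\alpha_\star\ge 0$. That holds since $\alpha_\star\in\{\alpha_{\min},\alpha_{\max}\}\subset[0,1]$. In fact $\alpha_u=(\varrho_u+1)^{-1}\le 1/2$, so $1-\alpha_\star>0$. Hence
\[
(1-\alpha_\star)\,\mathfrak c_{\mathrm{OR}-0}(i,j)-\Delta_{ij}(\alpha)\ \ge\ (1-\alpha_\star)\,L(i,j)-\Delta_{ij}(\alpha),
\]
which is exactly \eqref{eq:cOR-tri-quad-bound}.

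The only subtlety is that $\alpha_\star$ is fixed by the sign of the true, unknown $\mathfrak c_{\mathrm{OR}-0}(i,j)$, not by the sign of $L(i,j)$. I will stress that the argument never needs to recompute $\alpha_\star$ from the lower bound. We first fix $\alpha_\star$ as in Proposition~\ref{prop:sharper-transfer}, which makes \eqref{eq:sharper-piecewise} valid, and only then apply monotonicity with that same fixed coefficient.

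When $\mathfrak c_{\mathrm{OR}-0}(i,j)=0$, both choices of $\alpha_\star$ are admissible by the preceding Remark, so either reading of the indicator works. For a sign-agnostic version one can instead use Corollary~\ref{cor:symm-bound}(ii), taking $\beta=\alpha_{\min}$ in \eqref{eq:master-beta}; the same monotone substitution applies there. There is no substantive obstacle beyond this bookkeeping of $\alpha_\star$.
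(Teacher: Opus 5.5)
Your proposal is correct and is the argument the paper intends; it gives no separate proof, only the roadmap remark that the corollary transfers Theorem~\ref{thm:JL-plus-squares} to the lazy setting via Proposition~\ref{prop:sharper-transfer}. You substitute the lower bound of Theorem~\ref{thm:JL-plus-squares} into \eqref{eq:sharper-piecewise}, using $1-\alpha_\star\ge 0$ (indeed $\alpha_u=(\varrho_u+1)^{-1}\le 1/2$) to preserve the inequality. Your concern about $\alpha_\star$ depending on the unknown sign of $\mathfrak c_{\mathrm{OR}-0}(i,j)$ is harmless, and in fact unnecessary: \eqref{eq:master-beta} holds for every $\beta$, so the same bound holds with $\alpha_\star$ replaced by any $\beta\in[\alpha_{\min},\alpha_{\max}]$, and the sign only determines which choice is best.
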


\begin{remark}
A matching-based non-lazy lower bound closely related to our $\mathscr S(i,j)$ already appears in \citet[Theorem~5.1]{bhattacharya_exact_2020}, stated in terms of a maximum matching in the ``core neighborhood'' subgraph; however \eqref{eq:JL-aug} integrates this matching capacity directly into the \citeauthor{jost_clustering_2014} framework, subtracting the normalized matching term inside the positive parts rather than applying a global linear adjustment. This decomposition isolates the precise contributions of degree imbalance, triangle overlap, and 4-cycle witnesses, enabling the translation to the localized components of $\mathfrak{c}_{\mathrm{BF}}$.
\end{remark}

\begin{remark}[Coarse $\mathfrak C_4$ Improvement]
\label{rem:coarse-C4}
If $\Xi_{ij}>0$ (which implies that $\mathfrak m(i,j)\ge 1$ and $\mathscr S(i,j)\ge \varrho_{\max}^{-1}$), then we have a uniform $\mathcal O(\varrho_{\max}^{-1})$ improvement over the triangle-only bound.
\end{remark}

\begin{proposition}[Monotone Coverage Envelope for $\mathfrak c_{\mathrm{OR}}$]
\label{prop:coverage-envelope-monotone}
Let $(i,j)\in E$, and define the intercept and slope
\begin{equation}
\label{eq:const-slope}
\begin{aligned}
\mathrm{Const}_\alpha
&:=
-1+2(z_i+z_j)
+(r_i+\bar r_i+r_j+\bar r_j)
+\frac{\varrho_i+\varrho_j-2}{\Sigma^{(\alpha)}_{i,j}},
\\
\mathrm{Slope}_\alpha
&:=w^{(\alpha)}_i+w^{(\alpha)}_j-\frac{2}{\Sigma^{(\alpha)}_{i,j}}.
\end{aligned}
\end{equation}
Then for every $\triangle(i,j)\in[0,\varrho_{\min}-1]$,
\begin{equation}
\label{eq:Theta-alpha-def}
\mathfrak c_{\mathrm{OR}}(i,j)\ \le\ 
\Theta_\alpha(\triangle(i,j))\ :=\ \mathrm{Const}_\alpha+\mathrm{Slope}_\alpha\triangle(i,j),
\end{equation}
and $\Theta_\alpha$ is affine and non-decreasing on $[0,\varrho_{\min}-1]$.
\end{proposition}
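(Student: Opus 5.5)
The plan is to start from the master inequality \eqref{eq:OR-master-corrected-sharp} of Proposition~\ref{prop:lazy-envelope-sharpened} and remove the slack variables with the upper halves of the two-sided bounds \eqref{eq:mUU-envelope-sharp} and \eqref{eq:mTriangle-envelope-sharp}. The result should be an expression that depends affinely on $\triangle:=\triangle(i,j)$.

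\textbf{Step 1: the cross-edge slack.} For $m^{(\alpha)}_{\mathrm{UU}}$ I will keep only the third entry of the minimum, $\Xi_{ij}/\Sigma^{(\alpha)}_{i,j}$. The Corollary (Structural Bound for $\Xi_{uv}$) gives $\Xi_{ij}\le \varrho_i+\varrho_j-2-2\triangle$. Substituting yields
\[
m^{(\alpha)}_{\mathrm{UU}}\le \frac{\varrho_i+\varrho_j-2}{\Sigma^{(\alpha)}_{i,j}}-\frac{2\triangle}{\Sigma^{(\alpha)}_{i,j}}.
\]
This produces the last summand of $\mathrm{Const}_\alpha$ and the term $-2/\Sigma^{(\alpha)}_{i,j}$ in $\mathrm{Slope}_\alpha$.

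\textbf{Step 2: the triangle channel.} I will combine $2\triangle\, w^{(\alpha)}_\wedge$ with $m^{(\alpha)}_\triangle\le \triangle\,|w^{(\alpha)}_i-w^{(\alpha)}_j|$. This uses the identity $2\min\{a,b\}+|a-b|=a+b$, so the two terms together are at most $\triangle\,(w^{(\alpha)}_i+w^{(\alpha)}_j)$. Adding the remaining constant terms $-1+2(z_i+z_j)+\sum_u(r_u+\bar r_u)$ from \eqref{eq:OR-master-corrected-sharp} gives exactly $\Theta_\alpha(\triangle)$ for every admissible integer $\triangle\in[0,\varrho_{\min}-1]$. The range constraint is automatic, because $\mathcal C\subseteq\mathcal N(i)\setminus\{j\}$.

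\textbf{Step 3: monotonicity.} $\Theta_\alpha$ is affine by construction, so it remains to show $\mathrm{Slope}_\alpha\ge0$. Write $a=w^{(\alpha)}_i$ and $b=w^{(\alpha)}_j$, so that $\Sigma^{(\alpha)}_{i,j}=1/a+1/b$. Then
\[
\mathrm{Slope}_\alpha=a+b-\frac{2ab}{a+b}=\frac{a^2+b^2}{a+b}>0 .
\]
This is the AM--HM comparison, which holds for positive $a,b$ (degrees at least $1$ and $\alpha_u<1$). Hence $\Theta_\alpha$ is non-decreasing on the interval.

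\textbf{Main obstacle.} Each substitution is an upper bound that enters \eqref{eq:OR-master-corrected-sharp} with a positive sign, so the real difficulty is bookkeeping. I need to check that Proposition~\ref{prop:lazy-envelope-sharpened} permits choosing the slacks at (or below) the particular entries of the minima I use. This holds because the slacks are only required to lie between the realized masses and the minima, and any value at most one entry of a minimum is also at most the minimum. I also need to confirm that the structural bound on $\Xi_{ij}$ is valid for every edge, including edges with $\mathcal U_i=\emptyset$. In that case $\Xi_{ij}=0$ and $\varrho_i-1-\triangle=0$, so the bound on $\Xi_{ij}$ is still nonnegative and the substitution in Step 1 goes through.
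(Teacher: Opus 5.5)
Your proposal is correct and follows the paper's proof step for step. Like the paper, you bound $m^{(\alpha)}_{\mathrm{UU}}$ by $\Xi_{ij}/\Sigma^{(\alpha)}_{i,j}$ and then by the structural cap $\varrho_i+\varrho_j-2-2\triangle$, bound $m^{(\alpha)}_{\triangle}$ by the demand-side term $\triangle\,|w^{(\alpha)}_i-w^{(\alpha)}_j|$, merge the triangle terms via $2\min\{a,b\}+|a-b|=a+b$, and obtain monotonicity from $\mathrm{Slope}_\alpha=(a^2+b^2)/(a+b)>0$. Your extra bookkeeping (that the slacks lie below each entry of the minima, and that the structural cap is nonnegative) is also correct; the paper simply leaves these points implicit.
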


\section{Transfer Principles Between \texorpdfstring{$\mathfrak c_{\rm OR}$}{cOR} and \texorpdfstring{$\mathfrak c_{\rm BF}$}{cBF}}
\label{sec:transfer-curvatures}
The central idea is to decouple \emph{structure} from \emph{transport}: using the results of Section~\ref{sec:edge-bound-or}, we obtain closed-form, monotone, piecewise-affine moduli that depend solely on the combinatoric summaries around \((i,j)\).

We establish four transfer principles:
\begin{itemize}

\item[(i)]  
Given a target BF level \(\zeta\), we determine the minimum triangle mass \(\mathscr Z^{(i,j)}(\zeta)\), which, together with a square-matching floor, yields an explicit modulus \(\varphi^{(i,j)}_{\rm BF\to OR}(\zeta)\) with
\[
\mathfrak c_{\rm BF}(i,j)\ge \zeta 
\implies
\mathfrak c_{\rm OR}(i,j)\ge \varphi^{(i,j)}_{\rm BF\to OR}(\zeta).
\]

\item[(ii)]  
Interpreting the constraint \(\mathfrak c_{\mathrm{BF}}\le \zeta\) as a budget on the unit-cost operations that increase \(\mathfrak c_{\mathrm{OR}}\), we obtain simultaneous bounds on triangle mass and cross-edges,
\[
0 \le \triangle \le \triangle_{\max}(\zeta), \qquad
0 \le \Xi_{ij} \le \Xi_{\max}(\triangle),
\]
which induce a continuous, piecewise-affine function \(\widehat{\Psi}_\alpha(\triangle)\) whose maximizer, lying in the finite knot set~\eqref{eq:knot-set}, yields the upper modulus
\[
\mathfrak c_{\mathrm{OR}}(i,j) \le \psi^{(i,j)}_{\mathrm{BF}\to \mathrm{OR}}(\zeta).
\]

\item[(iii)]  
The coverage envelope \(\Theta_\alpha(\triangle)\) is invertible: observing \(\mathfrak c_{\rm OR}\ge \vartheta\) enforces at least \(t_{\min}(\vartheta)\) shared neighbors.  
Using this guaranteed overlap and discarding the nonnegative $4$-cycle term gives the lower bound
\[
\mathfrak c_{\rm BF}(i,j)\ \ge\ \mathfrak S(i,j)+\mathfrak T(i,j)\,t_{\min}(\vartheta).
\]

\item[(iv)]  
Combining the lazy to non-lazy reduction with a quadrangle-effective deficit \(\mathfrak K_\square\) (which penalizes degrees via the matching statistic), we construct a piecewise-affine triangle envelope \(\mathfrak u_{\max}^{(i,j)}(\vartheta)\).  
Substituting into \(\mathfrak c_{\rm BF}\) yields
\[
\mathfrak c_{\rm OR}(i,j)\le \vartheta
\implies
\mathfrak c_{\rm BF}(i,j)
\ \le\ 
\mathfrak S(i,j)+\mathfrak T(i,j)\,\mathfrak u_{\max}^{(i,j)}(\vartheta)+\mathfrak C_4(i,j).
\]

\end{itemize}
Together, these moduli (a) provide computable curvature intervals using only local combinatorics, (b) isolate the precise roles of degree imbalance, triangle overlap, and coarse $4$-cycle evidence, (c) work directly in the lazy setting with explicit dependence on \((\alpha_i,\alpha_j)\), and (d) avoid any optimal transport computation.  
They enable fast screening (e.g.\ certifying the sign or a prescribed threshold of one curvature from the other), offer interpretable sensitivity to local edits (e.g.\ effects of adding a triangle or cross-edge), and support structural inference: large OR curvature forces a minimum triangle count through \(t_{\min}\), while a BF budget imposes strict limits on how unique-unique and unique-common mass can increase \(\mathfrak c_{\rm OR}\).  
The piecewise-affine form also makes these envelopes amenable to optimization and embedding procedures in which curvature appears as a constraint or regularizer.

\begin{theorem}[$\mathfrak c_{\rm BF}$ to $\mathfrak c_{\rm OR}$ Lower Transfer Modulus]
\label{thm:bf-to-or}
Fix $(i,j)\in E$ and define
\[
\mathscr Z^{(i,j)}(\zeta)\ :=\ \max\!\left\{0,\ \frac{\,\zeta-\mathfrak S(i,j)-\mathfrak C_4(i,j)\,}{\mathfrak T(i,j)}\right\},
\]
as well as
\[
\overline{\mathscr Z}^{(i,j)}_{\bullet}(\zeta)\ :=\ \frac{\mathscr Z^{(i,j)}(\zeta)}{\bullet\{\varrho_i,\varrho_j\}},\qquad \ \bullet\in\{\min,\max\},
\]
and the square-matching floor
\begin{equation}
\label{eq:S-underline-rigorous}
\underline{\mathscr S}^{(i,j)}(\zeta)
\ :=\
\max\left\{\ \frac12\,\mathfrak C_4(i,j),\ \ \frac12\,\left[\zeta-\mathfrak S(i,j)-\mathfrak T(i,j)\,(\varrho_{\min}-1)\,\right]_+\ \right\}.
\end{equation}
Set the non-lazy transfer modulus
\begin{equation}
\label{eq:phi-nonlazy-square-rigorous}
\begin{aligned}
\varphi^{(i,j)}_{\rm BF\to OR-0}(\zeta)
&:=
-\Bigl[\mathfrak K(i,j)-\overline{\mathscr Z}^{(i,j)}_{\max}(\zeta)-\underline{\mathscr S}^{(i,j)}(\zeta)\Bigr]_+\\
&\quad
-\Bigl[\mathfrak K(i,j)-\overline{\mathscr Z}^{(i,j)}_{\min}(\zeta)-\underline{\mathscr S}^{(i,j)}(\zeta)\Bigr]_+
\ +\ \overline{\mathscr Z}^{(i,j)}_{\max}(\zeta),
\end{aligned}
\end{equation}
and its lazy counterpart via Corollary~\ref{cor:JL-plus-squares-lazy},
\begin{equation}
\label{eq:phi-lazy-square-rigorous}
\varphi^{(i,j)}_{\rm BF\to OR}(\zeta)
\ :=\ (1-\alpha_\star)\,\varphi^{(i,j)}_{\rm BF\to OR-0}(\zeta)\ -\ \Delta_{ij}(\alpha),
\end{equation}
where $\alpha_\star$ is the piecewise choice in Proposition~\ref{prop:sharper-transfer}. Then, for every $\zeta\in\mathbb R$,
\[
\mathfrak c_{\rm BF}(i,j)\ \ge\ \zeta\ \implies\ \mathfrak c_{\rm OR}(i,j)\ \ge\ \varphi^{(i,j)}_{\rm BF\to OR}(\zeta).
\]
\end{theorem}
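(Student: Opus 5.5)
The plan is to reduce the statement to Corollary~\ref{cor:JL-plus-squares-lazy}, equivalently to Theorem~\ref{thm:JL-plus-squares} followed by Proposition~\ref{prop:sharper-transfer}, using two monotonicity facts. Write the right-hand side of \eqref{eq:JL-aug} as a function
\[
F(t,s):=-\bigl[\mathfrak K-\tfrac{t}{\varrho_{\max}}-s\bigr]_+-\bigl[\mathfrak K-\tfrac{t}{\varrho_{\min}}-s\bigr]_++\tfrac{t}{\varrho_{\max}},
\]
where $t$ stands for the triangle count. Then $\mathfrak c_{\mathrm{OR}-0}(i,j)\ge F(\triangle(i,j),\mathscr S(i,j))$, and by construction $\varphi^{(i,j)}_{\rm BF\to OR-0}(\zeta)=F(\mathscr Z^{(i,j)}(\zeta),\underline{\mathscr S}^{(i,j)}(\zeta))$. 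The map $F$ is nondecreasing in each argument, since $-[\,\cdot\,]_+$ of a decreasing function is increasing and the last term increases in $t$; for $s$ this is already noted in Theorem~\ref{thm:JL-plus-squares}.

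It therefore suffices to show two things under the hypothesis $\mathfrak c_{\rm BF}(i,j)\ge\zeta$: first, $\triangle(i,j)\ge\mathscr Z^{(i,j)}(\zeta)$; second, $\mathscr S(i,j)\ge\underline{\mathscr S}^{(i,j)}(\zeta)$.

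\textbf{Triangle count.} Assume $\varrho_{\min}>1$; degree-one edges need a separate short check in which $\mathfrak c_{\rm BF}=0$. Then $\zeta\le\mathfrak S+\mathfrak T\triangle+\mathfrak C_4$, and since $\mathfrak T>0$ this gives $\triangle\ge(\zeta-\mathfrak S-\mathfrak C_4)/\mathfrak T$. Combined with $\triangle\ge0$, this yields $\triangle\ge\mathscr Z^{(i,j)}(\zeta)$.

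\textbf{Matching floor.} This step has two parts.

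The first part is to show $\mathscr S\ge\frac12\mathfrak C_4$. By definition, $\mathfrak C_4=\Xi_{ij}/(\varpi_{\max}\varrho_{\max})$, so it suffices to show $\mathfrak m(i,j)\ge\Xi_{ij}/(2\varpi_{\max})$. Every non-isolated vertex of $B_{ij}$ is counted in $\Xi_{ij}$, and every vertex of $B_{ij}$ has degree at most $\varpi_{\max}$. A greedy or maximal matching argument handles this: each matched edge covers at most $2\varpi_{\max}$ edges, and hence at most $2\varpi_{\max}$ non-isolated vertices, since each non-isolated vertex is incident to some edge. Equivalently, a vertex cover of size $\mathfrak m$ (König) touches all edges, so at most $2\mathfrak m\varpi_{\max}$ vertices are non-isolated. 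The case $\Xi_{ij}=0$ is trivial.

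The second part is to show $\mathscr S\ge\frac12[\zeta-\mathfrak S-\mathfrak T(\varrho_{\min}-1)]_+$. Since $\triangle\le\varrho_{\min}-1$, we get $\zeta-\mathfrak S-\mathfrak T(\varrho_{\min}-1)\le\mathfrak C_4$. The right-hand side is therefore at most $\frac12\mathfrak C_4$, and the first part applies.

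\textbf{Lazy transfer.} Combining the two inequalities with the monotonicity of $F$ gives $\mathfrak c_{\mathrm{OR}-0}(i,j)\ge\varphi^{(i,j)}_{\rm BF\to OR-0}(\zeta)$. We then apply \eqref{eq:sharper-piecewise}, multiplying by $1-\alpha_\star\ge0$. The subtle point is that $\alpha_\star$ depends on the sign of $\mathfrak c_{\mathrm{OR}-0}$, but this is harmless: the bound \eqref{eq:master-beta} holds with that $\beta$, and monotonicity of multiplication by $1-\alpha_\star$ is all that is used.

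\textbf{Main obstacle.} I expect the main obstacle to be the matching-floor step. Care is needed to ensure that $\varpi_{\max}$ genuinely bounds degrees in $B_{ij}$; it does, since $\widetilde\Box(k,i,j)=|\mathcal N(k)\cap\mathcal U_j|$ is exactly the degree of $k$ in $B_{ij}$, and symmetrically on the other side. The factor $\tfrac12$ absorbs the two endpoints of each edge.
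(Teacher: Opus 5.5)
Your proposal is correct and follows the paper's proof essentially step for step. The paper likewise derives $\triangle(i,j)\ge\mathscr Z^{(i,j)}(\zeta)$ from the BF decomposition, gets $\mathscr S(i,j)\ge\underline{\mathscr S}^{(i,j)}(\zeta)$ by combining $\mathscr S(i,j)\ge\tfrac12\mathfrak C_4(i,j)$ with $\mathfrak C_4(i,j)\ge\bigl[\zeta-\mathfrak S(i,j)-\mathfrak T(i,j)(\varrho_{\min}-1)\bigr]_+$, substitutes both into the coordinatewise nondecreasing bound of Theorem~\ref{thm:JL-plus-squares}, and finishes with Corollary~\ref{cor:JL-plus-squares-lazy}.

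The differences are minor. The paper obtains $\mathfrak m(i,j)\ge\Xi_{ij}/(2\varpi_{\max}(i,j))$ from $|E(\mathcal U_i,\mathcal U_j)|\ge\Xi_{ij}/2$ plus K\H{o}nig's edge-coloring theorem, rather than your maximal-matching or vertex-cover count. It also silently ignores the indicator $\mathds 1_{\{\varrho_{\min}>1\}}$. The degree-one case you flag is indeed trivial: there $\zeta\le 0<\mathfrak S(i,j)$ and $\Xi_{ij}=0$, so $\mathscr Z^{(i,j)}(\zeta)=\underline{\mathscr S}^{(i,j)}(\zeta)=0$.
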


\begin{theorem}[$\mathfrak c_{\rm BF}$ to $\mathfrak c_{\rm OR}$ Upper Transfer Modulus]
\label{thm:BF-to-OR-lazy}
Fix an edge $(i,j)\in E$, assume $\mathfrak c_{\rm BF}(i,j)\le \zeta$ for some $\zeta\in\mathbb R$ and set
\(
\mathscr b(\zeta)\ :=\ [\,\zeta-\mathfrak S(i,j)\,]_+.
\)
The following inequalities follow
\begin{equation}
\label{eq:bf-budget-region}
\begin{aligned}
0 & \le \triangle(i,j) \le\ \triangle_{\max}(i,j):=\min\!\left\{\varrho_{\min}-1,\ \frac{\mathscr b(\zeta)}{\mathfrak T(i,j)}\right\},
\\
0 & \le \Xi_{ij} \le\ \Xi_{\max}\left(\triangle(i,j)\right),
\end{aligned}
\end{equation}
where
\begin{equation}
\label{eq:Xi-upper-two-sources}
\begin{aligned}
\Xi_{\max}(\triangle) & :=\ \min\!\left\{
\underbrace{\sho_{\max}^\star\Bigl(\mathscr b(\zeta)-\mathfrak T(i,j)\triangle\Bigr)_+}_{\text{\rm Lemma \ref{lem:box_count}}},
\ \underbrace{\ \varrho_i+\varrho_j-2-2\triangle\ }_{\text{\rm Equation \eqref{eq:Xi-max-structural-sharp}}}
\right\},
\\
\sho_{\max}^\star&:=\varrho_{\max}(\varrho_{\max}-1).
\end{aligned}
\end{equation}
Define the affine functions in the triangle variable $\triangle$:
\begin{align}
\label{eq:Amin}
A_u(\triangle)&:=(\varrho_u-1-\triangle)\,w^{(\alpha)}_u,\qquad
A_{\min}(\triangle):=\min\{A_i(\triangle),A_j(\triangle)\},\\[3pt]
\label{eq:Balph}
B_\alpha(\triangle)&:=\frac{\sho_{\max}^\star}{\Sigma^{(\alpha)}_{i,j}}\,\Bigl(\mathscr b(\zeta)-\mathfrak T(i,j)\triangle\Bigr),\\[3pt]
\label{eq:Dalph}
D_\alpha(\triangle)&:=\frac{\varrho_i+\varrho_j-2-2\triangle}{\Sigma^{(\alpha)}_{i,j}},\\[3pt]
\label{eq:Calph}
C_\alpha(\triangle)&:=\min\!\left\{\ \triangle\,\bigl|w^{(\alpha)}_i-w^{(\alpha)}_j\bigr|\,,\ A_i(\triangle)+A_j(\triangle)\ \right\}.
\end{align}
Then, with the endpoint quantities from the lazy envelope
\[
z_u=\min\!\left\{\alpha_u,\frac{1-\alpha_{u^\complement}}{\varrho_{u^\complement}}\right\},\quad
r_u:=\Bigl[\alpha_u-\frac{1-\alpha_{u^\complement}}{\varrho_{u^\complement}}\Bigr]_+,\quad
\bar r_u:=\Bigl[\frac{1-\alpha_{u^\complement}}{\varrho_{u^\complement}}-\alpha_u\Bigr]_+,
\]
we have the piecewise-affine upper envelope
\begin{multline}
\label{eq:widehat-Psi}
\widehat\Psi_\alpha(\triangle)
:=
-1+2(z_i+z_j)
+\bigl(r_i+\bar r_i+r_j+\bar r_j\bigr)
+2\,w^{(\alpha)}_\wedge\,\triangle \\
+\Bigl[\min\{A_{\min}(\triangle),\,B_\alpha(\triangle),\,D_\alpha(\triangle)\}\Bigr]_+
+C_\alpha(\triangle).
\end{multline}
Consequently,
\begin{equation}
\label{eq:BF-to-OR-claim-sharp}
\mathfrak c_{\rm OR}(i,j)
\ \le\
\psi_{\rm BF\to OR}^{(i,j)}(\zeta)
:=\max_{\ \triangle\in[\,0,\ \triangle_{\max}(i,j)\,]}\ \widehat\Psi_\alpha(\triangle).
\end{equation}

\noindent
Moreover, $\widehat\Psi_\alpha$ is continuous and piecewise-affine on $[0,\triangle_{\max}(i,j)]$, and any maximizer in \eqref{eq:BF-to-OR-claim-sharp} can be chosen from the finite set
\begin{equation}
\label{eq:knot-set}
\mathcal K := \Bigl\{0, \triangle_{\max}, \triangle_{\rm swap}, \triangle_{i\cap B}, \triangle_{j\cap B}, \triangle_{i\cap D}, \triangle_{j\cap D}, \triangle_{B\cap D}, \triangle_{\mathscr s\cap}\Bigr\} \cap [0,\triangle_{\max}],
\end{equation}
where
\[
\begin{aligned}
\triangle_{\rm swap}&:=\begin{cases}
\displaystyle\frac{w^{(\alpha)}_j(\varrho_j-1)-w^{(\alpha)}_i(\varrho_i-1)}{\,w^{(\alpha)}_j-w^{(\alpha)}_i\,},&\text{if }w^{(\alpha)}_i\neq w^{(\alpha)}_j,\\[8pt]
\text{Undefined},&\text{if }w^{(\alpha)}_i= w^{(\alpha)}_j,
\end{cases}\\[6pt]
\triangle_{u\cap B}&:=\ \frac{\ \dfrac{\sho_{\max}^\star}{\Sigma^{(\alpha)}_{i,j}}\,\mathscr b(\zeta)\ -\ w^{(\alpha)}_u(\varrho_u-1)\ }{ \ \dfrac{\sho_{\max}^\star}{\Sigma^{(\alpha)}_{i,j}}\,\mathfrak T(i,j)\ -\ w^{(\alpha)}_u\ }\qquad(u\in\{i,j\})\quad\text{if the denominator is nonzero},\\[8pt]
\triangle_{u\cap D}&:=\ \frac{\ (\varrho_i+\varrho_j-2)\ -\ \Sigma^{(\alpha)}_{i,j}\,w^{(\alpha)}_u(\varrho_u-1)\ }{\,2-\Sigma^{(\alpha)}_{i,j}\,w^{(\alpha)}_u\,}\qquad(u\in\{i,j\})\quad\text{if the denominator is nonzero},\\[8pt]
\triangle_{B\cap D}&:=\ \frac{\,\varrho_i+\varrho_j-2\ -\ \sho_{\max}^\star\,\mathscr b(\zeta)\,}{\,2-\sho_{\max}^\star\,\mathfrak T(i,j)\,}\quad\text{if the denominator is nonzero},\\[8pt]
\triangle_{\mathscr s\cap}&:=\ \frac{\,w^{(\alpha)}_i(\varrho_i-1)+w^{(\alpha)}_j(\varrho_j-1)\,}{\,w^{(\alpha)}_i+w^{(\alpha)}_j+\bigl|w^{(\alpha)}_i-w^{(\alpha)}_j\bigr|\,}.
\end{aligned}
\]
\end{theorem}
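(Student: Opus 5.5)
The plan is to derive everything from the lazy transport envelope of Proposition~\ref{prop:lazy-envelope-sharpened}. Its slack variables are then bounded using only the combinatorial information that the hypothesis $\mathfrak c_{\rm BF}(i,j)\le\zeta$ provides.

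\textbf{Step 1: the budget region.} I would first establish \eqref{eq:bf-budget-region}. By \eqref{eq:BF-lower-decomp}, when $\varrho_{\min}>1$ we have $\mathfrak S+\mathfrak T\triangle+\mathfrak C_4(i,j)\le\zeta$. Since $\mathfrak C_4(i,j)\ge0$ and $\mathfrak T>0$, this gives $\mathfrak T\triangle\le\zeta-\mathfrak S\le\mathscr b(\zeta)$. Combined with the trivial bound $\triangle\le\varrho_{\min}-1$, this yields $\triangle\le\triangle_{\max}$.

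For the cross-edge count, the same inequality gives $\mathfrak C_4(i,j)=\Xi_{ij}/\sho_{\max}(i,j)\le(\mathscr b(\zeta)-\mathfrak T\triangle)_+$. By Lemma~\ref{lem:box_count}, $\sho_{\max}\le\sho^\star_{\max}$, so $\Xi_{ij}\le\sho^\star_{\max}(\mathscr b-\mathfrak T\triangle)_+$. The second branch of $\Xi_{\max}$ is exactly \eqref{eq:Xi-max-structural-sharp}.

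The degenerate case $\varrho_{\min}=1$ needs separate treatment. There $\triangle=0$ and $\Xi_{ij}=0$ hold directly, and all bounds still apply.

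\textbf{Step 2: plug into the envelope.} I would use \eqref{eq:OR-master-corrected-sharp} and choose the slacks at their upper bounds from \eqref{eq:mUU-envelope-sharp} and \eqref{eq:mTriangle-envelope-sharp}. This gives
\[
m^{(\alpha)}_{\mathrm{UU}}\le\min\{A_i(\triangle),A_j(\triangle),\Xi_{ij}/\Sigma^{(\alpha)}_{i,j}\},
\]
and substituting the bound on $\Xi_{ij}$ from Step~1 turns the third entry into $\min\{B_\alpha,D_\alpha\}$. Likewise, $m^{(\alpha)}_\triangle\le C_\alpha(\triangle)$, using $|\mathcal U_u|=\varrho_u-1-\triangle$.

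Since $m^{(\alpha)}_{\rm UU}\ge0$, it is safe to take the positive part $[\cdot]_+$. The endpoint terms $z,r,\bar r$ are unchanged. This gives $\mathfrak c_{\rm OR}(i,j)\le\widehat\Psi_\alpha(\triangle(i,j))$. Because the true $\triangle(i,j)$ lies in $[0,\triangle_{\max}]$, maximizing over that interval gives \eqref{eq:BF-to-OR-claim-sharp}.

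\textbf{Step 3: structure of the maximizer.} Each ingredient of $\widehat\Psi_\alpha$ is affine in $\triangle$: the constant, $2w_\wedge\triangle$, $A_u$, $B_\alpha$, $D_\alpha$, and the two branches of $C_\alpha$. Minima, positive parts and sums of finitely many affine functions are continuous and piecewise-affine. Hence the maximum over a compact interval is attained at an endpoint or at a breakpoint, meaning a point where two competing affine pieces are equal. The breakpoints are then enumerated as follows:
\begin{itemize}
\item $A_i=A_j$ gives $\triangle_{\rm swap}$;
\item $A_u=B_\alpha$ gives $\triangle_{u\cap B}$;
\item $A_u=D_\alpha$ gives $\triangle_{u\cap D}$ (after multiplying by $\Sigma$);
\item $B_\alpha=D_\alpha$ gives $\triangle_{B\cap D}$;
\item the two branches of $C_\alpha$ meet at $\triangle|w_i-w_j|=A_i+A_j$, which solves to $\triangle_{\mathscr s\cap}$.
\end{itemize}

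\textbf{Expected main obstacle.} The hardest part will be the kinks coming from $[\cdot]_+$, where the minimum crosses zero, and from the positive part inside $B_\alpha$. I would handle these by observing that they occur where $A_u$, $B_\alpha$ or $D_\alpha$ vanishes. $A_u$ and $D_\alpha$ vanish only at $\triangle\ge\varrho_{\min}-1\ge\triangle_{\max}$, and $B_\alpha$ vanishes only at $\triangle=\mathscr b/\mathfrak T\ge\triangle_{\max}$. So these kinks coincide with $\triangle_{\max}$ or lie outside the interval, and the knot set $\mathcal K$ in \eqref{eq:knot-set} suffices. Degenerate cases with zero denominators correspond to parallel pieces, which contribute no breakpoint.
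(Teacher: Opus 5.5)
Your proposal is correct and follows essentially the same route as the paper. You derive the budget region from the BF decomposition together with Lemma~\ref{lem:box_count} and \eqref{eq:Xi-max-structural-sharp}, substitute it into Proposition~\ref{prop:lazy-envelope-sharpened}, and place the maximizer at an endpoint or a pairwise intersection of the affine pieces, with the positive-part kinks pushed to $\triangle_{\max}$; your explicit handling of the $\varrho_{\min}=1$ case, where the indicator makes $\mathfrak c_{\rm BF}=0$, fills a point the paper passes over, and the paper's extra feasibility check of the knots is not needed for the upper bound.
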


\begin{theorem}[$\mathfrak c_{\rm OR}$ to $\mathfrak c_{\rm BF}$ Upper Transfer Modulus]
\label{thm:OR-to-BF-lower}
Let $(i,j)\in E$ with $\varrho_{\min}\ge 2$. For any $\vartheta\in\mathbb R$, define
\begin{equation}\label{eq:tmin}
t_{\min}(\vartheta):=\min \left\{ \left[\frac{\ \vartheta-\mathrm{Const}_\alpha\ }{\ \mathrm{Slope}_\alpha\ }\right]_+,\ \left(\varrho_{\min}-1\right)\right\},
\end{equation}
with $\mathrm{Const}_\alpha,\mathrm{Slope}_\alpha$ from \eqref{eq:const-slope}. Then
\begin{equation}\label{eq:OR-to-BF-lower}
\mathfrak c_{\rm OR}(i,j)\ \ge\ \vartheta
\implies
\mathfrak c_{\rm BF}(i,j)\ \ge\ \varphi_{\rm OR\to BF}^{(i,j)}(\vartheta):=\ \mathfrak S(i,j)+\mathfrak T(i,j)\,t_{\min}(\vartheta).
\end{equation}
Moreover,
\begin{itemize}
\item[{\rm (a)}] If $\vartheta\le \mathrm{Const}_\alpha$, then \eqref{eq:OR-to-BF-lower} reduces to the degree-only bound $\mathfrak c_{\rm BF}\ge \mathfrak S(i,j)$.
\item[{\rm (b)}] If $\vartheta>\Theta_\alpha(\varrho_{\min}-1)$ (with $\Theta_\alpha$ as in \eqref{eq:Theta-alpha-def}), then no edge with the given $(\varrho_i,\varrho_j,\alpha_i,\alpha_j)$ can satisfy $\mathfrak c_{\rm OR}\ge\vartheta$; thus the implication is vacuous.
\item[{\rm (c)}] The modulus $\varphi^{(i,j)}_{\rm OR\to BF}$ depends only on $(\varrho_i,\varrho_j,\alpha_i,\alpha_j)$ and is independent of $\Xi_{ij}$ and other higher-order edge substructures as \eqref{eq:Theta-alpha-def} uses only the structural bound \eqref{eq:Xi-max-structural-sharp}.
\end{itemize}
\end{theorem}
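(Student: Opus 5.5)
The plan is to invert the monotone coverage envelope of Proposition~\ref{prop:coverage-envelope-monotone} and then substitute the resulting triangle floor into the definition \eqref{eq:BF-lower-decomp} of $\mathfrak c_{\rm BF}$.

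\emph{Step 1: triangle floor from the OR level.} Suppose $\mathfrak c_{\rm OR}(i,j)\ge\vartheta$ and write $\triangle:=\triangle(i,j)\in[0,\varrho_{\min}-1]$. Proposition~\ref{prop:coverage-envelope-monotone} gives
\[
\vartheta\le\mathfrak c_{\rm OR}(i,j)\le\Theta_\alpha(\triangle)=\mathrm{Const}_\alpha+\mathrm{Slope}_\alpha\,\triangle .
\]
\begin{itemize}
\item[(1)] If $\mathrm{Slope}_\alpha>0$, we divide to get $\triangle\ge(\vartheta-\mathrm{Const}_\alpha)/\mathrm{Slope}_\alpha$. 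Since also $\triangle\ge 0$, this gives $\triangle\ge[(\vartheta-\mathrm{Const}_\alpha)/\mathrm{Slope}_\alpha]_+$.
\item[(2)] The cap at $\varrho_{\min}-1$ in \eqref{eq:tmin} is harmless. Either the uncapped value is already at most $\varrho_{\min}-1$, or $\vartheta>\Theta_\alpha(\varrho_{\min}-1)$. In the latter case monotonicity of $\Theta_\alpha$ shows no admissible $\triangle$ satisfies the hypothesis, so the implication is vacuous. This same observation proves item~(b).
\item[(3)] If $\mathrm{Slope}_\alpha=0$, we use the convention that the bracket is $0$ when $\vartheta\le\mathrm{Const}_\alpha$. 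If $\vartheta>\mathrm{Const}_\alpha$, the hypothesis is again impossible.
\end{itemize}
Nonnegativity of $\mathrm{Slope}_\alpha$ comes from the "non-decreasing" clause of Proposition~\ref{prop:coverage-envelope-monotone}. In all cases we conclude $\triangle(i,j)\ge t_{\min}(\vartheta)$.

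\emph{Step 2: substitution into BF.} Because $\varrho_{\min}\ge2$, the indicator in \eqref{eq:BF-lower-decomp} equals $1$, so
\[
\mathfrak c_{\rm BF}(i,j)=\mathfrak S(i,j)+\mathfrak T(i,j)\,\triangle(i,j)+\mathfrak C_4(i,j).
\]
We have $\mathfrak C_4(i,j)=\Xi_{ij}/\sho_{\max}(i,j)\ge0$. When $\sho_{\max}=0$, no cross edges exist, so $\Xi_{ij}=0$ and we read the ratio as $0$. We also have $\mathfrak T(i,j)>0$. Dropping the $4$-cycle term and applying Step~1 yields $\mathfrak c_{\rm BF}(i,j)\ge\mathfrak S(i,j)+\mathfrak T(i,j)\,t_{\min}(\vartheta)$, which is \eqref{eq:OR-to-BF-lower}.

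\emph{Remaining items.} For item~(a): if $\vartheta\le\mathrm{Const}_\alpha$, the positive part vanishes, so $t_{\min}=0$ and the bound reduces to $\mathfrak S(i,j)$. For item~(c): $\mathrm{Const}_\alpha$ and $\mathrm{Slope}_\alpha$ in \eqref{eq:const-slope} involve only $z_u,r_u,\bar r_u,w^{(\alpha)}_u$ and $\Sigma^{(\alpha)}_{i,j}$, and $\mathfrak S,\mathfrak T$ involve only degrees. All of these are functions of $(\varrho_i,\varrho_j,\alpha_i,\alpha_j)$. The only place $\Xi_{ij}$ could have entered is the $\mathrm{UU}$ slack, and Proposition~\ref{prop:coverage-envelope-monotone} already replaced it by the structural bound \eqref{eq:Xi-max-structural-sharp}.

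The main obstacle is Step~1's reliance on $\mathrm{Slope}_\alpha\ge0$ and on the degenerate cases (zero slope, or $\vartheta$ beyond the range of $\Theta_\alpha$). I will handle these by importing the monotonicity from Proposition~\ref{prop:coverage-envelope-monotone} and by treating the out-of-range cases as vacuous, as described above.
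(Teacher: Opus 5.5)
Your proposal is correct and follows essentially the same route as the paper. You invert the monotone coverage envelope $\Theta_\alpha$ of Proposition~\ref{prop:coverage-envelope-monotone} to force $\triangle(i,j)\ge t_{\min}(\vartheta)$, drop the nonnegative $\mathfrak C_4(i,j)$ term in \eqref{eq:BF-lower-decomp}, and read off (a)--(c) from $\Theta_\alpha(0)=\mathrm{Const}_\alpha$, the cap $\triangle(i,j)\le\varrho_{\min}-1$, and the degree-only form of $\mathrm{Const}_\alpha,\mathrm{Slope}_\alpha$.

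Your extra case analysis is harmless but partly unnecessary. The proof of that proposition shows $\mathrm{Slope}_\alpha=\bigl((w^{(\alpha)}_i)^2+(w^{(\alpha)}_j)^2\bigr)/(w^{(\alpha)}_i+w^{(\alpha)}_j)>0$, so the zero-slope branch never occurs. The cap at $\varrho_{\min}-1$ only weakens the bound, so it needs no separate justification.
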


\begin{theorem}[$\mathfrak c_{\rm OR}$ to $\mathfrak c_{\rm BF}$ Upper Transfer Modulus]
\label{thm:or-to-bf}
Let $(i,j)\in E$ and define the
\(
\mathfrak s_0^{(i,j)}(\vartheta) := {(\vartheta+\Delta_{ij}(\alpha))}/{(1-\alpha_\star)}.
\)
Set the deficit
\(
\mathfrak K_{\square}(i,j) :=\bigl[\mathfrak K(i,j)-\mathscr S(i,j)\bigr]_+
\)
and the corresponding breakpoint value
\(
\mathfrak s_{\mathfrak u}^{\square}(i,j)\ :=\ \mathfrak K_{\square}(i,j)\,\Bigl(2\,{\varrho_{\min}}/{\varrho_{\max}}-1\Bigr).
\)
Define the piecewise \emph{quadrangle triangle-envelope}
\begin{equation}
\label{eq:u-max-square}
\mathfrak u_{\max}^{(i,j)}(\vartheta)\ :=\
\begin{cases}
0,
& \mathfrak s_0^{(i,j)}(\vartheta)\le -2\,\mathfrak K_{\square}(i,j),\\[8pt]
\dfrac{\mathfrak s_0^{(i,j)}(\vartheta)+2\,\mathfrak K_{\square}(i,j)}{\ \mathfrak T(i,j)\ },
& -2\,\mathfrak K_{\square}(i,j)\le \mathfrak s_0^{(i,j)}(\vartheta)\le \mathfrak s_{\mathfrak u}^{\square}(i,j),\\[12pt]
\dfrac{\varrho_{\max}}{2}\,\Bigl(\mathfrak s_0^{(i,j)}(\vartheta)+\mathfrak K_{\square}(i,j)\Bigr),
& \mathfrak s_{\mathfrak u}^{\square}(i,j)\le \mathfrak s_0^{(i,j)}(\vartheta)\le \mathfrak K_{\square}(i,j),\\[12pt]
\varrho_{\max}\,\mathfrak s_0^{(i,j)}(\vartheta),
& \mathfrak s_0^{(i,j)}(\vartheta)\ge \mathfrak K_{\square}(i,j).
\end{cases}
\end{equation}
Finally, set the \emph{quadrangle-augmented lazy modulus}
\begin{equation}
\label{eq:psi-square}
\psi^{(i,j)}_{\rm OR\to BF}(\vartheta)
\ :=\ \mathfrak S(i,j)\ +\ \mathfrak T(i,j)\,\mathfrak u_{\max}^{(i,j)}(\vartheta)\ +\ \mathfrak C_4(i,j).
\end{equation}
Then the following implication holds:
\(
\mathfrak c_{\rm OR}(i,j) \le \vartheta
\implies
\mathfrak c_{\rm BF}(i,j) \le\psi^{(i,j)}_{\rm OR\to BF}(\vartheta).
\)
\end{theorem}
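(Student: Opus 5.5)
The argument runs in three steps. First turn the OR upper bound into a non-lazy one, then invert the quadrangle-augmented Jost--Liu bound to cap the triangle count, and finally substitute this cap into the BF formula.

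\emph{Step 1 (lazy to non-lazy).} Assume $\mathfrak c_{\rm OR}(i,j)\le\vartheta$. By \eqref{eq:sharper-piecewise} of Proposition~\ref{prop:sharper-transfer},
\[
(1-\alpha_\star)\,\mathfrak c_{\mathrm{OR}-0}(i,j)-\Delta_{ij}(\alpha)\le \mathfrak c_{\rm OR}(i,j)\le\vartheta .
\]
Since $\alpha_\star<1$, this gives $\mathfrak c_{\mathrm{OR}-0}(i,j)\le \mathfrak s_0^{(i,j)}(\vartheta)$.

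\emph{Step 2 (inverting the triangle lower bound).} Theorem~\ref{thm:JL-plus-squares} bounds $\mathfrak c_{\mathrm{OR}-0}$ from below by
\[
F(\triangle):=-\bigl[\mathfrak K-\tfrac{\triangle}{\varrho_{\max}}-\mathscr S\bigr]_+-\bigl[\mathfrak K-\tfrac{\triangle}{\varrho_{\min}}-\mathscr S\bigr]_+ +\tfrac{\triangle}{\varrho_{\max}} .
\]
I would first replace $\mathfrak K-\mathscr S$ by $\mathfrak K_\square=[\mathfrak K-\mathscr S]_+$. If $\mathfrak K-\mathscr S<0$ the bracket terms vanish in both expressions, so $F$ is unchanged. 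Next observe that $F$ is continuous, piecewise affine and strictly increasing in $\triangle$, with slope $\ge 1/\varrho_{\max}>0$. Its breakpoints are $\triangle_1=\varrho_{\min}\mathfrak K_\square$ and $\triangle_2=\varrho_{\max}\mathfrak K_\square$. The constraint $F(\triangle)\le\mathfrak s_0$ therefore gives $\triangle\le F^{-1}(\mathfrak s_0)$. I would compute this inverse on the three affine pieces:
\begin{itemize}
\item On $[0,\triangle_1]$ the slope is $\tfrac{2}{\varrho_{\max}}+\tfrac{1}{\varrho_{\min}}=\mathfrak T$ and $F(0)=-2\mathfrak K_\square$. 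This gives the second branch of \eqref{eq:u-max-square}. The value $F(\triangle_1)=\mathfrak K_\square(2\varrho_{\min}/\varrho_{\max}-1)=\mathfrak s^\square_{\mathfrak u}$ confirms the breakpoint.
\item On $[\triangle_1,\triangle_2]$ the slope is $2/\varrho_{\max}$, with $F=\tfrac{2\triangle}{\varrho_{\max}}-\mathfrak K_\square$. This gives the third branch, and $F(\triangle_2)=\mathfrak K_\square$.
\item Beyond $\triangle_2$, $F=\triangle/\varrho_{\max}$. This gives the fourth branch.
\end{itemize}
The first branch covers $\mathfrak s_0\le -2\mathfrak K_\square=F(0)$. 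If the inequality is strict, no graph is feasible and the implication is vacuous. Otherwise the bound $\triangle\ge0$ gives $\mathfrak u_{\max}=0$. In every case $\triangle(i,j)\le\mathfrak u_{\max}^{(i,j)}(\vartheta)$.

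\emph{Step 3 (substitution into BF).} Substitute into \eqref{eq:BF-lower-decomp}. Since $\mathfrak T(i,j)>0$, we get $\mathfrak c_{\rm BF}\le\mathfrak S+\mathfrak T\,\mathfrak u_{\max}+\mathfrak C_4(i,j)$. If $\varrho_{\min}=1$, then $\mathfrak c_{\rm BF}=0$. In that case I would need to check that the right-hand side is $\ge0$, or restrict to $\varrho_{\min}\ge2$ as in Theorem~\ref{thm:OR-to-BF-lower}. This degenerate case may need a side remark.

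\emph{Main obstacle.} I expect the main difficulty to be the bookkeeping in Step 2. One must verify monotonicity and continuity of $F$ at both kinks, and handle the sign of $1-\alpha_\star$ and the choice of $\alpha_\star$, which depends on the unknown sign of $\mathfrak c_{\mathrm{OR}-0}$. I would address the latter by noting that \eqref{eq:sharper-piecewise} holds with the true $\alpha_\star$, so the derived $\mathfrak s_0$ is valid as stated.
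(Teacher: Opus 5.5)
Your proposal is correct and follows the paper's proof essentially step for step: you reduce lazy to non-lazy via \eqref{eq:sharper-piecewise}, invert the three-piece increasing function of $\triangle$ from Theorem~\ref{thm:JL-plus-squares} with breakpoints $\varrho_{\min}\mathfrak K_{\square}$ and $\varrho_{\max}\mathfrak K_{\square}$, and substitute into \eqref{eq:BF-lower-decomp}. Your treatment of the infeasible first branch as a vacuous implication is cleaner than the paper's ``supremum is $0$'' convention, and the $\varrho_{\min}=1$ case you flag closes at once: every branch of $\mathfrak u_{\max}^{(i,j)}$ is nonnegative, $\mathfrak C_4\ge 0$, and $\mathfrak S(i,j)=2/\varrho_{\max}>0$ there, so $\psi^{(i,j)}_{\rm OR\to BF}(\vartheta)>0=\mathfrak c_{\rm BF}(i,j)$.
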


\section{Analytical Results}
\label{sec:analysis}
\begin{figure}
\centering
\includegraphics[width=0.8\linewidth]{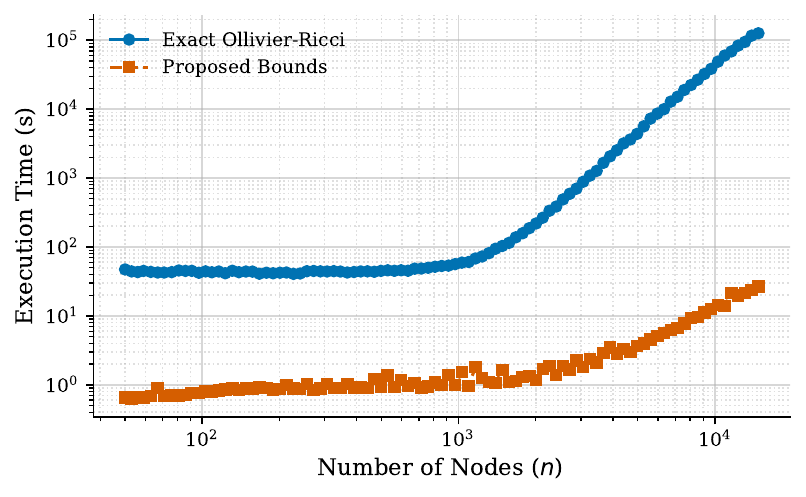}
\caption{\textbf{Empirical Runtime Scaling.} Log-log comparison of execution time for exact $\mathfrak c_{\rm OR}$ versus the proposed transfer moduli on Erd\H{o}s--R\'enyi graphs ($n \in [50, 15{,}000]$). In addition to the divergence in scaling exponents (slopes), the exact computation sits two orders of magnitude higher in absolute runtime.}
\label{fig:runtime_scaling}
\end{figure}

We now seek to provide experimental confirmation for the two families of edgewise bounds derived above: the \emph{coverage envelope} for $\mathfrak c_{\rm OR}$ (Proposition~\ref{prop:coverage-envelope-monotone}) and the \emph{transfer inequalities} between $\mathfrak c_{\rm OR}$ and $\mathfrak c_{\rm BF}$ (Theorems~\ref{thm:OR-to-BF-lower} and~\ref{thm:bf-to-or}). To evaluate both the computational scalability and the tightness of our transfer moduli, we benchmark across a comprehensive suite of synthetic topologies and empirical networks (Karate, Jazz, Power Grid, Yeast, HepPh); complete generation parameters and structural descriptions are deferred to Appendix~\ref{appendix:graph-gen}.

\paragraph{Empirical Runtime Scaling.}

Evaluating $\mathfrak c_{\rm OR}$ exactly is fundamentally bottlenecked by the underlying optimal transport solvers. Figure~\ref{fig:runtime_scaling} isolates this computational barrier: we benchmarked total edgewise execution times across a logarithmically spaced sequence of ER graphs ranging from $n=50$ to $n\approx15{,}000$, and to isolate topological scaling from density bloat, we let $p \propto n^{-0.6}$ be the edge probability, ensuring the average degree grew sub-linearly at $\mathcal{O}(n^{0.4})$. As shown, exact evaluation exhibits steep polynomial scaling and sits orders of magnitude higher in absolute time, rapidly becoming intractable for large networks despite highly parallelized execution. In contrast, our transfer moduli parameterize the transport cost entirely through 2-hop local combinatorics, decoupling curvature estimation from global network size, and yielding remarkable asymptotic and constant-factor speedups.

\paragraph{Summary Statistics.}
Tables~\ref{tab:bandwidth_bf2or} and~\ref{tab:bandwidth_or2bf} quantify two complementary aspects of the curvature relationships:
\begin{itemize}
    \item[(i)] \emph{Intrinsic heterogeneity:} the original bandwidths (edgewise ranges of $\mathfrak c_{\mathrm{OR}}$ and $\mathfrak c_{\mathrm{BF}}$);
    \item[(ii)] \emph{Transfer informativeness:} for each direction (BF$\!\to$OR and OR$\!\to$BF), both the \emph{transfer-band width} (Max, 95th percentile) and the \emph{slack} to the nearest transfer bound (Median, 95th percentile).
\end{itemize}

Widths are expressed in the units of the \emph{target curvature}, with smaller widths and smaller slacks indicating tighter, more informative transfer. Curvatures are edgewise constant on $K_n$, $C_n$, and the 2D tori, so $r$ is undefined and the transfer bands collapse to constants. 

\begin{table}
\centering
\caption{Summary of $\mathfrak c_{\mathrm{BF}}\!\to\!\mathfrak c_{\mathrm{OR}}$ transfer. 
We report edge counts $|E|$, correlation $r=\mathrm{corr}(\mathfrak c_{\rm OR}, \mathfrak c_{\rm BF})$, 
original (edgewise) ranges for $\mathfrak c_{\rm OR}$ and $\mathfrak c_{\rm BF}$, and the transfer-band width (Max, 95P) and slack (50P, 95P). 
Correlation is “--” if one curvature is edgewise constant.}
\label{tab:bandwidth_bf2or}
\begin{adjustbox}{max width=\textwidth}
\footnotesize
\setlength{\tabcolsep}{4pt}
\begin{tabular}{l c c r r r r r r}
\toprule
Graph & $|E|$ & $r$ & $\mathfrak c_{\rm OR}$ range & $\mathfrak c_{\rm BF}$ range & \multicolumn{2}{c}{Width} & \multicolumn{2}{c}{Slack} \\
\cmidrule(lr){6-7} \cmidrule(lr){8-9}
& & & & & Max & 95P & 50P & 95P \\
\midrule
BA(800, 2) & 1597 & 0.947 & 2.133 & 2.631 & 1.801 & 1.561 & 0.130 & 0.318 \\
BA(800, 5) & 3985 & 0.558 & 1.310 & 1.514 & 1.832 & 1.676 & 0.590 & 0.769 \\
BA(1600, 2) & 3197 & 0.961 & 2.209 & 2.672 & 1.854 & 1.548 & 0.072 & 0.278 \\
BA(1600, 5) & 7985 & 0.608 & 1.417 & 1.804 & 1.869 & 1.722 & 0.511 & 0.787 \\
K$_{120}$ & 7140 & -- & 0.000 & 0.000 & 0.017 & 0.017 & 0.000 & 0.000 \\
C$_{600}$ & 600 & -- & 0.000 & 0.000 & 0.333 & 0.333 & 0.000 & 0.000 \\
Torus(32,32) & 2048 & -- & 0.000 & 0.000 & 0.571 & 0.571 & 0.000 & 0.000 \\
Torus(40,40) & 3200 & -- & 0.000 & 0.000 & 0.571 & 0.571 & 0.000 & 0.000 \\
Grid(40,40) & 3120 & 0.998 & 0.250 & 0.333 & 0.673 & 0.644 & 0.000 & 0.200 \\
G(800, 0.010013) & 3234 & 0.741 & 1.174 & 1.200 & 1.650 & 1.507 & 0.260 & 0.412 \\
G(1600, 0.005003) & 6420 & 0.793 & 1.442 & 1.746 & 1.694 & 1.507 & 0.169 & 0.355 \\
RGG(800, 0.056419) & 2986 & 0.960 & 2.143 & 2.881 & 1.406 & 1.055 & 0.161 & 0.385 \\
RGG(1600, 0.039894) & 6119 & 0.963 & 2.352 & 2.848 & 1.435 & 1.067 & 0.165 & 0.371 \\
HRG(800, 5.0, 1.0, 0.0) & 59231 & 0.980 & 1.374 & 2.264 & 1.424 & 1.165 & 0.146 & 0.453 \\
HRG(800, 5.0, 1.0, 0.5) & 77050 & 0.869 & 0.680 & 2.044 & 1.877 & 1.533 & 0.003 & 0.006 \\
WS(800,10,0.05) & 4000 & 0.967 & 2.411 & 2.578 & 1.526 & 1.014 & 0.191 & 0.402 \\
WS(800,10,0.2) & 4000 & 0.964 & 2.333 & 2.492 & 1.589 & 1.355 & 0.269 & 0.498 \\
WS(1600,10,0.05) & 8000 & 0.967 & 2.399 & 2.575 & 1.524 & 1.007 & 0.191 & 0.402 \\
WS(1600,10,0.2) & 8000 & 0.967 & 2.441 & 2.617 & 1.576 & 1.295 & 0.237 & 0.467 \\
Jazz & 2742 & 0.912 & 1.713 & 2.877 & 1.761 & 1.279 & 0.150 & 0.446 \\
Karate & 78 & 0.866 & 1.548 & 2.396 & 1.414 & 1.229 & 0.249 & 0.533 \\
Power Grid & 6594 & 0.940 & 2.427 & 3.078 & 1.475 & 1.018 & 0.071 & 0.308 \\
Yeast & 1081 & 0.692 & 2.684 & 2.814 & 1.723 & 1.410 & 0.212 & 0.501 \\
Arxiv & 420877 & 0.759 & 2.888 & 3.476 & 1.972 & 1.807 & 0.442 & 0.787 \\
\bottomrule
\end{tabular}
\end{adjustbox}
\end{table}

Random geometric and Watts--Strogatz graphs exhibit narrow  $\mathfrak{c}_{\mathrm{BF}} \to \mathfrak{c}_{\mathrm{OR}}$ bands, whereas Erd\H{o}s--R\'enyi graphs show broader bands and reduced correlations ($r = 0.74$--$0.79$), with Barab\'asi--Albert graphs exhibiting a similar behavior as the parameter $m$ increases.

In hyperbolic random graphs, where the \emph{temperature} $T$ governs the softness of the connection probability%
\footnote{Lower $T$ produces nearly deterministic, distance-threshold connections while higher $T$  introduces randomness by flattening the distance dependence in the linking kernel.}%
, at low temperature high association is preserved ($r \approx 0.98$) with a tight transfer band, with the band broadening and additional slack being introduced as $T$ increases. 

Real networks mirror these trends: Jazz and Karate exhibit tight $\mathfrak c_{\rm BF}\!\to\!\mathfrak c_{\rm OR}$, with the Power Grid shows the tightest transfer, while Yeast and ArXiv display wide bandwidths and loose transfer.

\begin{table}
\centering
\caption{Summary of $\mathfrak c_{\mathrm{OR}}\!\to\!\mathfrak c_{\mathrm{BF}}$ transfer. 
Columns and notation as in Table~\ref{tab:bandwidth_bf2or}.}
\label{tab:bandwidth_or2bf}
\begin{adjustbox}{max width=\textwidth}
\footnotesize
\setlength{\tabcolsep}{4pt}
\begin{tabular}{l c c r r r r r r}
\toprule
Graph & $|E|$ & $r$ & $\mathfrak c_{\rm OR}$ range & $\mathfrak c_{\rm BF}$ range & \multicolumn{2}{c}{Width} & \multicolumn{2}{c}{Slack} \\
\cmidrule(lr){6-7} \cmidrule(lr){8-9}
& & & & & Max & 95P & 50P & 95P \\
\midrule
BA(800, 2) & 1597 & 0.947 & 2.133 & 2.631 & 3.333 & 0.736 & 0.000 & 0.225 \\
BA(800, 5) & 3985 & 0.558 & 1.310 & 1.514 & 1.807 & 1.323 & 0.205 & 0.453 \\
BA(1600, 2) & 3197 & 0.961 & 2.209 & 2.672 & 2.000 & 0.572 & 0.000 & 0.154 \\
BA(1600, 5) & 7985 & 0.608 & 1.417 & 1.804 & 1.664 & 1.124 & 0.150 & 0.367 \\
K$_{120}$ & 7140 & -- & 0.000 & 0.000 & 0.000 & 0.000 & 0.000 & 0.000 \\
C$_{600}$ & 600 & -- & 0.000 & 0.000 & 0.000 & 0.000 & 0.000 & 0.000 \\
Torus(32,32) & 2048 & -- & 0.000 & 0.000 & 1.000 & 1.000 & 0.000 & 0.000 \\
Torus(40,40) & 3200 & -- & 0.000 & 0.000 & 1.000 & 1.000 & 0.000 & 0.000 \\
Grid(40,40) & 3120 & 0.998 & 0.250 & 0.333 & 1.833 & 1.833 & 0.000 & 0.667 \\
G(800, 0.010013) & 3234 & 0.741 & 1.174 & 1.200 & 1.117 & 0.760 & 0.000 & 0.305 \\
G(1600, 0.005003) & 6420 & 0.793 & 1.442 & 1.746 & 0.964 & 0.559 & 0.000 & 0.200 \\
RGG(800, 0.056419) & 2986 & 0.960 & 2.143 & 2.881 & 2.167 & 1.765 & 0.200 & 0.498 \\
RGG(1600, 0.039894) & 6119 & 0.963 & 2.352 & 2.848 & 2.000 & 1.762 & 0.205 & 0.486 \\
HRG(800, 5.0, 1.0, 0.0) & 59231 & 0.980 & 1.374 & 2.264 & 1.997 & 1.844 & 0.114 & 0.687 \\
HRG(800, 5.0, 1.0, 0.5) & 77050 & 0.869 & 0.680 & 2.044 & 1.929 & 1.654 & 0.297 & 0.571 \\
WS(800,10,0.05) & 4000 & 0.967 & 2.411 & 2.578 & 1.839 & 1.800 & 0.231 & 0.457 \\
WS(800,10,0.2) & 4000 & 0.964 & 2.333 & 2.492 & 1.880 & 1.778 & 0.298 & 0.560 \\
WS(1600,10,0.05) & 8000 & 0.967 & 2.399 & 2.575 & 1.830 & 1.800 & 0.232 & 0.459 \\
WS(1600,10,0.2) & 8000 & 0.967 & 2.441 & 2.617 & 1.885 & 1.768 & 0.264 & 0.524 \\
Jazz & 2742 & 0.912 & 1.713 & 2.877 & 1.968 & 1.883 & 0.416 & 0.826 \\
Karate & 78 & 0.866 & 1.548 & 2.396 & 2.042 & 1.847 & 0.163 & 0.805 \\
Power Grid & 6594 & 0.940 & 2.427 & 3.078 & 3.333 & 1.263 & 0.000 & 0.333 \\
Yeast & 1081 & 0.692 & 2.684 & 2.814 & 3.333 & 1.238 & 0.023 & 0.402 \\
Arxiv & 420877 & 0.759 & 2.888 & 3.476 & 3.333 & 1.873 & 0.421 & 0.820 \\
\bottomrule
\end{tabular}
\end{adjustbox}
\end{table}

\paragraph{Empirical Plots across Models.}

Figures \ref{fig:scatters}, and \ref{fig:dists_sel} visualize these dynamics. In RGGs, the scatter forms a narrow, positive-slope band where the conditional median tracks the upper transfer curve, reflecting saturated unit-cost budgets due to geometric clustering. The right tail of the $\mathfrak c_{\rm OR}$ histogram is tightly bracketed by the upper envelope, confirming that our coupling-free bounds are highly informative for one-step geometries. In sparse ER graphs, the scatter widens and shifts negative. The conditional median aligns with the lower transfer curve over most of the support, driven by scarce triangles and predominantly unique-unique flow.

\begin{figure}
\centering
\includegraphics[width=.48\textwidth]{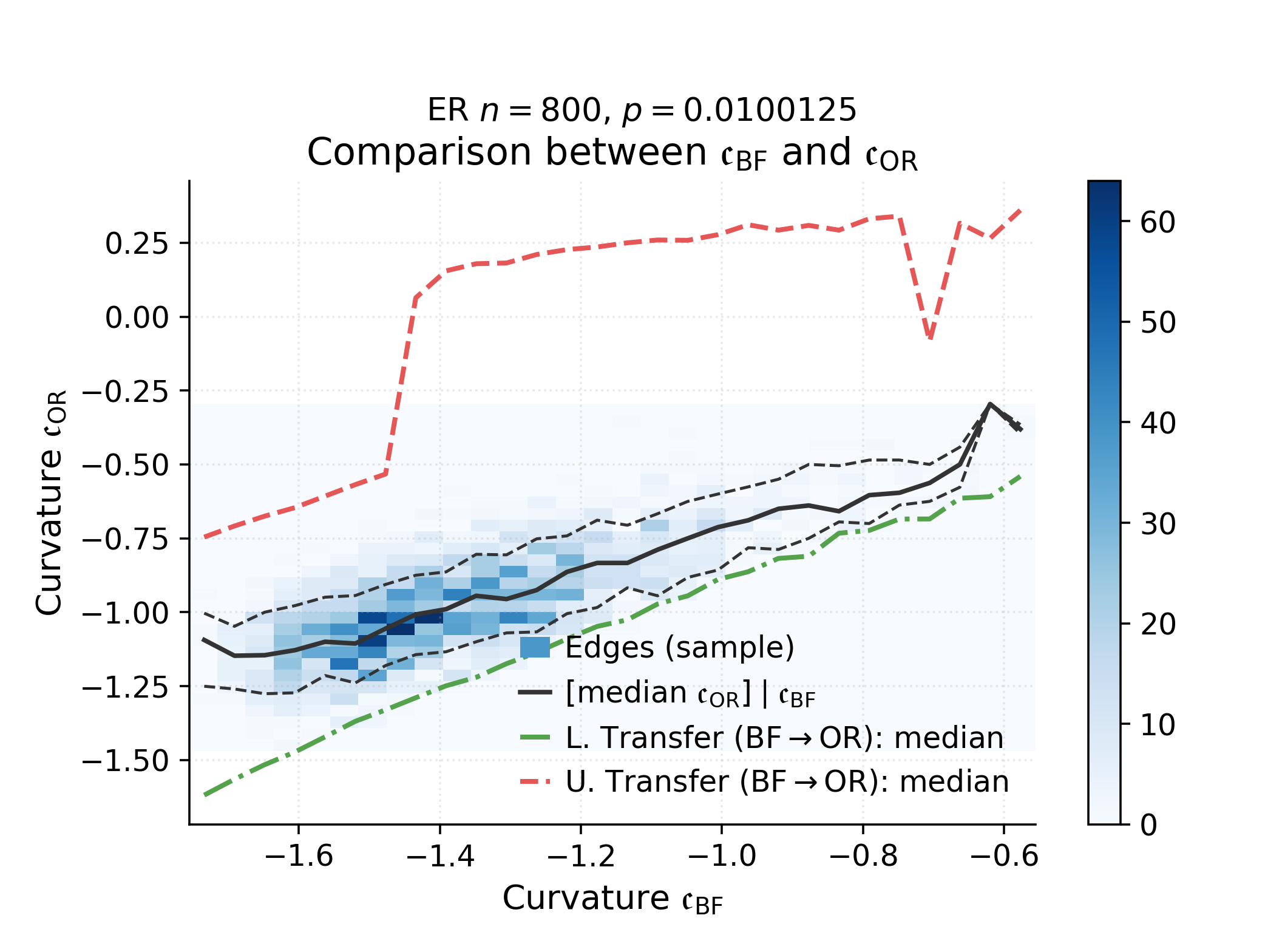}\hfill
\includegraphics[width=.48\textwidth]{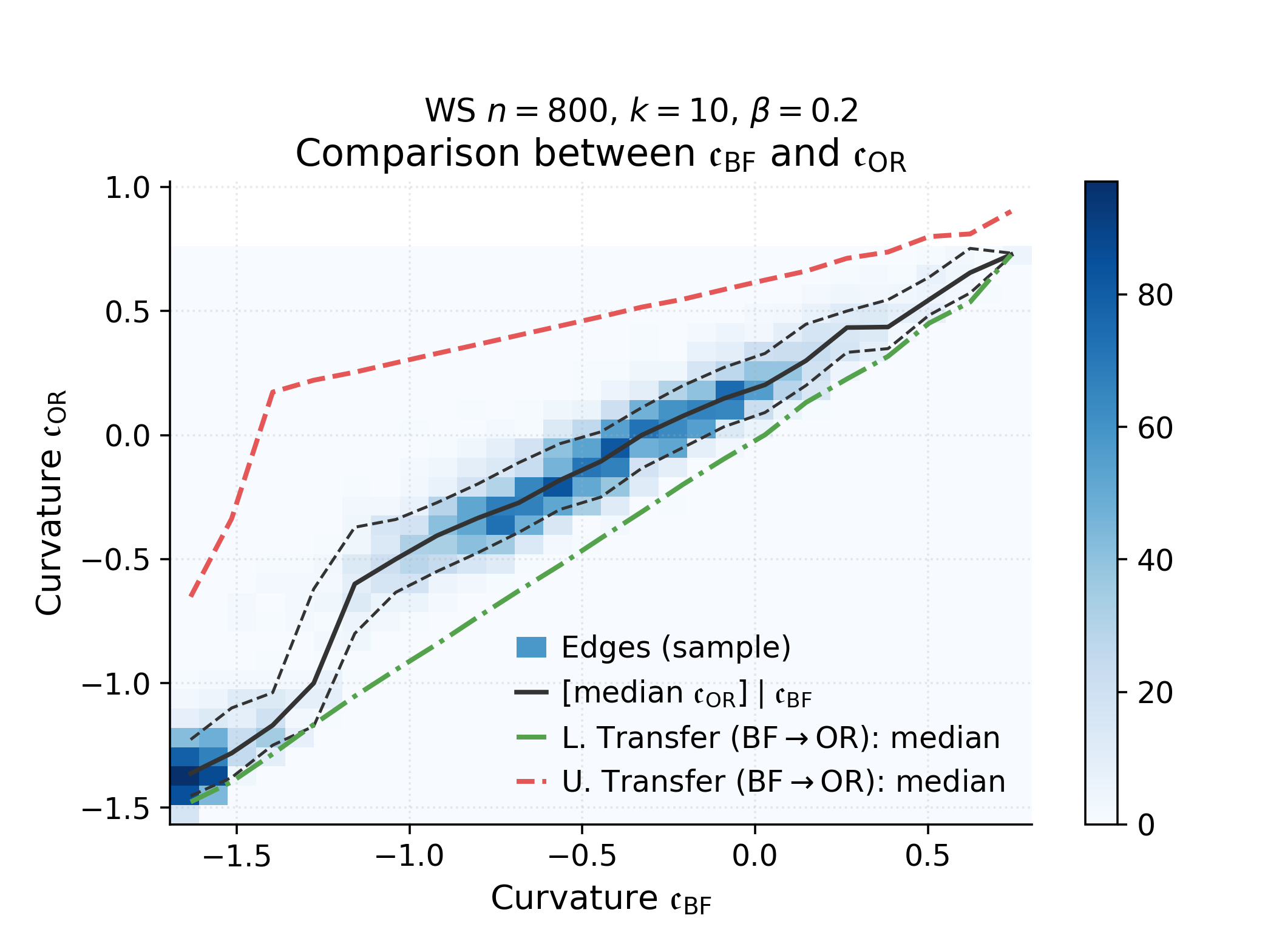}

\vspace{2pt}

\includegraphics[width=.48\textwidth]{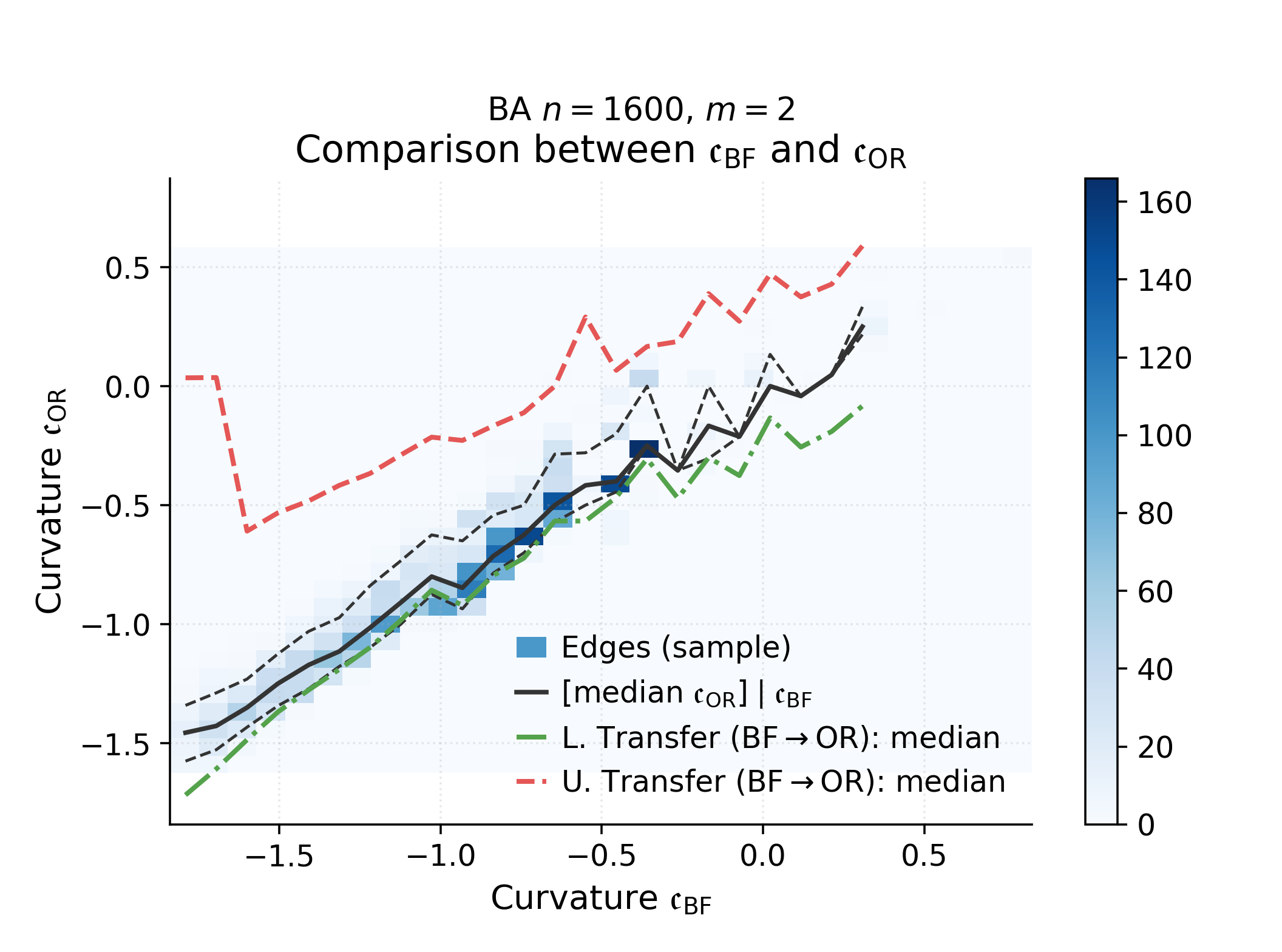}\hfill 
\includegraphics[width=.48\textwidth]{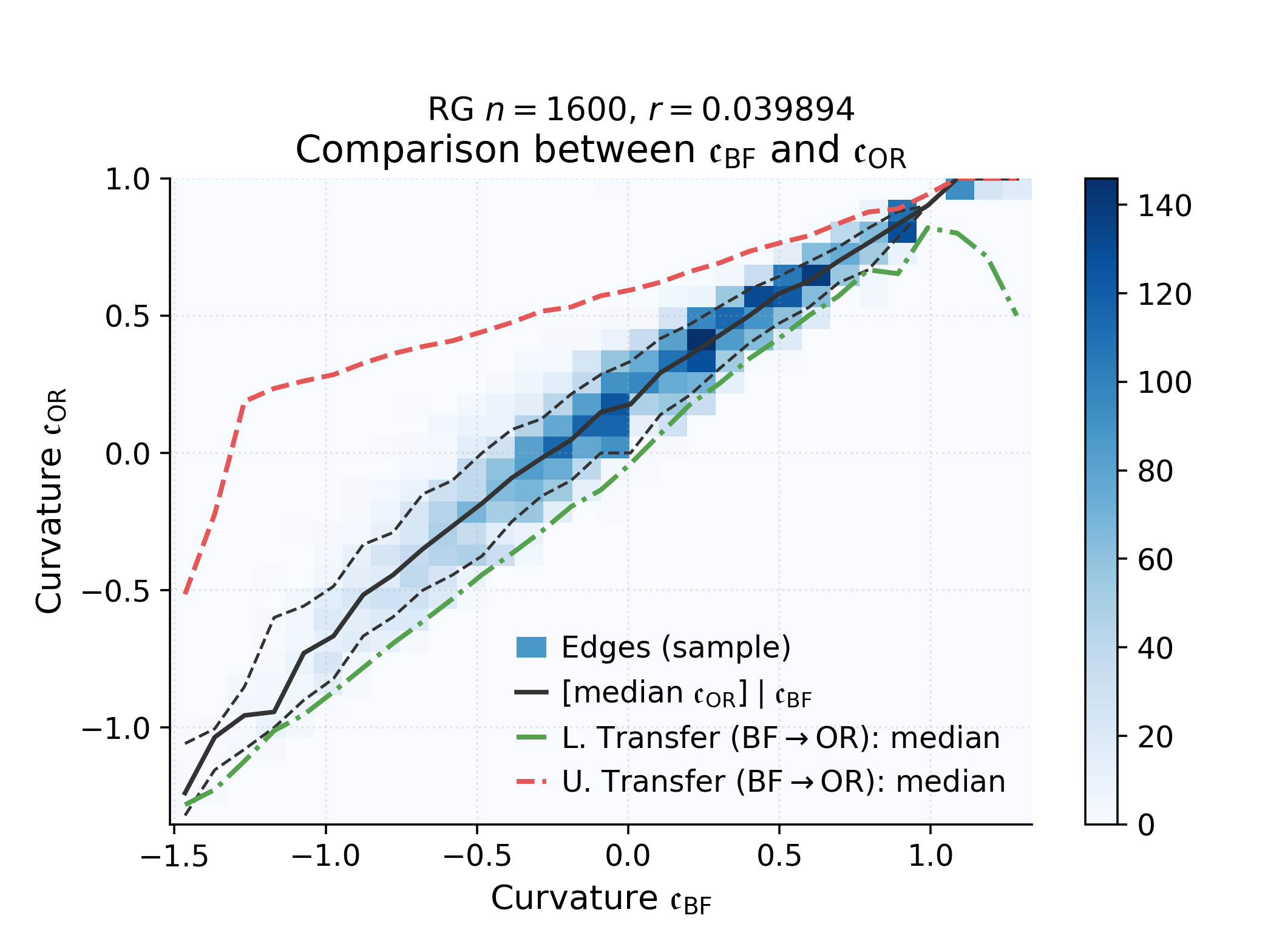}

\vspace{2pt}

\includegraphics[width=.48\textwidth]{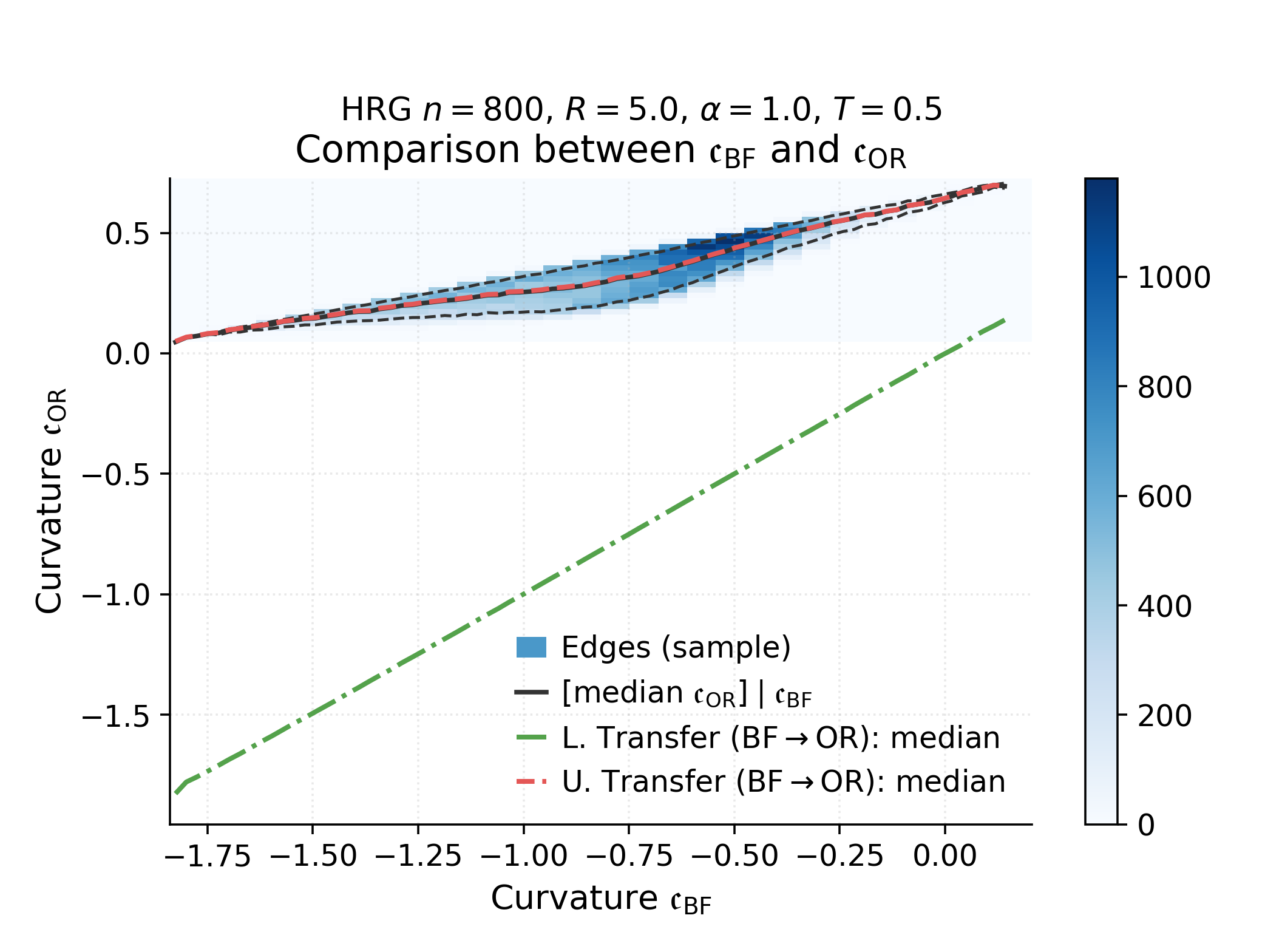}\hfill
\includegraphics[width=.48\textwidth]{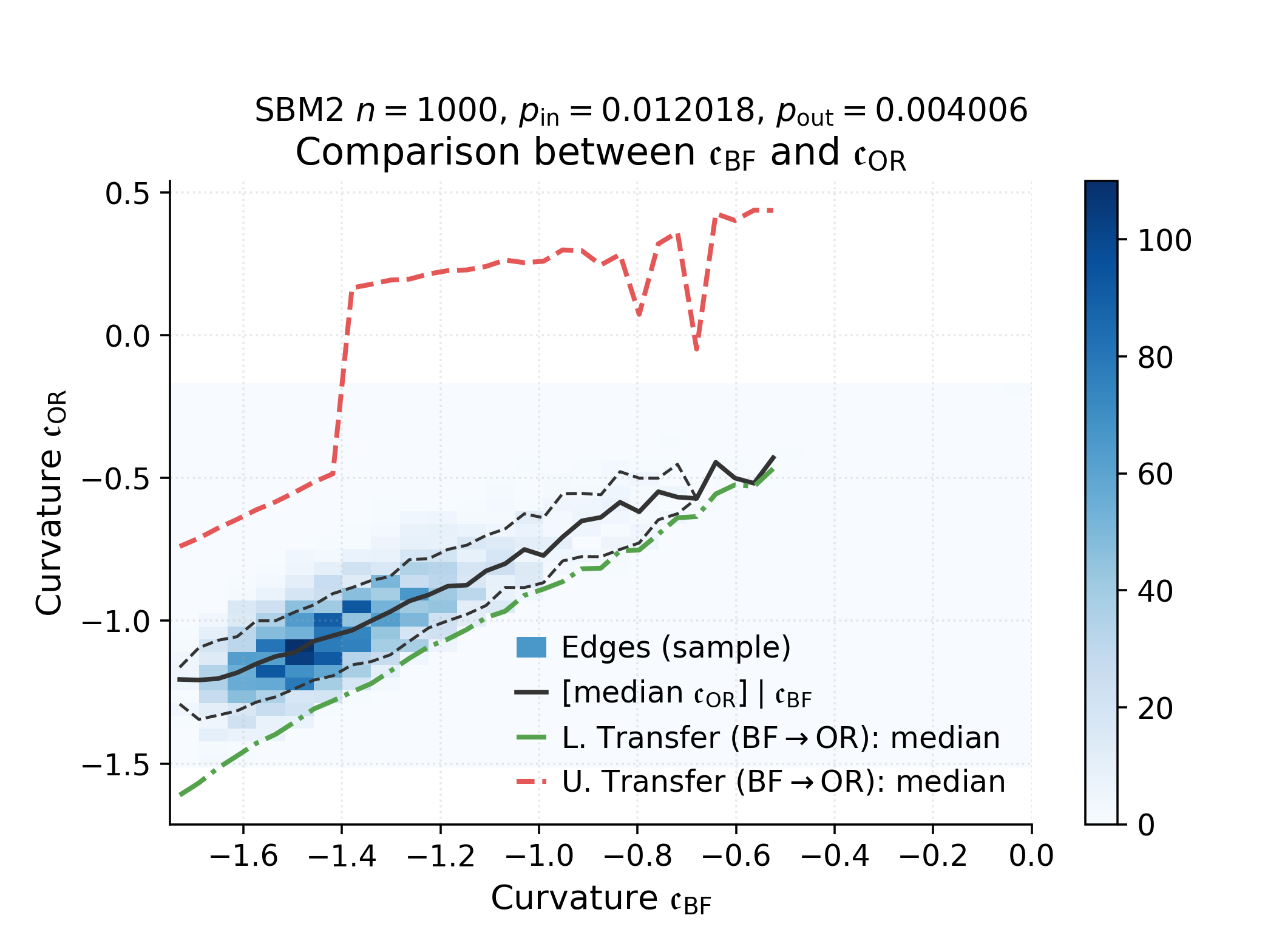}

\caption{\textbf{Representative edgewise scatter plots.}
Blue points: sampled edges; black: $\mathrm{median}[\mathfrak c_{\rm OR}\mid \mathfrak c_{\rm BF}]$; 
green (red) dash-dot: median of the \emph{lower} (\emph{upper}) transfer $\mathfrak c_{\rm BF}\mapsto\mathfrak{c}_{\rm OR}$.
Panels: ER, WS, BA, RGG, HRG, SBM (assortatitve) (left-to-right, top-to-bottom).}
\label{fig:scatters}
\end{figure}

WS graphs blend these behaviors: the conditional median tracks the lower transfer for very negative $\mathfrak c_{\rm BF}$ with a remarkable jump around the bottom quintile. BA graphs display a pronounced fan-out; hub-incident edges push $\mathfrak c_{\rm BF}$ negative (as degree penalties dominate), while $\mathfrak c_{\rm OR}$ remains comparatively less negative because optimal transport partially absorbs mass at the endpoints. Finally, in HRGs, the conditional median closely follows the upper transfer across the entire $\mathfrak c_{\rm BF}$ range, demonstrating that our analytical upper transfer is essentially tight and predictive in positively curved ambient geometries.

\begin{figure}
\centering
\includegraphics[width=.48\textwidth]{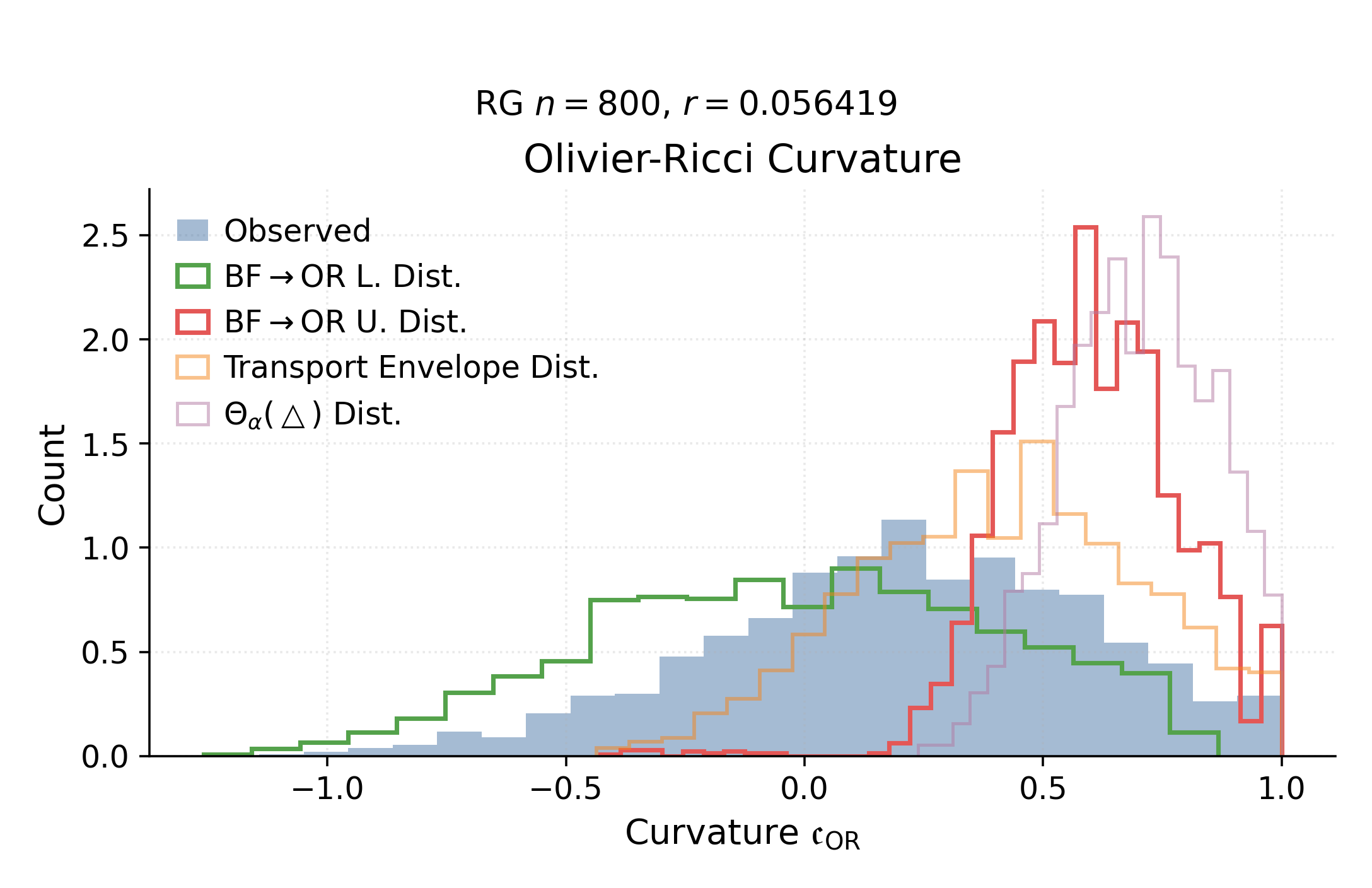}\hfill
\includegraphics[width=.48\textwidth]{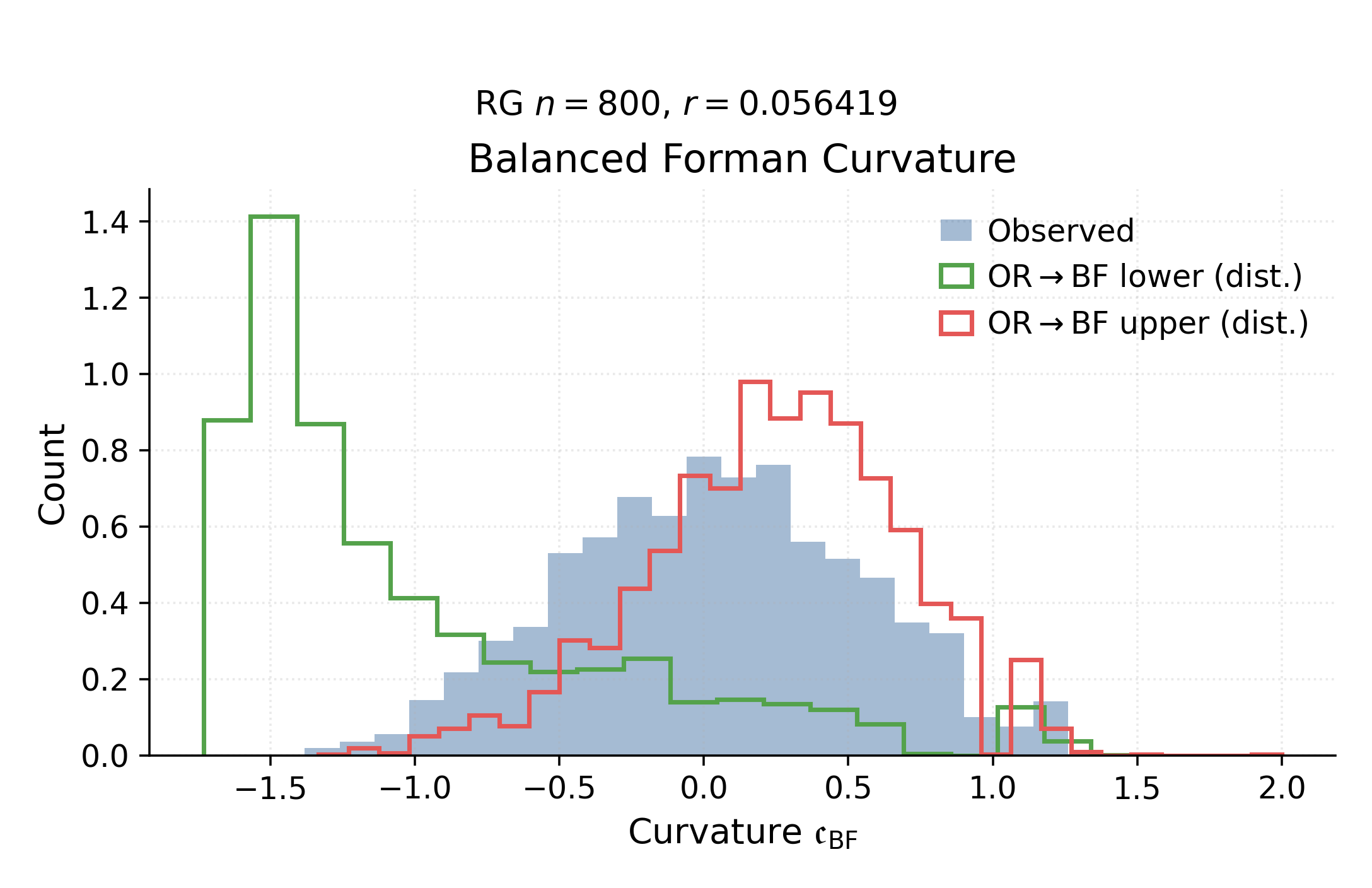}

\vspace{2pt}

\includegraphics[width=.48\textwidth]{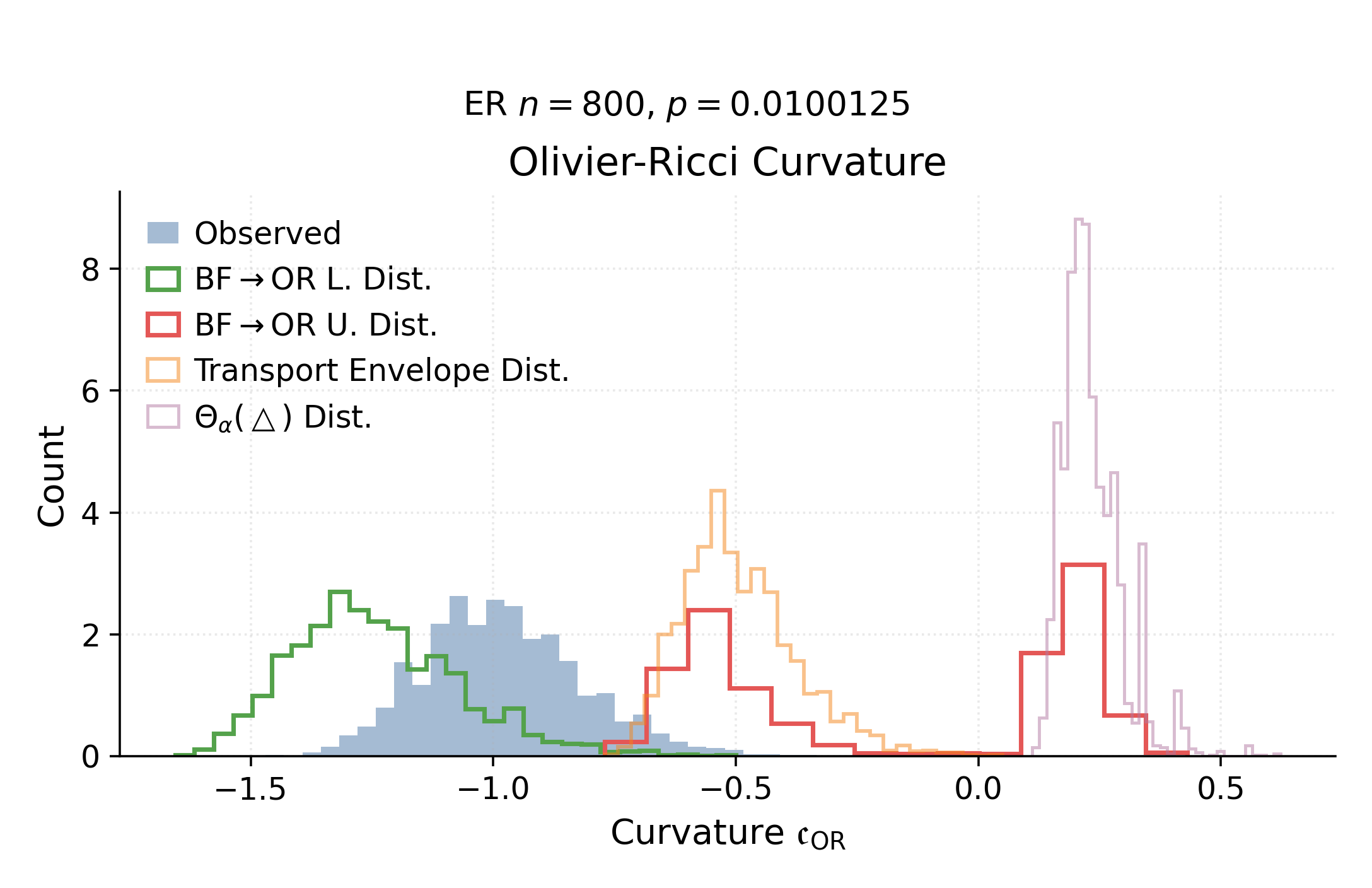}\hfill
\includegraphics[width=.48\textwidth]{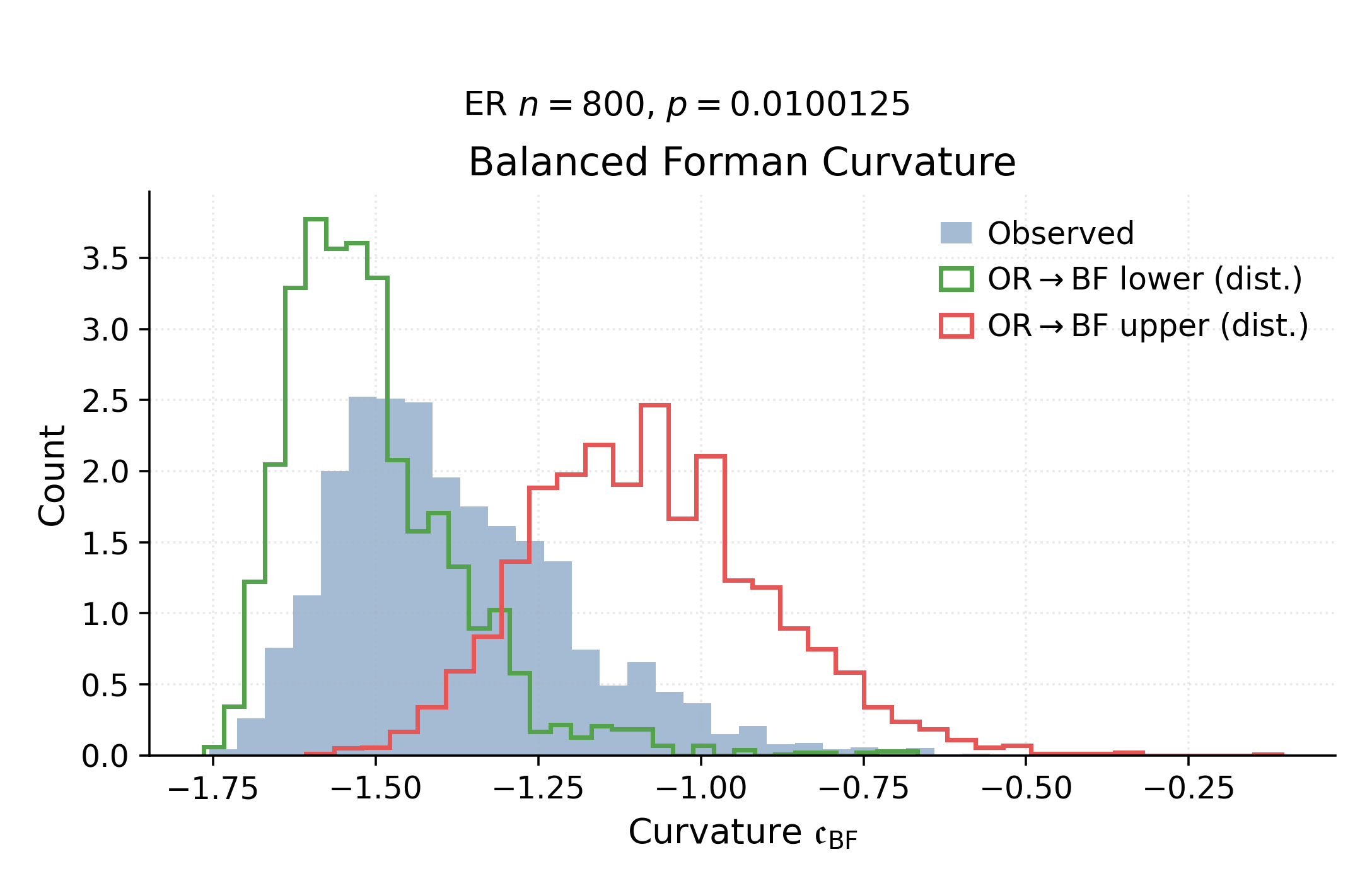}

\vspace{2pt}

\includegraphics[width=.48\textwidth]{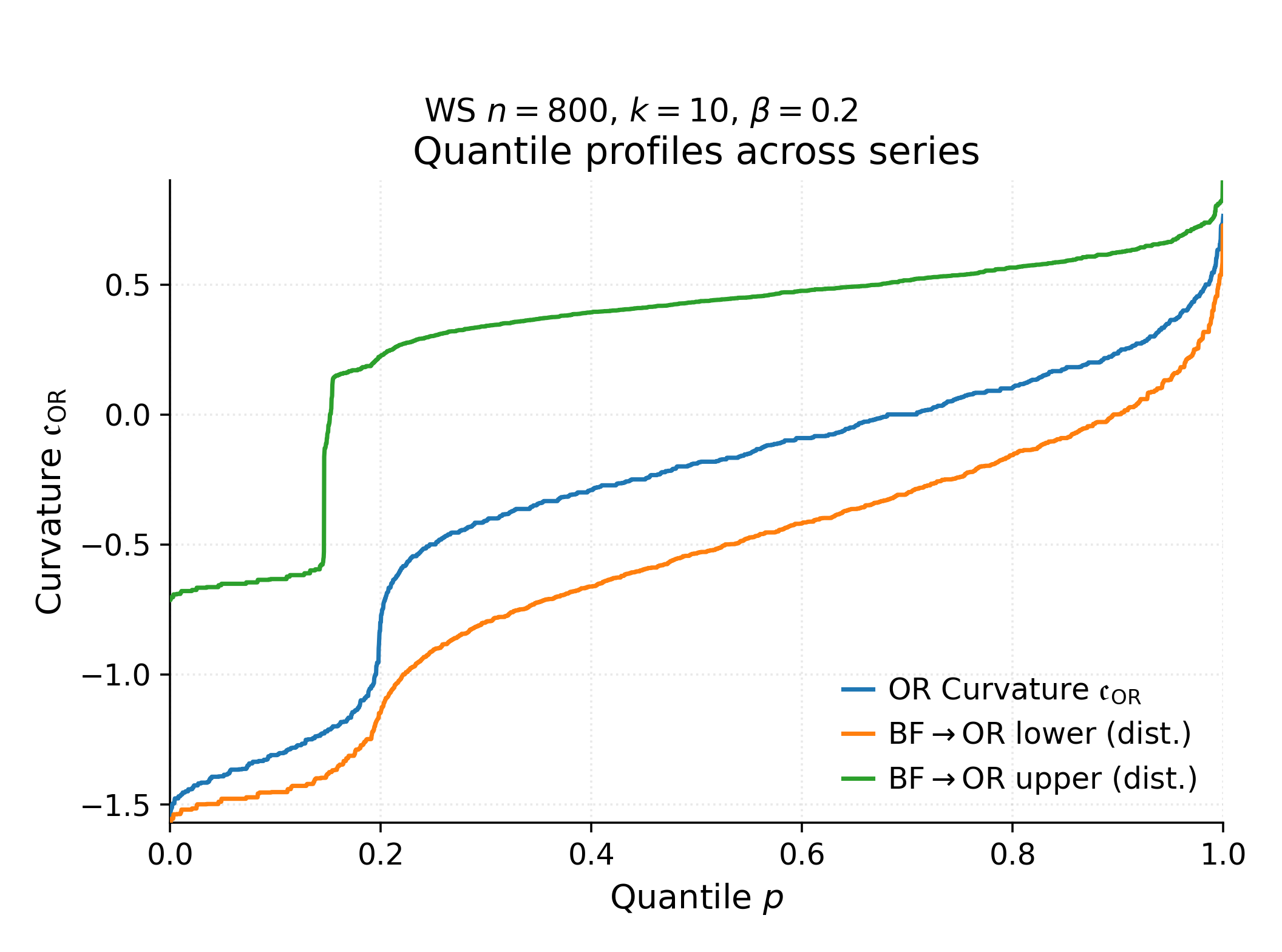}
\hfill
\includegraphics[width=.48\textwidth]{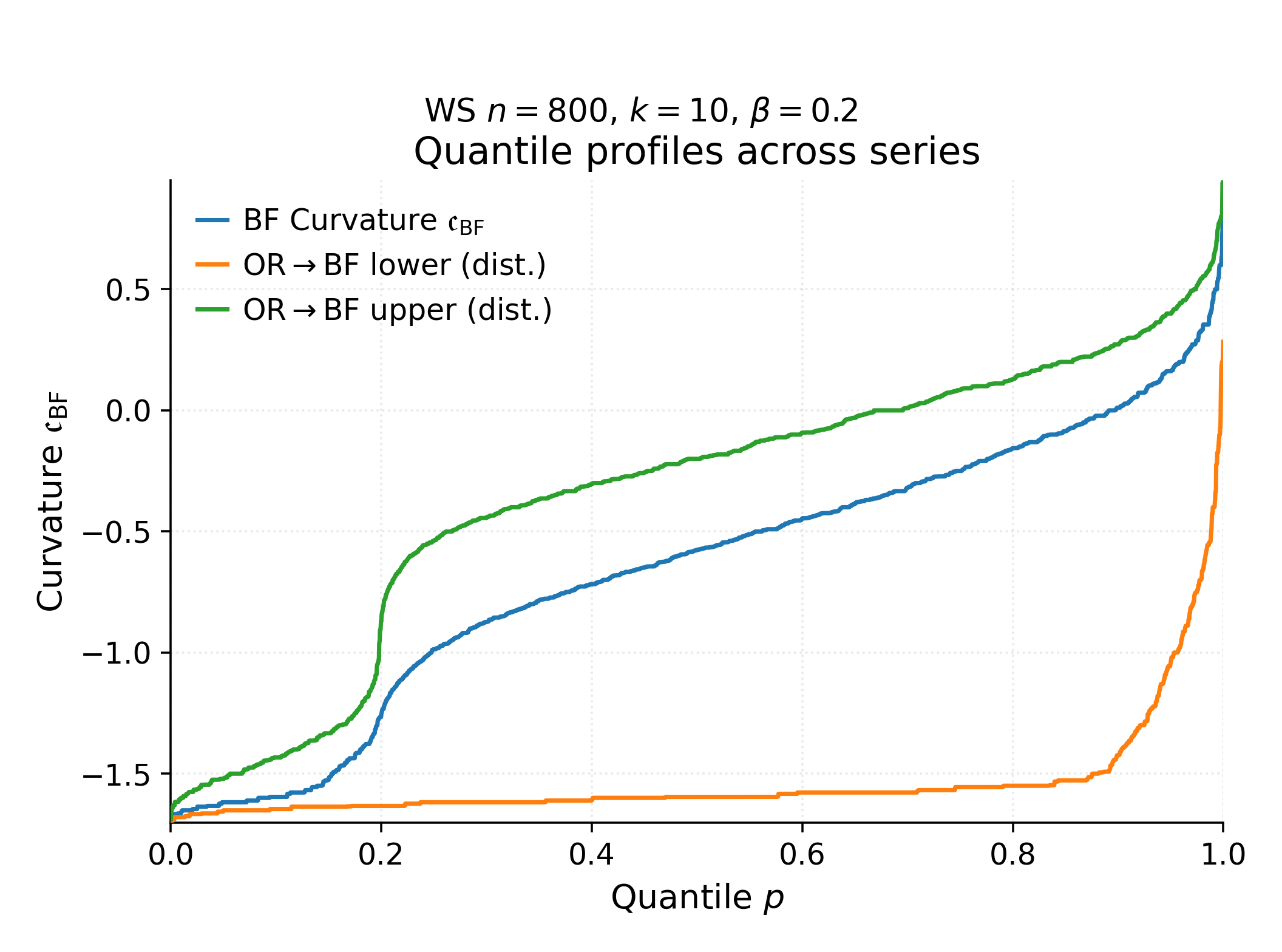}

\caption{\textbf{Distributional views.} Observed histograms (filled blue) against distributions induced by transfers (green/orange/red outlines) and by the coverage envelope (Proposition~\ref{prop:coverage-envelope-monotone}). Panels: RG, ER, WS (top-to-bottom).}
\label{fig:dists_sel}
\end{figure}

\section{Conclusion and Limitations}
\label{sec:conclusion}
We introduced two complementary mechanisms that link transport-based and combinatorial edge curvatures on general simple graphs, producing the first \emph{two-sided, edgewise transfer moduli} between BF and OR curvature:
\(
\mathfrak c_{\rm BF} \mapsto \bigl[\varphi^{(i,j)}_{\rm BF\to OR},\ \psi^{(i,j)}_{\rm BF\to OR}\bigr],
\)\newline \(
\mathfrak c_{\rm OR} \mapsto\bigl[\varphi^{(i,j)}_{\rm OR\to BF},\ \psi^{(i,j)}_{\rm OR\to BF}\bigr],
\)
with all four maps explicit, edge-local, and piecewise-affine in $2$-hop summaries.  All quantities required by our bounds are 2-hop local; with per-vertex caches, the evaluation of each modulus is bounded by the cross-edge matching cost \newline$\mathcal{O}(|E[B_{ij}]|\sqrt{|V[B_{ij}]|}) \le \mathcal{O}(\max_{v \in V} \deg(v)^{2.5})$ per edge. In particular, no Wasserstein solve nor max-flow on the full coupling is needed.

\subsection{Limitations}
Our work is heavily restricted to unweighted, undirected, and simple graphs. Consequently, extending the lazy envelope and the transfer moduli to weighted graphs (unequal edge lengths, weighted transitions), directed graphs (asymmetric neighborhoods) is non-trivial: the unit-cost partition, the diagonal saturation argument, and the matching-based lower would have to be re-derived; furthermore, all results presented are edgewise and local in nature, we do not establish direct control on vertex-based curvature quantities, coarse Ricci flows, or global geometric invariants such as hyperbolicity or isoperimetric profiles.

\subsection{Open Problems and Future Directions}
A natural next step is to formalize the asymptotic behavior of the envelopes under geometric random graphs. For $\mathfrak c_{\mathrm{OR}}$, the limiting behavior is already known \citep{vanderhoorn_ollivier_2021}, however for $\mathfrak c_{\mathrm{BF}}$, no analogous limit theorem exists yet.
Once this geometric formulation is in place, one could then examine the \emph{limiting structure of the transfer inequalities}, specifically, whether the envelopes converge to deterministic curves. Formally, this amounts to studying the limits
\(
\varphi^{(i,j)}_{\mathrm{BF}\to\mathrm{OR}}(\zeta) \longrightarrow \Phi_{\mathrm{BF}\to\mathrm{OR}}(\zeta)\) and \(\psi^{(i,j)}_{\mathrm{BF}\to\mathrm{OR}}(\zeta) \longrightarrow \Psi_{\mathrm{BF}\to\mathrm{OR}}(\zeta),
\)
and determining whether these limiting moduli collapse or persist.
Two additional straightforward generalizations are (i) a family of idleness schedules $\alpha_u=f(\varrho_u)$ and (ii) degree-biased neighbor laws (e.g., $\nu_u\propto w_{uv}$), which would broaden the applications of Proposition~\ref{prop:coverage-envelope-monotone} and Theorem~\ref{thm:OR-to-BF-lower}.

\subsection{Take-away}
Transport and combinatorial curvatures react to the same local motifs but with different objective functions. By isolating the ways in which one-step mass can move, we turned this intuition into quantitative, computable, and two-sided relations that are sharp in structured regimes and informative in heterogeneous ones. 

\section*{Code Availability}
The codebase necessary to reproduce the empirical distributions and analytic bounding bands examined in Section \ref{sec:analysis} is publicly accessible at \url{https://github.com/GiorgioMB/Curvature-Transfer-Code}.


\newpage
{\footnotesize
\bibliography{export-data}

@article{knight_sinkhorn_2008,
author = {Knight, Philip A.},
title = {The Sinkhorn–Knopp Algorithm: Convergence and Applications},
journal = {SIAM Journal on Matrix Analysis and Applications},
volume = {30},
number = {1},
pages = {261-275},
year = {2008},
doi = {10.1137/060659624},

URL = { 
    
        https://doi.org/10.1137/060659624
    
    

},
eprint = { 
    
        https://doi.org/10.1137/060659624
    
    

}
}

@inproceedings{liu_curvdrop_2023,
author = {Liu, Yang and Zhou, Chuan and Pan, Shirui and Wu, Jia and Li, Zhao and Chen, Hongyang and Zhang, Peng},
title = {CurvDrop: A Ricci Curvature Based Approach to Prevent Graph Neural Networks from Over-Smoothing and Over-Squashing},
year = {2023},
isbn = {9781450394161},
publisher = {Association for Computing Machinery},
url = {https://doi.org/10.1145/3543507.3583269},
doi = {10.1145/3543507.3583269},
booktitle = {Proceedings of the ACM Web Conference 2023},
pages = {221–230},
numpages = {10},
}

@misc{bhattacharya_exact_2020,
      title={Exact and Asymptotic Results on Coarse Ricci Curvature of Graphs}, 
      author={Bhaswar B. Bhattacharya and Sumit Mukherjee},
      year={2020},
      eprint={1306.6741},
      archivePrefix={arXiv},
      primaryClass={math.CO},
      url={https://arxiv.org/abs/1306.6741}, 
}

@misc{topping_squashing_2022,
      title={Understanding over-squashing and bottlenecks on graphs via curvature}, 
      author={Jake Topping and Francesco Di Giovanni and Benjamin Paul Chamberlain and Xiaowen Dong and Michael M. Bronstein},
      year={2022},
      eprint={2111.14522},
      archivePrefix={arXiv},
      primaryClass={stat.ML},
      url={https://arxiv.org/abs/2111.14522}, 
}

@misc{nguyen_over_2023,
      title={Revisiting Over-smoothing and Over-squashing Using Ollivier-Ricci Curvature}, 
      author={Khang Nguyen and Hieu Nong and Vinh Nguyen and Nhat Ho and Stanley Osher and Tan Nguyen},
      year={2023},
      eprint={2211.15779},
      archivePrefix={arXiv},
      primaryClass={cs.LG},
      url={https://arxiv.org/abs/2211.15779}, 
}

@misc{weber_forman_2016,
      title={Characterizing Complex Networks with Forman-Ricci Curvature and Associated Geometric Flows}, 
      author={Melanie Weber and Emil Saucan and Jürgen Jost},
      year={2016},
      eprint={1607.08654},
      archivePrefix={arXiv},
      primaryClass={cs.DM},
      url={https://arxiv.org/abs/1607.08654}, 
}

@book{west_introduction_2001,
  title        = {Introduction to Graph Theory},
  author       = {West, Douglas B.},
  edition      = {2},
  year         = {2001},
  publisher    = {Prentice Hall},
  address      = {Upper Saddle River, NJ},
  isbn         = {978-0130144003},
}

@article{steger_generating_1999, 
title={Generating Random Regular Graphs Quickly},
volume={8}, DOI={10.1017/S0963548399003867}, 
number={4}, 
journal={Combinatorics, Probability and Computing}, 
author={Steger, A. and Wormald, N. C.}, 
year={1999}, 
pages={377–396}
}

@article{holland_stochastic_1983,
title = {Stochastic blockmodels: First steps},
journal = {Social Networks},
volume = {5},
number = {2},
pages = {109-137},
year = {1983},
issn = {0378-8733},
doi = {https://doi.org/10.1016/0378-8733(83)90021-7},
url = {https://www.sciencedirect.com/science/article/pii/0378873383900217},
author = {Paul W. Holland and Kathryn Blackmond Laskey and Samuel Leinhardt}
}

@article{gehrke_arxiv_2003, 
    author = {Gehrke, Johannes and Ginsparg, Paul and Kleinberg, Jon}, 
    title = {Overview of the 2003 KDD Cup}, 
    year = {2003}, 
    issue_date = {December 2003}, 
    publisher = {Association for Computing Machinery}, 
    address = {New York, NY, USA}, 
    volume = {5}, 
    number = {2}, 
    issn = {1931-0145}, 
    url = {https://doi.org/10.1145/980972.980992}, 
    doi = {10.1145/980972.980992}, 
    journal = {SIGKDD Explor. Newsl.}, 
    month = dec, 
    pages = {149–151}, 
    numpages = {3} }

@article{milo_motifs_2002,
author = {R. Milo  and S. Shen-Orr  and S. Itzkovitz  and N. Kashtan  and D. Chklovskii  and U. Alon },
title = {Network Motifs: Simple Building Blocks of Complex Networks},
journal = {Science},
volume = {298},
number = {5594},
pages = {824-827},
year = {2002},
doi = {10.1126/science.298.5594.824},
URL = {https://www.science.org/doi/abs/10.1126/science.298.5594.824},
eprint = {https://www.science.org/doi/pdf/10.1126/science.298.5594.824}}

@article{barabasi_emergence_1999,
   title={Emergence of Scaling in Random Networks},
   volume={286},
   ISSN={1095-9203},
   url={http://dx.doi.org/10.1126/science.286.5439.509},
   DOI={10.1126/science.286.5439.509},
   number={5439},
   journal={Science},
   publisher={American Association for the Advancement of Science (AAAS)},
   author={Barabási, Albert-László and Albert, Réka},
   year={1999},
   month=oct, pages={509–512} }

@article{gleiser_community_2003,
   title={COMMUNITY STRUCTURE IN JAZZ},
   volume={06},
   ISSN={1793-6802},
   url={http://dx.doi.org/10.1142/S0219525903001067},
   DOI={10.1142/s0219525903001067},
   number={04},
   journal={Advances in Complex Systems},
   publisher={World Scientific Pub Co Pte Lt},
   author={Gleiser, Pablo M. and Danon, Leon},
   year={2003},
   month=dec, pages={565–573} }

@article{zachary_karate_1977,
  title        = {An information flow model for conflict and fission in small groups},
  author       = {Zachary, Wayne W.},
  journal      = {Journal of Anthropological Research},
  volume       = {33},
  number       = {4},
  pages        = {452--473},
  year         = {1977}
}

@article{krioukov_hyperbolic_2010,
   title={Hyperbolic geometry of complex networks},
   volume={82},
   ISSN={1550-2376},
   url={http://dx.doi.org/10.1103/PhysRevE.82.036106},
   DOI={10.1103/physreve.82.036106},
   number={3},
   journal={Physical Review E},
   publisher={American Physical Society (APS)},
   author={Krioukov, Dmitri and Papadopoulos, Fragkiskos and Kitsak, Maksim and Vahdat, Amin and Boguñá, Marián},
   year={2010},
   month=sep }

@book{penrose_random_2003,
  title        = {Random Geometric Graphs},
  author       = {Penrose, Mathew},
  series       = {Oxford Studies in Probability},
  publisher    = {Oxford University Press},
  year         = {2003}
}

@Article{watts_collective_1998,
author={Watts, Duncan J.
and Strogatz, Steven H.},
title={Collective dynamics of `small-world' networks},
journal={Nature},
year={1998},
month={Jun},
day={01},
volume={393},
number={6684},
pages={440-442},
issn={1476-4687},
doi={10.1038/30918},
url={https://doi.org/10.1038/30918}
}

@article{erdos_random_1959,
  title        = {On random graphs I},
  author       = {Erd{\H{o}}s, Paul and R{\'e}nyi, Alfr{\'e}d},
  journal      = {Publicationes Mathematicae},
  volume       = {6},
  pages        = {290--297},
  year         = {1959}
}

@article{vanderhoorn_ollivier_2021,
   title={Ollivier-Ricci curvature convergence in random geometric graphs},
   volume={3},
   ISSN={2643-1564},
   url={http://dx.doi.org/10.1103/PhysRevResearch.3.013211},
   DOI={10.1103/physrevresearch.3.013211},
   number={1},
   journal={Physical Review Research},
   publisher={American Physical Society (APS)},
   author={van der Hoorn, Pim and Cunningham, William J. and Lippner, Gabor and Trugenberger, Carlo and Krioukov, Dmitri},
   year={2021},
   month=mar }

@article{tee_enhanced_2021,
   title={Enhanced Forman curvature and its relation to Ollivier curvature},
   volume={133},
   ISSN={1286-4854},
   url={http://dx.doi.org/10.1209/0295-5075/133/60006},
   DOI={10.1209/0295-5075/133/60006},
   number={6},
   journal={Europhysics Letters},
   publisher={IOP Publishing},
   author={Tee, Philip and Trugenberger, C. A.},
   year={2021},
   month=mar, pages={60006} }

@article{saucan_discrete_2018,
  title={Discrete Curvatures and Network Analysis},
  author={Saucan, Emil and Samal, Areejit and Weber, Melanie and Jost, J{\"u}rgen},
  journal={MATCH},
  volume={80},
  number={3},
  pages={605--622},
  year={2018}
}

@Article{jost_clustering_2014,
author={Jost, J{\"u}rgen
and Liu, Shiping},
title={Ollivier's Ricci Curvature, Local Clustering and Curvature-Dimension Inequalities on Graphs},
journal={Discrete {\&} Computational Geometry},
year={2014},
month={Mar},
day={01},
volume={51},
number={2},
pages={300-322},
issn={1432-0444},
doi={10.1007/s00454-013-9558-1},
url={https://doi.org/10.1007/s00454-013-9558-1}
}

@article{sreejith_systematic_2017,
   title={Systematic evaluation of a new combinatorial curvature for complex networks},
   volume={101},
   ISSN={0960-0779},
   url={http://dx.doi.org/10.1016/j.chaos.2017.05.021},
   DOI={10.1016/j.chaos.2017.05.021},
   journal={Chaos, Solitons and Fractals},
   publisher={Elsevier BV},
   author={Sreejith, R.P. and Jost, Jürgen and Saucan, Emil and Samal, Areejit},
   year={2017},
   month=aug, pages={50–67} }

@article{sreejith_forman_2016,
   title={Forman curvature for complex networks},
   volume={2016},
   ISSN={1742-5468},
   url={http://dx.doi.org/10.1088/1742-5468/2016/06/063206},
   DOI={10.1088/1742-5468/2016/06/063206},
   number={6},
   journal={Journal of Statistical Mechanics: Theory and Experiment},
   publisher={IOP Publishing},
   author={Sreejith, R P and Mohanraj, Karthikeyan and Jost, Jürgen and Saucan, Emil and Samal, Areejit},
   year={2016},
   month=jun, pages={063206} }

@Article{samal_comparative_2018,
author={Samal, Areejit
and Sreejith, R. P.
and Gu, Jiao
and Liu, Shiping
and Saucan, Emil
and Jost, J{\"u}rgen},
title={Comparative analysis of two discretizations of Ricci curvature for complex networks},
journal={Scientific Reports},
year={2018},
month={Jun},
day={05},
volume={8},
number={1},
pages={8650},
issn={2045-2322},
doi={10.1038/s41598-018-27001-3},
url={https://doi.org/10.1038/s41598-018-27001-3}
}

@article{lin_curvature_2011,
author = {Lin, Yong and Lu, Linyuan and Yau, Shing-Tung},
year = {2011},
month = {12},
pages = {},
title = {Ricci curvature of graphs},
volume = {63},
journal = {Tohoku Mathematical Journal - TOHOKU MATH J},
doi = {10.2748/tmj/1325886283}
}

@article{ollivier_ricci_2009,
title = {Ricci curvature of Markov chains on metric spaces},
journal = {Journal of Functional Analysis},
volume = {256},
number = {3},
pages = {810-864},
year = {2009},
issn = {0022-1236},
doi = {https://doi.org/10.1016/j.jfa.2008.11.001},
url = {https://www.sciencedirect.com/science/article/pii/S002212360800493X},
author = {Yann Ollivier}
}

@article{forman_method_2003,
author={Forman, Robin},
title={Bochner's Method for Cell Complexes and Combinatorial Ricci Curvature},
journal={Discrete {\&} Computational Geometry},
year={2003},
month={Feb},
day={01},
volume={29},
number={3},
pages={323-374},
issn={1432-0444},
doi={10.1007/s00454-002-0743-x},
url={https://doi.org/10.1007/s00454-002-0743-x}
}
}

\newpage

\appendix
\section{Synthetic Graph Generators}
\label{appendix:graph-gen}
\begin{description}
\item[]\textsf{Erd\H{o}s--R\'enyi} ${\rm G}(n,p)$

\emph{Rule:} For every unordered pair $\{i,j\}$ with $i<j$,
include edge $(i,j)$ independently with probability $p$.
Equivalently, for $i<j$ sample $A_{ij}\sim \mathrm{Bernoulli}(p)$, set $A_{ji}=A_{ij}$ and $A_{ii}=0$; return $G=(V,E)$ with $E=\{\{i,j\}:A_{ij}=1\}$.

\emph{Params:} $n\in\mathbb{N}$ nodes; $p\in[0,1]$ edge probability.

\item[]\textsf{Barab\'asi--Albert preferential attachment} $\mathrm{BA}(n,m)$

\emph{Pseudo-algorithm:}
\begin{enumerate}
\item Initialize $G_m$ as the clique $K_m$ on $V_m=\{1,\dots,m\}$.
\item For $t=m+1,\dots,n$:
  \begin{enumerate}
    \item Add new vertex $t$.
    \item Choose $m$ \emph{distinct} endpoints $u_1,\dots,u_m$ from $V_{t-1}$ \emph{without replacement},
          with sampling weights proportional to current degrees:
          \(
          \mathbb{P}(u=u^\star)\propto \deg_{G_{t-1}}(u^\star).
          \)
    \item Add edges $(t,u_s)$ for $s=1,\dots,m$ (skip duplicates/self-loops by re-drawing).
  \end{enumerate}
\item Return $G_n$.
\end{enumerate}
\emph{Params:} $n$ total nodes; $m\in\{1,\dots,n-1\}$ attachments per arriving node (controls mean degree and tail heaviness of degree distribution). The initial seed can be any connected $m$-vertex graph; using $K_m$ is standard.

\item[]\textsf{Watts--Strogatz small-world} $\mathrm{WS}(n,k,\beta)$

\emph{Pseudo-algorithm:}
\begin{enumerate}
\item Start from a ring lattice: connect each $i\in[n]$ to its $k/2$ nearest neighbors on each side modulo $n$ (assume $k$ is even and $2\le k\le n-1$).
\item For each directed “clockwise” edge $(i,i+d)$ with $d\in\{1,\dots,k/2\}$:
  with probability $\beta$, \emph{rewire} its endpoint to a new node $j$ drawn as
  \(
  j\sim \mathrm{Unif}\left([n]\setminus\left(\{i\}\cup N(i)\right)\right),
  \)
  otherwise keep the edge. Maintain simplicity (no multi-edges, no self-loops).
\item Return the undirected version of the rewired graph.
\end{enumerate}
\emph{Params:} $n$ nodes; $k$ initial lattice degree (even); $\beta\in[0,1]$ rewiring rate
($\beta=0$ gives a regular ring, $\beta=1$ approaches a random graph with short path lengths).

\item[]\textsf{Random geometric graph} $\mathrm{RGG}(n,r)$

\emph{Rule:} Sample $x_1,\dots,x_n \overset{\text{i.i.d.}}{\sim}\mathrm{Unif}([0,1]^2)$.
Connect $(i,j)$ iff $\|x_i-x_j\|_2 \le r$\footnote{Euclidean metric in the unit square.}.

\emph{Params:} $n$ points; $r>0$ connection radius (controls density and clustering). (\emph{Variant:} using the torus metric reduces boundary effects.)

\item[]\textsf{Random $d$-regular} $\mathrm{Reg}(n,d)$

 \emph{Pseudo-algorithm:}
 \begin{enumerate}
 \item Require $0\le d<n$ and $nd$ even.
 \item Form a multiset of \emph{stubs}: a list $S$ containing $d$ copies of each $v\in[n]$; uniformly shuffle $S$.
 \item Initialize $E\gets\varnothing$. While $S\neq\varnothing$:
   \begin{enumerate}
     \item Pop one stub $u$ from $S$.
     \item Scan $S$ for a partner $v\neq u$ with $\{u,v\}\notin E$.
     \item If no such $v$ exists, \emph{restart}: discard $E$, rebuild and reshuffle $S$.
     \item Otherwise remove that $v$ from $S$ and set $E\gets E\cup\{\{u,v\}\}$.
   \end{enumerate}
 \item Return the simple graph $G=([n],E)$.
 \end{enumerate}
 \emph{Params:} $n$ nodes; $d$ target degree.

\item[]\textsf{Hyperbolic random graph} $\mathrm{HRG}(n,R,\alpha,T)$

\emph{Sampling:} Draw angles $\theta_i\sim\mathrm{Unif}[0,2\pi)$ and radii $r_i$ on $[0,R]$ with density
\[
f(r)=\frac{\alpha\sinh(\alpha r)}{\cosh(\alpha R)-1}\,.
\]
This yields a target power-law degree exponent $\gamma=2\alpha+1$.

\emph{Distance:} In curvature $-1$, the hyperbolic distance between nodes $i,j$ is
\[
\cosh d_{ij}=\cosh r_i\,\cosh r_j-\sinh r_i\,\sinh r_j\,\cos(\Delta\theta_{ij}),
\]
with $\Delta\theta_{ij}=|\theta_i-\theta_j|$ wrapped to $[0,\pi]$.

\emph{Connection:} Independently add edge $(i,j)$ with probability
\[
p_{ij}=\left(1+\exp\left(\frac{d_{ij}-R}{2T}\right)\right)^{-1}.
\]
The $T\!\to\!0$ limit gives the hard rule $d_{ij}\le R$.

\emph{Params:} $n$ nodes; $R>0$ disk radius (controls expected degree);
$\alpha>0$ tail parameter ($\gamma=2\alpha+1$); $T\ge 0$ temperature (higher $T$ lowers clustering and increases long-range links).

\item[]\textsf{Stochastic Block Model} $\mathrm{SBM}(\mathbf n,p_{\mathrm{in}},p_{\mathrm{out}})$

 \emph{Pseudo-algorithm:}
 \begin{enumerate}
 \item Let $\mathbf n=(n_1,\dots,n_k)$ be block sizes and index vertices contiguously by blocks (offsets $o_c=\sum_{a<c}n_a$).
 \item For every unordered pair $\{i,j\}$ with $i<j$:
   \begin{enumerate}
     \item Let $c(i)$ and $c(j)$ be the blocks of $i$ and $j$.
     \item Set $p \leftarrow p_{\mathrm{in}}$ if $c(i)=c(j)$, else $p \leftarrow p_{\mathrm{out}}$.
     \item Include edge $(i,j)$ independently with probability $p$ (set $A_{ij}=A_{ji}$; $A_{ii}=0$).
   \end{enumerate}
 \item Return the undirected simple graph $G$.
 \end{enumerate}
 \emph{Params:} $\mathbf n$; $p_{\mathrm{in}},p_{\mathrm{out}}\in[0,1]$ (assortative if $p_{\mathrm{in}}>p_{\mathrm{out}}$, disassortative otherwise). 

\item[]\textsf{Cycle} ${\rm C}_n$

\emph{Rule:} Vertices $1,\dots,n$ with edges $(i,i+1)$ for $i=1,\dots,n$ interpreting $n+1\equiv 1$.

\emph{Param:} $n\ge 3$ nodes (2-regular).

\item[]\textsf{Grid} $\mathrm{Grid}(L_x,L_y)$

\emph{Rule:} Vertex set $\{1,\dots,L_x\}\times \{1,\dots,L_y\}$ with edges between lattice neighbors at $\ell_1$ distance $1$ (von Neumann neighborhood); no wrap-around.

\emph{Params:} $L_x,L_y\in\mathbb{N}$ (controls size/aspect).

\item[]\textsf{Toroidal grid} $\mathrm{Torus}(L_x,L_y)$

 \emph{Pseudo-algorithm:}
 \begin{enumerate}
 \item Vertex set $V=\{1,\dots,L_x\}\times\{1,\dots,L_y\}$.
 \item For each $(i,j)\in V$, add edges
   \(
   \bigl((i \bmod L_x)+1,\, j\bigr)\) and \(\bigl(i,\, (j \bmod L_y)+1\bigr).
   \)
 \item Return $G=(V,E)$.
 \end{enumerate}
 \emph{Params:} $L_x,L_y\in\mathbb{N}$ (4-neighbor connectivity with periodic boundary conditions).

\item[]\textsf{$d$-ary tree} $\mathrm{Tree}(d,h)$

\emph{Pseudo-algorithm:}
\begin{enumerate}
\item Create root $\rho$ at level $0$.
\item For $\ell=0,\dots,h-1$, give every node at level $\ell$ exactly $d$ children at level $\ell{+}1$ and connect parent to children.
\end{enumerate}
\emph{Params:} $d\ge 2$ branching factor; height $h\ge 1$ (levels $0,\dots,   h$).
Total nodes $=\tfrac{d^{h+1}-1}{d-1}$ (full tree).

\item[]\textsf{Complete} $K_n$

\emph{Rule:} Include every edge between distinct vertices; i.e., $E=\binom{[n]}{2}$.

\emph{Param:} $n$ nodes (maximally dense).
\end{description}

\section{Deferred Proofs}
\label{appendix:proofs}

\subsection{Proof of Lemma \ref{lem:C4-degree-only}}
Let $e=(u,v)\in E$, then 
\( 
\xi_u(u,v)\subseteq \mathcal N(u)\setminus\bigl(\mathcal N(v)\cup\{v\}\bigr),\) and \(
\xi_v(u,v)\subseteq \mathcal N(v)\setminus\bigl(\mathcal N(u)\cup\{u\}\bigr), 
\)
so 
\( 
|\xi_u(u,v)|\ \le\ \varrho_u-1-\triangle(u,v), 
\) and \(
|\xi_v(u,v)|\ \le\ \varrho_v-1-\triangle(u,v), 
\)
with $\triangle(u,v)=|\mathcal N(u)\cap\mathcal N(v)|$. Summing, \begin{align} 
\label{eq:Xi-degree-triangle-bound} 
\Xi_{uv}\ =\ |\xi_u|+|\xi_v| \ 
&\le\ \left(\varrho_u-1-\triangle(u,v)\right)+\left(\varrho_v-1-\triangle(u,v)\right)\\ 
\label{eq:Xi-degree-bound} 
\ &\le\ \varrho_u+\varrho_v-2. 
\end{align} 
If $\Xi_{uv}=0$, then the desired inequality \eqref{eq:c4-edge-degree} is trivial since the LHS vanishes. If $\Xi_{uv}>0$, then at least one $4$-cycle across $e$ exists, which implies $\varpi_{\max}(u,v)\ge 1$ by definition of $\varpi_{\max}$; hence 
\( 
\sho_{\max}(u,v) = \varpi_{\max}(u,v)\,\max\{\varrho_u,\varrho_v\} \ge \max\{\varrho_u,\varrho_v\}. 
\) 
Combining this with \eqref{eq:Xi-degree-bound} yields the per-edge bound \eqref{eq:c4-edge-degree}. Taking the maximum over all edges gives the first inequality in \eqref{eq:c4-global-degree}. For the second, note that for any $a,b\ge 1$, 
\[ 
\frac{a+b-2}{\max\{a,b\}}\ \le\ \frac{2\max\{a,b\}-2}{\max\{a,b\}}\ =\ 2-\frac{2}{\max\{a,b\}}, 
\] hence 
\[ 
\mathfrak C_4(G):=\max_{(u,v)\in E}\frac{\Xi_{uv}}{\sho_{\max}(u,v)} \le\max_{(u,v)\in E} \frac{\varrho_u+\varrho_v-2}{\max\{\varrho_u,\varrho_v\}} \le\ 2-\frac{2}{\displaystyle\max_{v\in V}\varrho_v}. 
\] 
\hfill$\square$

\subsection{Proof of Lemma \ref{lem:box_count}}
Fix an edge $(i,j)$. For any $k\in\mathcal N(i)\setminus\{j\}$,
\[
\widetilde\Box(k,i,j)
= \bigl|\mathcal N(k)\cap(\mathcal N(j)\setminus\{i\})\bigr|
 \le |\mathcal N(j)\setminus\{i\}|
= \varrho_j-1.
\]
Similarly, for any $w\in\mathcal N(j)\setminus\{i\}$,
\(
\widetilde\Box(w,j,i)
= \bigl|\mathcal N(w)\cap(\mathcal N(i)\setminus\{j\})\bigr|
\ \le\ |\mathcal N(i)\setminus\{j\}|
= \varrho_i-1.
\)
Therefore,
\(
\varpi_{\max}(i,j)
\le \max\{\varrho_i-1,\varrho_j-1\}
= \varrho_{\max}-1.
\)
Recalling that
\[
\sho_{\max}(i,j)=\varpi_{\max}(i,j)\,\varrho_{\max},
\]
we obtain
\(
\sho_{\max}(i,j)\ \le\ \varrho_{\max}\,(\varrho_{\max}-1)
= \sho^\star_{\max},
\)
which is \eqref{eq:sho-star}.
\hfill $\square$

\subsection{Proof of Proposition \ref{prop:sharper-transfer}}
Fix $\beta\in[0,1]$ and define the $\beta$-rebalanced measures
\(
m_i^\beta := \beta\,\delta_i + (1-\beta)\,\nu_i.
\) 
Note that $m_i^\beta$ and $m_j^\beta$ share the same mixing weight $\beta$ between a Dirac mass and a neighbor law.
We claim
\begin{equation}
\label{eq:rebalance-cost}
W_1(m_i,m_i^\beta) = |\alpha_i-\beta|,
\end{equation}
with $W_1(m_j,m_j^\beta) = |\alpha_j-\beta|$ following directly.
Write
\[
m_i - m_i^\beta
= \bigl(\alpha_i-\beta\bigr)\,\delta_i - \bigl(\alpha_i-\beta\bigr)\,\nu_i
= \bigl(\alpha_i-\beta\bigr)\,\bigl(\delta_i - \nu_i\bigr).
\]
\begin{itemize}
\item[(i)] \emph{Upper Bound}: Because $\nu_i$ is supported on $\mathcal N(i)$ and every $w\in \mathcal N(i)$ satisfies $\mathrm{dist}_G(i,w)=1$, the coupling $\pi_i$ given by $\pi_i(i,w)=\nu_i(w)$ transports $\delta_i$ to $\nu_i$ with average cost
\[
\sum_{w\in\mathcal N(i)} \pi_i(i,w)\,\mathrm{dist}_G(i,w)
= \sum_{w\in\mathcal N(i)} \nu_i(w)\cdot 1 = 1,
\]
so $W_1(\delta_i,\nu_i)\le 1$. Scaling this coupling by $|\alpha_i-\beta|$ and leaving the remaining common mass untouched yields 
\(
W_1(m_i,m_i^\beta) \le |\alpha_i-\beta|\,W_1(\delta_i,\nu_i)\le |\alpha_i-\beta|.
\)
\item[(ii)] \emph{Lower Bound}:
Let $f(x):=\mathrm{sgn}(\alpha_i-\beta)\cdot\bigl(-\mathrm{dist}_G(i,x)\bigr)$ (with $\mathrm{sgn}(0)=0$), which is $1$-Lipschitz (triangle inequality). By the Kantorovich--Rubinstein duality,
\[
W_1(m_i,m_i^\beta)
\ge
\int f\,d(m_i-m_i^\beta)
= (\alpha_i-\beta)\Bigl( f(i) - \textstyle\int f\,d\nu_i \Bigr).
\]
Since $f(i)=0$ and $f(w)=-\mathrm{sgn}(\alpha_i-\beta)$ for all $w\in\mathrm{supp}(\nu_i)\subset\mathcal N(i)$, we have $\int f\,d\nu_i = -\mathrm{sgn}(\alpha_i-\beta)$, hence
\(
W_1(m_i,m_i^\beta) \ge (\alpha_i-\beta)\,\mathrm{sgn}(\alpha_i-\beta)=|\alpha_i-\beta|.
\)
\end{itemize}
Combining the two bounds proves \eqref{eq:rebalance-cost}.

\begin{lemmast}[Convexity of $W_1$ Under Common Mixtures]
For any choice of probability measures $\mu_1,\mu_2,\nu_1,\nu_2$ and $\lambda\in[0,1]$,
\[
W_1\bigl(\lambda\mu_1+(1-\lambda)\mu_2,\ \lambda\nu_1+(1-\lambda)\nu_2\bigr)
\;\le\;
\lambda W_1(\mu_1,\nu_1) + (1-\lambda) W_1(\mu_2,\nu_2).
\]
\end{lemmast} 
\begin{proof}
Let $\pi_1$ be an optimal coupling for $(\mu_1,\nu_1)$ and $\pi_2$ an optimal coupling for $(\mu_2,\nu_2)$.
Then $\pi := \lambda\pi_1 + (1-\lambda)\pi_2$ is a coupling of
$\lambda\mu_1+(1-\lambda)\mu_2$ and $\lambda\nu_1+(1-\lambda)\nu_2$,
and its cost equals the RHS. Taking the infimum over couplings on the left proves the claim.
\end{proof}
Applying the lemma with $\lambda=\beta$, $(\mu_1,\nu_1)=(\delta_i,\delta_j)$ and $(\mu_2,\nu_2)=(\nu_i,\nu_j)$ gives
\begin{equation*}
W_1(m_i^\beta,m_j^\beta)
\;\le\;
\beta\,W_1(\delta_i,\delta_j) + (1-\beta)\,W_1(\nu_i,\nu_j).
\end{equation*}
Since $W_1(\delta_i,\delta_j)=\mathrm{dist}_G(i,j)=1$ (moving a unit mass from $i$ to $j$ costs exactly the graph distance), we obtain
\begin{equation}
\label{eq:middle-term}
W_1(m_i^\beta,m_j^\beta)
\;\le\;
\beta + (1-\beta)\,W_1(\nu_i,\nu_j).
\end{equation}
By the triangle inequality for $W_1$,
\(
W_1(m_i,m_j)
\;\le\;
W_1(m_i,m_i^\beta) + W_1(m_i^\beta,m_j^\beta) + W_1(m_j^\beta,m_j).
\)
Using \eqref{eq:rebalance-cost} and \eqref{eq:middle-term},
\begin{equation}
\label{eq:W1-master}
W_1(m_i,m_j)
\;\le\;
|\alpha_i-\beta| + \Bigl[\beta + (1-\beta)\,W_1(\nu_i,\nu_j)\Bigr] + |\alpha_j-\beta|.
\end{equation}
By definition,
\(
\mathfrak c_{\mathrm{OR}}(i,j)
= 1 - W_1(m_i,m_j),
\) and \(
\mathfrak c_{\mathrm{OR}-0}(i,j)
= 1 - W_1(\nu_i,\nu_j).
\)
Subtracting \eqref{eq:W1-master} from $1$ yields
\(
\mathfrak c_{\mathrm{OR}}(i,j)
\ge
1 - \beta - (1-\beta)\,W_1(\nu_i,\nu_j) - \bigl(|\alpha_i-\beta|+|\alpha_j-\beta|\bigr),
\)
and hence
\(
\mathfrak c_{\mathrm{OR}}(i,j)
\;\ge\;
(1-\beta)\,\mathfrak c_{\mathrm{OR}-0}(i,j) - \bigl(|\alpha_i-\beta|+|\alpha_j-\beta|\bigr),
\)
which is exactly \eqref{eq:master-beta}.

\noindent Write
\(
g(\beta):=(1-\beta)\mathfrak c_{\mathrm{OR}-0}(i,j) - \bigl(|\alpha_i-\beta|+|\alpha_j-\beta|\bigr).
\)
Assume $\alpha_{\min}\le \alpha_{\max}$ (i.e.\ relabel if necessary).
The penalty 
\(
p(\beta):=|\alpha_i-\beta|+|\alpha_j-\beta|
\)
is piecewise linear:
\[
p(\beta)=
\begin{cases}
\alpha_{\min}+\alpha_{\max}-2\beta, & \beta\le \alpha_{\min},\\
\alpha_{\max}-\alpha_{\min}=\Delta_{ij}(\alpha), & \beta\in[\alpha_{\min},\alpha_{\max}],\\
2\beta-(\alpha_{\min}+\alpha_{\max}), & \beta\ge \alpha_{\max}.
\end{cases}
\]
Hence on the middle interval $[\alpha_{\min},\alpha_{\max}]$ we have
$g'(\beta)=-\mathfrak c_{\mathrm{OR}-0}(i,j)$; therefore
\[
\begin{aligned}
\mathfrak c_{\mathrm{OR}-0}(i,j)\ge 0 \ \Rightarrow\ g \text{ decreases on }[\alpha_{\min},\alpha_{\max}] \text{ and is maximized at }\beta=\alpha_{\min},\\
\mathfrak c_{\mathrm{OR}-0}(i,j)\le 0 \ \Rightarrow\ g \text{ increases on }[\alpha_{\min},\alpha_{\max}] \text{ and is maximized at }\beta=\alpha_{\max}.
\end{aligned}
\]
It remains to check that no $\beta$ outside $[\alpha_{\min},\alpha_{\max}]$ can do better.
For $\beta\le \alpha_{\min}$,
\(
g'(\beta) = -\mathfrak c_{\mathrm{OR}-0}(i,j) + 2.
\)
Since $\mathfrak c_{\mathrm{OR}-0}(i,j)\le 1$ (because $W_1(\nu_i,\nu_j)\ge 0$), we have $-\mathfrak c_{\mathrm{OR}-0}(i,j)+2\ge 1>0$, so $g$ strictly increases as $\beta$ moves up to $\alpha_{\min}$; hence the maximum on $(-\infty,\alpha_{\min}]$ is at $\alpha_{\min}$.
For $\beta\ge \alpha_{\max}$,
\(
g'(\beta) = -\mathfrak c_{\mathrm{OR}-0}(i,j) - 2 \;=\; W_1(\nu_i,\nu_j)-3.
\)
Since every $w\in\mathrm{supp}(\nu_i)$ is at distance $1$ from $i$ and every $k\in\mathrm{supp}(\nu_j)$ is at distance $1$ from $j$ with $\mathrm{dist}_G(i,j)=1$, any such pair satisfies $\mathrm{dist}_G(w,k)\le 3$  (as we assumed that $\operatorname{dist}_G(u,v)=1$ for all $(u,v)\in E$), whence $W_1(\nu_i,\nu_j)\le 3$. Thus $g'(\beta)\le 0$ (and $g'(\beta)<0$ unless $W_1(\nu_i,\nu_j)=3$). Therefore $g$ is nonincreasing on $[\alpha_{\max},\infty)$, and the maximum on this interval is attained at $\alpha_{\max}$.  Combining these cases shows that a maximizer lies at
\[
\alpha_\star=
\begin{cases}
\alpha_{\min}, & \mathfrak c_{\mathrm{OR}-0}(i,j)\ge 0,\\
\alpha_{\max}, & \mathfrak c_{\mathrm{OR}-0}(i,j)\le 0.
\end{cases}
\]
Setting $\beta=\alpha_\star$ into \eqref{eq:master-beta} and using
$|\alpha_i-\alpha_\star|+|\alpha_j-\alpha_\star|=\alpha_{\max}-\alpha_{\min}\!=:\!\Delta_{ij}(\alpha)$
gives \eqref{eq:sharper-piecewise}.
\hfill$\square$

\subsection{Proof of Proposition \ref{prop:lazy-envelope-sharpened}}
We first reduce to an optimal plan that \emph{saturates the diagonal} on
$S=\{i,j\}\cup\mathcal C$. This pins down all zero-cost mass, making the
contributions $z_i$, $z_j$, and $\triangle(i,j)\,w^{(\alpha)}_\wedge$ explicit
and leaving only unit-cost and higher-cost transport to be controlled.  The
reduction is standard and is recorded in the lemma below.
\begin{lemmast}[Diagonal-Saturating Optimal Coupling Exists]
\label{lem:diag-saturation}
Fix an edge $(i,j)\in E$ and write $\mathcal C:=\mathcal N(i)\cap\mathcal N(j)$ and $S:=\{i,j\}\cup\mathcal C$. For the one-step lazy measures $m_i,m_j$ defined above, there exists an optimal coupling $\pi^\star\in\Pi(m_i,m_j)$ for the $W_1$-problem with cost $c(x,y)=\mathrm{dist}_G(x,y)$ such that, for every $k\in S$,
\(
\pi^\star(k,k)\;=\;\min\{m_i(k),\,m_j(k)\}.
\)
In particular,
\(
\pi^\star(i,i)=z_i=\min\Bigl\{\alpha_i,{(1-\alpha_j})/{\varrho_j}\Bigr\},\) \(
\pi^\star(j,j)=z_j=\min\Bigl\{\alpha_j,{(1-\alpha_i})/{\varrho_i}\Bigr\},
\)
and for each $k\in\mathcal C$,
\(
\pi^\star(k,k)=w^{(\alpha)}_\wedge=\min\Bigl\{(1-\alpha_i)/{\varrho_i},\,(1-\alpha_j)/{\varrho_j}\Bigr\}.
\)
\end{lemmast}

\begin{proof}
Since $m_i$ and $m_j$ are finitely supported (on $\{i\}\cup\mathcal N(i)$ and $\{j\}\cup\mathcal N(j)$),
the feasible set
\[
\Pi(m_i,m_j)\;:=\;\Bigl\{\pi\ge 0:\ \sum_{y}\pi(x,y)=m_i(x),\ \sum_{x}\pi(x,y)=m_j(y)\Bigr\}
\]
is a nonempty compact polytope, and the map
\[
\pi\mapsto \sum_{x,y}\mathrm{dist}_G(x,y)\,\pi(x,y)
\]
is linear. Hence an optimal coupling exists. We now show that some optimal coupling saturates the diagonal on $S$.
Let $\pi$ be any optimal plan. Fix $k\in S$ and suppose
$\pi(k,k)<\min\{m_i(k),m_j(k)\}$. Then
\[
\sum_{y\ne k}\pi(k,y)\;=\;m_i(k)-\pi(k,k)\;>\;0
\quad\text{and}\quad
\sum_{x\ne k}\pi(x,k)\;=\;m_j(k)-\pi(k,k)\;>\;0,
\]
so there exist $x\ne k$ and $y\ne k$ with $\pi(x,k)>0$ and $\pi(k,y)>0$.
Set
\[
\varepsilon\;:=\;\min\Bigl\{\ \min\{m_i(k),m_j(k)\}-\pi(k,k),\ \pi(x,k),\ \pi(k,y)\ \Bigr\}\;>\;0,
\]
and define a new plan $\pi'$ by the \emph{pivot}
\[
\begin{aligned}
\pi'(k,k)=\pi(k,k)+\varepsilon,\\
\pi'(x,y)=\pi(x,y)+\varepsilon,\\
\pi'(x,k)=\pi(x,k)-\varepsilon,\\
\pi'(k,y)=\pi(k,y)-\varepsilon,
\end{aligned}
\]
leaving all other entries unchanged. The marginals are preserved, so $\pi'\in\Pi(m_i,m_j)$.
Its cost change is
\[
\begin{aligned}
\Delta C
&= \varepsilon\Bigl(\mathrm{dist}_G(k,k)+\mathrm{dist}_G(x,y)-\mathrm{dist}_G(x,k)-\mathrm{dist}_G(k,y)\Bigr)\\
&\le \varepsilon\Bigl(0+\mathrm{dist}_G(x,k)+\mathrm{dist}_G(k,y)-\mathrm{dist}_G(x,k)-\mathrm{dist}_G(k,y)\Bigr)\\
&=0,
\end{aligned}
\]
by the triangle inequality. Hence $\pi'$ is also optimal and satisfies
$\pi'(k,k)\ge \pi(k,k)+\varepsilon$. Iterating this finite improvement
procedure yields an optimal plan (still denoted $\pi$) with
$\pi(k,k)=\min\{m_i(k),m_j(k)\}$. Performing the same pivoting for each $k\in S$ (which is finite) terminates
after finitely many steps and produces an optimal coupling $\pi^\star$ with
\[
\pi^\star(k,k)=\min\{m_i(k),m_j(k)\}
\]
for all $k\in S$. The stated identities
for $z_i,z_j$, and $w^{(\alpha)}_\wedge$ follow from the explicit values of
$m_i$ and $m_j$ at $i,j$, and $k\in\mathcal C$.
\end{proof}
As all edges have unit length and $\mathrm{dist}_G$ is the graph distance, for every $(x,y)\in V\times V$,
\(
\mathrm{dist}_G(x,y) \ge 0\cdot \mathds 1_{\{\mathrm{dist}=0\}}
+ 1\cdot \mathds 1_{\{\mathrm{dist}=1\}}
+ 2\cdot \mathds 1_{\{\mathrm{dist}\ge 2\}}.
\)
Integrating with respect to $\pi^\star$ gives
\begin{equation*}
W_1(m_i,m_j)\ \ge\ 0\cdot m^{(0)}+1\cdot m^{(1)}+2\cdot m^{(\ge 2)}
\ =\ m^{(1)}+2m^{(\ge 2)}.
\end{equation*}
Because $m^{(0)}+m^{(1)}+m^{(\ge 2)}=1$, we have $m^{(\ge 2)}=1-m^{(0)}-m^{(1)}$, hence
\[
W_1(m_i,m_j)\ \ge\ m^{(1)}+2(1-m^{(0)}-m^{(1)})=2-2m^{(0)}-m^{(1)}.
\]
By definition $\mathfrak c_{\mathrm{OR}}(i,j)=1-W_1(m_i,m_j)$, therefore
\begin{equation}
\label{eq:curv-master}
\mathfrak c_{\mathrm{OR}}(i,j)\ \le\ -1+2m^{(0)}+m^{(1)}.
\end{equation}
Zero-cost transport occurs only at pairs $(x,x)$; at $i$, the masses available are $m_i(i)=\alpha_i$ and $m_j(i)=(1-\alpha_j)/\varrho_j$.
At $(i,i)$ we can match at most 
\(
z_i:=\min\!\left\{\alpha_i,({1-\alpha_j})/{\varrho_j}\right\}.
\)
Similarly, at $(j,j)$ we can match at most 
\(
z_j:=\min\!\left\{\alpha_j,{(1-\alpha_i)}{\varrho_i}\right\}.
\)
For $k\in\mathcal C$, we have $m_i(k)=w^{(\alpha)}_i$ and $m_j(k)=w^{(\alpha)}_j$, hence at $(k,k)$ we can match at most $w^{(\alpha)}_\wedge:=\min\{w^{(\alpha)}_i,w^{(\alpha)}_j\}$. Summing over the $\triangle(i,j)$ common neighbors,
\begin{equation}
\label{eq:m0-upper}
m^{(0)}\ \le\ z_i+z_j+\triangle(i,j)\,w^{(\alpha)}_\wedge.
\end{equation}
By definition,
\(
m^{(1)}=m^{(1)-\mathrm{end}} + m^{(1)-\mathrm{UU}} + m^{(1)-\triangle} + m^{(1)-\mathrm{CC}},
\)
thus after the zero-cost matches at $(k,k)$ for each $k\in\mathcal C$, the residual at $k$ on the $i$-side equals $\bigl[w^{(\alpha)}_i-w^{(\alpha)}_j\bigr]_+$, and on the $j$-side equals $\bigl[w^{(\alpha)}_j-w^{(\alpha)}_i\bigr]_+$. Exactly one of these two numbers is nonzero, and the sign (which side has residual supply) is the same for all $k\in\mathcal C$ because $w^{(\alpha)}_i,w^{(\alpha)}_j$ do not depend on $k$. Consequently, within $\mathcal C$ there is never simultaneous supply and demand, so no unit-cost transport can occur \emph{inside} $\mathcal C$:
\(
m^{(1)-\mathrm{CC}}=0.
\)
Hence
\begin{equation}
\label{eq:m1-decomp}
m^{(1)}\ =\ m^{(1)-\mathrm{end}}+m^{(1)-\mathrm{UU}}+m^{(1)-\triangle}.
\end{equation}
Consider the endpoint $i$, before any zero-cost matches at $(i,i)$, the total mass at $i$ on the $i$-side equals $\alpha_i$, and on the $j$-side equals $(1-\alpha_j)/\varrho_j$. After matching $z_i=\min\{\alpha_i,(1-\alpha_j)/\varrho_j\}$ at $(i,i)$, the \emph{residual supply} on the $i$-side equals
\(
\bigl[\alpha_i-(1-\alpha_j)/\varrho_j\bigr]_+ = r_i,
\)
and the \emph{residual demand} equals
\(
\bigl[(1-\alpha_j)/\varrho_j-\alpha_i\bigr]_+ = \bar r_i.
\)
Any unit of $\pi^\star$-mass transported along an edge incident to $i$ (that is, an edge counted in $m^{(1)-\mathrm{end}}$) must decrease either the residual supply or the residual demand at $i$ by exactly the transported amount; therefore the cumulative amount using edges incident to $i$ is bounded above by $r_i+\bar r_i$. The same argument at $j$ yields a bound of $r_j+\bar r_j$ for edges incident to $j$. Summing,
\begin{equation}
\label{eq:m1end-bound}
m^{(1)-\mathrm{end}}\ \le\ r_i+\bar r_i+r_j+\bar r_j.
\end{equation}
Every $u\in\mathcal U_i$ carries at most $w^{(\alpha)}_i$ units of mass on the $i$-side, and there are 
\[
|\mathcal U_i|=\varrho_i-1-\triangle(i,j)
\]
such vertices. Likewise, every $w\in\mathcal U_j$ carries at most $w^{(\alpha)}_j$, and there are 
\[
|\mathcal U_j|=\varrho_j-1-\triangle(i,j)
\]
such vertices. Transport across $\mathcal U_i\!\times\!\mathcal U_j$ cannot exceed the mass available on the \emph{lighter} unique-neighbor side, whence
\begin{equation}
\label{eq:UU-supply-cap}
m^{(1)-\mathrm{UU}}
\;\le\;
\min\left\{\,\bigl(\varrho_i-1-\triangle(i,j)\bigr)\,w^{(\alpha)}_i,\ \bigl(\varrho_j-1-\triangle(i,j)\bigr)\,w^{(\alpha)}_j \right\}.
\end{equation}
By definition, $\xi_i(i,j)\subseteq\mathcal U_i$ are those unique neighbors on the $i$-side that are connected by an edge to (at least one) unique neighbor on the $j$-side; $\xi_j(i,j)\subseteq\mathcal U_j$ are defined symmetrically.
Write 
\(
x:=|\xi_i(i,j)|
\)
and $\Xi_{ij}-x=|\xi_j(i,j)|$. Any matching across the bipartite graph $\xi_i(i,j)\times\xi_j(i,j)$ transports at most
\(
\min\{x\,w^{(\alpha)}_i,\ (\Xi_{ij}-x)\,w^{(\alpha)}_j\}
\)
units. Maximizing the RHS over $x\in[0,\Xi_{ij}]$ yields the balanced value when 
\(
x\,w^{(\alpha)}_i=(\Xi_{ij}-x)\,w^{(\alpha)}_j, 
\)
i.e. 
\[
x^\star=\Xi_{ij}\frac {w^{(\alpha)}_j}{w^{(\alpha)}_i+w^{(\alpha)}_j},
\]
and the corresponding maximum equals
\[
\frac{\Xi_{ij}}{\ \dfrac{1}{w^{(\alpha)}_i}+\dfrac{1}{w^{(\alpha)}_j}\ }\ 
=\ \frac{\Xi_{ij}}{\ \dfrac{\varrho_i}{1-\alpha_i}+\dfrac{\varrho_j}{1-\alpha_j}\ }\ 
=\ \frac{\Xi_{ij}}{\ \Sigma^{(\alpha)}_{i,j}\ }.
\]
Therefore,
\begin{equation}
\label{eq:UU-coverage-cap}
m^{(1)-\mathrm{UU}}\ \le\ \frac{\Xi_{ij}}{\ \Sigma^{(\alpha)}_{i,j}\ }.
\end{equation}
Combining \eqref{eq:UU-supply-cap} and \eqref{eq:UU-coverage-cap}, we can choose a slack variable $m^{(\alpha)}_{\mathrm{UU}}\ge m^{(1)-\mathrm{UU}}$ with the \emph{upper} envelope
\[
m^{(\alpha)}_{\mathrm{UU}}
\ \le\
\min\!\left\{
\min\left\{\,\bigl(\varrho_i-1-\triangle(i,j)\bigr)\,w^{(\alpha)}_i,\ \bigl(\varrho_j-1-\triangle(i,j)\bigr)\,w^{(\alpha)}_j \right\},\ \ \frac{\Xi_{ij}}{\ \Sigma^{(\alpha)}_{i,j}\ }
\right\},
\]
which is precisely the right-hand inequality in \eqref{eq:mUU-envelope-sharp}. Together with $m^{(1)-\mathrm{UU}}\le m^{(\alpha)}_{\mathrm{UU}}$ this yields the two-sided bound \eqref{eq:mUU-envelope-sharp}.

\noindent 
Fix $k\in\mathcal C$. Before any matching, $m_i(k)=w^{(\alpha)}_i$ and $m_j(k)=w^{(\alpha)}_j$. The zero-cost match at $(k,k)$ consumes $w^{(\alpha)}_\wedge$, leaving a residual magnitude
\(
\bigl|w^{(\alpha)}_i-w^{(\alpha)}_j\bigr|
\)
at $k$. Summing over $k\in\mathcal C$ gives a \emph{demand-side} budget $\triangle(i,j)\,|w^{(\alpha)}_i-w^{(\alpha)}_j|$ for unit-cost flux that must arrive at (or leave from) $\mathcal C$ along unique--common edges. On the \emph{supply side}, the total mass available at unique neighbors equals $|\mathcal U_i|\,w^{(\alpha)}_i+|\mathcal U_j|\,w^{(\alpha)}_j$. Therefore,
\(
m^{(1)-\triangle}\ \le\ \triangle(i,j)\,\bigl|w^{(\alpha)}_i-w^{(\alpha)}_j\bigr|
\) and \(
m^{(1)-\triangle}\ \le\ |\mathcal U_i|\,w^{(\alpha)}_i+|\mathcal U_j|\,w^{(\alpha)}_j.
\)
Choosing $m^{(\alpha)}_{\triangle}\ge m^{(1)-\triangle}$ with
\[
m^{(\alpha)}_{\triangle}\ \le\ \min\!\left\{
\triangle(i,j)\,\bigl|w^{(\alpha)}_i-w^{(\alpha)}_j\bigr|,\ 
|\mathcal U_i|\,w^{(\alpha)}_i+|\mathcal U_j|\,w^{(\alpha)}_j
\right\}
\]
gives the two-sided bound \eqref{eq:mTriangle-envelope-sharp}.
Insert \eqref{eq:m0-upper} and \eqref{eq:m1-decomp} into \eqref{eq:curv-master}, and then apply the bounds \eqref{eq:m1end-bound}, \eqref{eq:mUU-envelope-sharp}, and \eqref{eq:mTriangle-envelope-sharp}. This yields exactly
\[
\mathfrak c_{\mathrm{OR}}(i,j)
\ \le\
-1+2(z_i+z_j)+2\triangle(i,j)\,w^{(\alpha)}_\wedge
+\bigl(r_i+\bar r_i+r_j+\bar r_j\bigr)
+m^{(\alpha)}_{\mathrm{UU}}
+m^{(\alpha)}_{\triangle},
\]
which is \eqref{eq:OR-master-corrected-sharp}. Under this normalization, every unit-distance pair $(x,y)$ with $m_i(x)m_j(y)>0$ falls into exactly one of four mutually exclusive types: it either touches an endpoint ($\mathrm{end}$), connects unique neighbors across the edge ($\mathrm{UU}$), runs along a unique--common edge ($\triangle$), or lies inside the common-neighbor induced subgraph ($\mathrm{CC}$).  We formalize this as a \emph{partition} of the unit-cost pairs in the lemma below, which also guarantees there is no double counting across these classes.
\begin{lemmast}[Partition of Unit-Cost Pairs]
Let 
\[
E^{(1)}:=\{(x,y)\in V\times V:\ \mathrm{dist}_G(x,y)=1,\ m_i(x)>0,\ m_j(y)>0\}.
\]
Define the disjoint families
\[
\begin{aligned}
E_{\mathrm{end}}&:=\{(x,y)\in E^{(1)}:\ \{x,y\}\cap\{i,j\}\neq\varnothing\},\\
E_{\mathrm{UU}}&:=\{(u,w)\in E^{(1)}:\ u\in\mathcal U_i,\ w\in\mathcal U_j,\ (u,w)\in E\}\ \cup\ \{(w,u): (u,w)\in E^{(1)}\},\\
E_{\triangle}&:=\{(u,k)\in E^{(1)}:\ u\in\mathcal U_i,\ k\in\mathcal C,\ (u,k)\in E\}\ \cup\ \{(k,w): k\in\mathcal C,\ w\in\mathcal U_j,\ (k,w)\in E\},\\
E_{\mathrm{CC}}&:=\{(k,l)\in E^{(1)}:\ k,l\in\mathcal C,\ (k,l)\in E\}.
\end{aligned}
\]
Then
\(
E^{(1)} = E_{\mathrm{end}} \dot\cup E_{\mathrm{UU}} \dot\cup E_{\triangle}\dot\cup E_{\mathrm{CC}},
\)
where $A \ \dot\cup \ B$ denotes the union operator $A\cup B$ with the additional condition that $A\cap B = \varnothing$.
Consequently, for any coupling $\pi$,
\(
\pi\bigl(E^{(1)}\bigr)=\pi(E_{\mathrm{end}})+\pi(E_{\mathrm{UU}})+\pi(E_{\triangle})+\pi(E_{\mathrm{CC}}).
\)
\end{lemmast}

\begin{proof}
The support constraints give $x\in\{i\}\cup\mathcal N(i)$ and $y\in\{j\}\cup\mathcal N(j)$ whenever $m_i(x)m_j(y)>0$. If $\{x,y\}\cap\{i,j\}\neq\varnothing$ we are in $E_{\mathrm{end}}$. Otherwise $x\in\mathcal N(i)$ and $y\in\mathcal N(j)$, so $(x,y)$ must belong to exactly one of the three mutually exclusive types:
$U_i\!-\!U_j$ (in $E_{\mathrm{UU}}$), $U\!-\!C$ (in $E_{\triangle}$), or $C\!-\!C$ (in $E_{\mathrm{CC}}$). Disjointness is immediate from the disjointness of $\{i\}$, $\{j\}$, $\mathcal U_i$, $\mathcal U_j$, and $\mathcal C$; exhaustivity follows from the above case split.
\end{proof}
The contribution $m^{(1)\!-\mathrm{end}}$ is controlled by the absolute mismatch
between the endpoint masses after diagonal saturation.  Bounding it by
$r_i+\bar r_i+r_j+\bar r_j$ is a union bound over the sets of unit-length pairs
incident to $i$ and $j$; it \emph{does not} double count, because we only claim
an inequality and any overlap along $(i,j)$ can only reduce the LHS.
This is made precise in the subsequent lemma.

\begin{lemmast}[Endpoint Budget Bound Without Double Counting]
Let $\pi^\star$ be an optimal coupling that saturates the diagonal on $S=\{i,j\}\cup\mathcal C$.
Define
\[
\begin{aligned}
F_i&:=\pi^\star\left(\left\{(x,y):\mathrm{dist}_G(x,y)=1,\ \{x,y\}\ni i\right \}\right),\\
F_j&:=\pi^\star\left(\{(x,y):\mathrm{dist}_G(x,y)=1,\ \{x,y\}\ni j\}\right).
\end{aligned}
\]
Then
\[
\begin{aligned}
F_i\ &\le\ r_i+\bar r_i\ =\ \left|\alpha_i-\frac{1-\alpha_j}{\varrho_j}\right|,\\
F_j\ &\le\ r_j+\bar r_j\ =\ \left|\alpha_j-\frac{1-\alpha_i}{\varrho_i}\right|.
\end{aligned}
\]
In particular,
\(
m^{(1)-\mathrm{end}}
=\pi^\star(E_{\mathrm{end}})
\ \le\ F_i+F_j
\ \le\ r_i+\bar r_i+r_j+\bar r_j.
\)
\end{lemmast}

\begin{proof}
Write $a_i:=m_i(i)=\alpha_i$, $b_i:=m_j(i)=(1-\alpha_j)/\varrho_j$, and $d_i:=\pi^\star(i,i)=\min\{a_i,b_i\}$.
The total off-diagonal mass incident to index $i$ equals
\[
\begin{aligned}
\sum_{y\ne i}\pi^\star(i,y)+\sum_{x\ne i}\pi^\star(x,i)&
=(a_i-d_i)+(b_i-d_i)\\
&=a_i+b_i-2\min\{a_i,b_i\}\\
&=|a_i-b_i|.
\end{aligned}
\]
Restricting to edges of length $1$ can only decrease this amount, hence 
\(
F_i\le |a_i-b_i|=r_i+\bar r_i.
\)
The same bound holds for $j$.
Finally, $\pi^\star(E_{\mathrm{end}})\le F_i+F_j$ because $E_{\mathrm{end}}$ is the union of the two incident sets; no equality is claimed, so the potential overlap on $(i,j)$ causes no double counting problem.
\end{proof}
Combining the master inequality \eqref{eq:curv-master} with the zero-cost bound \eqref{eq:m0-upper}, the disjoint decomposition \eqref{eq:m1-decomp} (using
$m^{(1)\!-\mathrm{CC}}=0$), the endpoint budget, and the envelopes
\eqref{eq:mUU-envelope-sharp}--\eqref{eq:mTriangle-envelope-sharp}, yields
\eqref{eq:OR-master-corrected-sharp}.  This completes the proof.
\hfill$\square$

\subsection{Proof of Theorem \ref{thm:JL-plus-squares}}
Set
\(w_i:={\varrho_i}^{-1},\)
\(w_j:={\varrho_j}^{-1},\) \(
w_\wedge:=\min\{w_i,w_j\}=\varrho_{\max}^{-1},
\)
and recall that $\mathfrak K(i,j)=1-w_i-w_j$ and $\mathscr S(i,j)=\mathfrak m(i,j)\,w_\wedge$ with $\mathfrak m(i,j)$ the cardinality of a maximum matching $M$ in the cross-edge bipartite graph
$B_{ij}=(\mathcal U_i,\mathcal U_j;E(\mathcal U_i,\mathcal U_j))$.

\noindent
As proved before, there exists an optimal coupling $\pi$ for $W_1(\nu_i,\nu_j)$ that \emph{saturates the diagonal} on the common neighbors~$\mathcal C$: for each $k\in\mathcal C$,
\(
\pi(k,k)=\min\{\nu_i(k),\nu_j(k)\}=w_\wedge.
\)
This yields a zero-cost mass
\[
m^{(0)}=\sum_{k\in\mathcal C}\pi(k,k)=\triangle(i,j)\,w_\wedge
=\mathfrak Z_{\max}^{(i,j)}.
\]
After this step the remaining mass to be transported equals $1-\mathfrak Z_{\max}^{(i,j)}$.
Moreover, all residual mass on $\mathcal C$ lies \emph{entirely on one side}.

\noindent
Let $M\subset E(\mathcal U_i,\mathcal U_j)$ be a maximum matching and $|M|=\mathfrak m(i,j)$. For every matched pair $(u,w)\in M$, route $w_\wedge$ units from $u$ to $w$ along the edge $(u,w)$; the cost contribution is exactly
\(
|M|\,w_\wedge=\mathscr S(i,j).
\)
This is feasible since $\nu_i(u)=w_i\ge w_\wedge$ and $\nu_j(w)=w_j\ge w_\wedge$,
and the vertices used by $M$ are disjoint.
After this step the \emph{unique-neighbor budgets} update to
\[
\begin{aligned}
U_i'&:=\sum_{u\in\mathcal U_i}\nu_i(u)-\mathscr S(i,j)
=\left(1-w_i-\frac{\triangle(i,j)}{\varrho_i}\right)-\mathscr S(i,j),\\
U_j'&:=\sum_{w\in\mathcal U_j}\nu_j(w)-\mathscr S(i,j)
=\left(1-w_j-\frac{\triangle(i,j)}{\varrho_j}\right)-\mathscr S(i,j).
\end{aligned}
\]
We record the following standard exchange, which we will apply implicitly.

\begin{lemmast}[No Distance $3$ Needed]
Fix a feasible plan $\pi$ and let
\[
M \;:=\; \sum_{u\in\mathcal N(i)}\sum_{w\in\mathcal N(j)}
\pi(u,w)\,\mathds 1_{\{\operatorname{dist}_G(u,w)=3\}}
\]
be the total mass that $\pi$ places on pairs $(u,w)\in \mathcal N(i)\times \mathcal N(j)$ at distance $3$, there exists a coupling $\widetilde\pi$ with the same marginals, cost no larger than that
of $\pi$, and
\[
\widetilde\pi(u,w)=0\quad \text{for all}\ u\in\mathcal N(i),\ w\in\mathcal N(j)\ \text{with } \operatorname{dist}_G(u,w)=3.
\]
\end{lemmast}

\begin{proof}
We iteratively eliminate distance-$3$ mass in the block $\mathcal N(i)\times\mathcal N(j)$. Let $(u,w)$ be any pair in $\mathcal N(i)\times\mathcal N(j)$ with $\pi(u,w)>0$ and $u\sim i\sim j\sim w$, so $\operatorname{dist}_G(u,w)=3$. We analyze two cases, if $\pi(j,i)>0$, choose 
\(
0<\varepsilon\le \min\{\pi(u,w),\pi(j,i)\},
\)
and perform the uncrossing update
\[
(u,w)\downarrow \varepsilon,\qquad
(u,i)\uparrow \varepsilon,\qquad
(j,w)\uparrow \varepsilon,\qquad
(j,i)\downarrow \varepsilon.
\]
Row $u$ and $j$ and column $w$ and $i$ sums are preserved, so marginals are unchanged. The cost change is
\[
\begin{aligned}
\Delta C
&=
\varepsilon\!\left(-\operatorname{dist}_G(u,w)+\operatorname{dist}_G(u,i)+\operatorname{dist}_G(j,w)-\operatorname{dist}_G(j,i)\right)
\\
&=
\varepsilon(-3+1+1-1)
\\
&=-2\varepsilon<0.
\end{aligned}
\]
Thus $\pi(u,w)$ decreases by $\varepsilon$ and the total cost strictly decreases. If $\pi(j,i)=0$ but there are donors $x,y$ with $\pi(x,i)>0$ and $\pi(j,y)>0$, choose
\[
0<\varepsilon\le \min\{\pi(u,w),\pi(x,i),\pi(j,y)\}.
\]
First apply the $2\times 2$ augmentation
\[
(j,y)\downarrow \varepsilon,\qquad (x,i)\downarrow \varepsilon,\qquad
(j,i)\uparrow \varepsilon,\qquad (x,y)\uparrow \varepsilon,
\]
which preserves marginals. Its cost change satisfies
\[
\begin{aligned}
\Delta C_{\mathrm{aug}}
&= \varepsilon\Big(\operatorname{dist}_G(j,i)+\operatorname{dist}_G(x,y)-\operatorname{dist}_G(j,y)-\operatorname{dist}_G(x,i)\Big) \\
&\le \varepsilon\Big(1+\big(\operatorname{dist}_G(x,i)+\operatorname{dist}_G(i,j)+\operatorname{dist}_G(j,y)\big)-1-1\Big)\\
&= \varepsilon(1+3-1-1)=2\varepsilon.
\end{aligned}
\]
After this, $\pi(j,i)$ has increased by $\varepsilon$, so we may perform the uncrossing update
\[
(u,w)\downarrow \varepsilon,\qquad
(u,i)\uparrow \varepsilon,\qquad
(j,w)\uparrow \varepsilon,\qquad
(j,i)\downarrow \varepsilon,
\]
whose cost change is $-2\varepsilon$ by the computation above, hence the two-step change is
\[
\Delta C_{\mathrm{aug}} + (-2\varepsilon)\ \le\ 0.
\]
Again, $\pi(u,w)$ decreases by $\varepsilon$ and the marginals are preserved. In either case, we reduce the mass on the chosen distance-$3$ pair without increasing the total cost.  Since $\mathcal N(i)$ and $\mathcal N(j)$ are finite, the block $\mathcal N(i)\times\mathcal N(j)$ contains only finitely many pairs; repeating the procedure finitely many times removes all distance-$3$ mass in this block.  The resulting coupling $\widetilde\pi$ has the same marginals, no mass on $\{(x,y):\operatorname{dist}_G(x,y)=3\}$ within $\mathcal N(i)\times\mathcal N(j)$, and cost no larger than that of the original plan.
\end{proof}
Consequently, it suffices to upper bound the amount $m^{(\ge 2)}$ that must still
travel a distance of at least~$2$: for any such plan with only distances
$0,1,2$,
\begin{equation}
\label{eq:cost-12}
W_1(\nu_i,\nu_j)\ \le\ 0\cdot m^{(0)} + 1\cdot\bigl(1-m^{(0)}-m^{(\ge 2)}\bigr)
+2\cdot m^{(\ge 2)}
\ =\ 1-\mathfrak Z_{\max}^{(i,j)} + m^{(\ge 2)}.
\end{equation}
The only unit-length pairs that remain available are:
\[
\{j\}\times\bigl(\{i\}\cup\mathcal U_j\bigr),\qquad
\mathcal U_i\times\{i\},\qquad
\mathcal C\times\{i\}\quad\text{or}\quad \{j\}\times\mathcal C,
\]
according to which side carries the post-diagonal residual on $\mathcal C$.
We now show that \emph{whatever} unit-cost routing choice one makes inside
these families, the unserved remainder that inevitably has to travel at
distance $\ge 2$ can be bounded above by the sum of two simple deficits.

\begin{lemmast}
There exists a coupling (using only pairs at distances $0,1,2$) such that the total mass transported with distance~$\ge 2$ satisfies
\(
m^{(\ge 2)} \le
\Bigl[U_j'-w_i\Bigr]_+ + \Bigl[U_i'-w_j\Bigr]_+.
\)
\end{lemmast}

\begin{proof}
First, use unit-length pairs to satisfy, as much as possible,
\begin{itemize}
\item[(i)] the demand on $\mathcal U_j$ from the source $j$,
\item[(ii)] the demand on $\{i\}$ from the sources $\mathcal U_i$.
\end{itemize}
Send $x:=\min\{w_i,U_j'\}$ units along $j\to\mathcal U_j$, and
$y:=\min\{w_j,U_i'\}$ units along $\mathcal U_i\to i$.
This uses only unit-length pairs and leaves the \emph{unmatched residues} $D_j:=U_j'-x=\bigl[U_j'-w_i\bigr]_+$ on $\mathcal U_j$ and $S_i:=U_i'-y=\bigl[U_i'-w_j\bigr]_+$ on $\mathcal U_i$.
All other mass (including on $\mathcal C$) can still be routed at cost $1$ inside the families listed above. The only material that \emph{cannot} be covered
at cost $1$ by construction is precisely $D_j$ on $\mathcal U_j$ together with $S_i$
on $\mathcal U_i$. Using the claim above, we may ship \emph{all of this residue}
at distance at most $2$ (via $\mathcal C$): pairs of the form $\mathcal U_i\to\mathcal C$ and $\mathcal C\to\mathcal U_j$ are of distance $2$, and any imbalance in the allocation can be eliminated by the $3\rightsquigarrow\{1,2\}$ pivot around $(j,i)$ without increasing the cost. Hence we can complete the coupling with at most $D_j+S_i$
mass paying the extra $+1$ beyond the baseline. This proves the stated bound.
\end{proof}
By the definitions of $U_i',U_j'$, and assuming w.l.o.g. $\varrho_j=\varrho_{\max}$
\[
U_j'-w_i
=\left(1-w_j-\frac{\triangle(i,j)}{\varrho_j}\right)-\mathscr S(i,j)-w_i
=\mathfrak K(i,j)-\mathfrak Z_{\max}^{(i,j)}-\mathscr S(i,j),
\]
\[
U_i'-w_j
=\left(1-w_i-\frac{\triangle(i,j)}{\varrho_i}\right)-\mathscr S(i,j)-w_j
=\mathfrak K(i,j)-\mathfrak Z_{\min}^{(i,j)}-\mathscr S(i,j).
\]
Therefore the lemma above yields
\[
m^{(\ge 2)}
\ \le\
\Bigl[\mathfrak K(i,j)-\mathfrak Z^{(i,j)}_{\max}-\mathscr S(i,j)\Bigr]_+
+\Bigl[\mathfrak K(i,j)-\mathfrak Z^{(i,j)}_{\min}-\mathscr S(i,j)\Bigr]_+.
\]
Combining this with \eqref{eq:cost-12} gives the \emph{upper} bound
\[
W_1(\nu_i,\nu_j)
\ \le\
1-\mathfrak Z_{\max}^{(i,j)}
+\Bigl[\mathfrak K(i,j)-\mathfrak Z^{(i,j)}_{\max}-\mathscr S(i,j)\Bigr]_+
+\Bigl[\mathfrak K(i,j)-\mathfrak Z^{(i,j)}_{\min}-\mathscr S(i,j)\Bigr]_+.
\]
By the definition $\mathfrak c_{\mathrm{OR}-0}(i,j)=1-W_1(\nu_i,\nu_j)$, this is exactly the desired lower bound \eqref{eq:JL-aug}. Monotonicity in $\mathscr S(i,j)$ is immediate because $\mathscr S(i,j)$ is subtracted \emph{inside} both positive parts, and the case $\mathscr S(i,j)=0$ recovers \citeauthor{jost_clustering_2014}'s bound.
\hfill $\square$

\subsection{Proof of Proposition \ref{prop:coverage-envelope-monotone}}
By Proposition~\ref{prop:lazy-envelope-sharpened},
\(
\mathfrak c_{\mathrm{OR}}(i,j) \le
-1+2(z_i+z_j)
+(r_i+\bar r_i+r_j+\bar r_j)
+2\triangle(i,j)w^{(\alpha)}_\wedge
+m^{(\alpha)}_{\mathrm{UU}}
+m^{(\alpha)}_{\triangle},
\)
where the slacks satisfy \eqref{eq:mUU-envelope-sharp}--\eqref{eq:mTriangle-envelope-sharp}. From \eqref{eq:mUU-envelope-sharp} and the structural estimate \eqref{eq:Xi-max-structural-sharp},
\[
m^{(\alpha)}_{\mathrm{UU}}
\ \le\ \frac{\Xi_{ij}}{\Sigma^{(\alpha)}_{i,j}}
\ \le\ \frac{\varrho_i+\varrho_j-2-2\triangle(i,j)}{\Sigma^{(\alpha)}_{i,j}}.
\]
Using the demand-side budget in \eqref{eq:mTriangle-envelope-sharp} we have
\(
m^{(\alpha)}_{\triangle}\le\triangle(i,j)\,\bigl|w^{(\alpha)}_i-w^{(\alpha)}_j\bigr|.
\)
Now note that
\(
2\,w^{(\alpha)}_\wedge+\bigl|w^{(\alpha)}_i-w^{(\alpha)}_j\bigr|=w^{(\alpha)}_i+w^{(\alpha)}_j,
\)
which yields \eqref{eq:Theta-alpha-def} with the constants in \eqref{eq:const-slope} and since $w^{(\alpha)}_i,w^{(\alpha)}_j>0$,
\[
\mathrm{Slope}_\alpha
=w^{(\alpha)}_i+w^{(\alpha)}_j-\frac{2}{\frac{1}{w^{(\alpha)}_i}+\frac{1}{w^{(\alpha)}_j}}
=\frac{(w^{(\alpha)}_i)^2+(w^{(\alpha)}_j)^2}{\,w^{(\alpha)}_i+w^{(\alpha)}_j\,}>0,
\]
so $\Theta_\alpha$ is non-decreasing. The range $\triangle(i,j)\in[0,\varrho_{\min}-1]$ is the structural triangle range for an edge of a simple graph. 
\hfill $\square$

\subsection{Proof of Theorem \ref{thm:bf-to-or}}
We begin by collecting some standard lemmas.

\begin{lemma}[Cross-Edge Matching]
\label{lem:matching-from-C4-rigorous}
Fix $(i,j)\in E$ and let 
\(
B_{ij}=(\mathcal U_i,\mathcal U_j;E(\mathcal U_i,\mathcal U_j)).
\)
Then:
\begin{equation}
\label{eq:Delta-max-UB}
\operatorname{deg}_{\max}(B_{ij})\ \le\ \varpi_{\max}(i,j),
\qquad
\text{and}\qquad
|E(\mathcal U_i,\mathcal U_j)|\ \ge\ \frac{\Xi_{ij}}{2},
\end{equation}
where
\[
\operatorname{deg}_{\max}(B_{ij})
\;:=\;
\max\!\left\{
\max_{u\in\mathcal U_i}\deg_{B_{ij}}(u)\,,\ 
\max_{w\in\mathcal U_j}\deg_{B_{ij}}(w)
\right\}.
\]
Consequently, for the maximum matching size $\mathfrak m(i,j)$,
\begin{equation}
\label{eq:m-lb}
\mathfrak m(i,j)\ \ge\ \frac{|E(\mathcal U_i,\mathcal U_j)|}{\ \operatorname{deg}_{\max}(B_{ij})\ }\ \ge\ \frac{\Xi_{ij}}{2\,\varpi_{\max}(i,j)},
\end{equation}
and thus
\begin{equation}
\label{eq:S-lb-by-C4}
\mathscr S(i,j)=\frac{\mathfrak m(i,j)}{\varrho_{\max}}\ \ge\ \frac12\,\frac{\Xi_{ij}}{\varpi_{\max}(i,j)\,\varrho_{\max}}\ =\ \frac12\,\mathfrak C_4(i,j).
\end{equation}
\end{lemma}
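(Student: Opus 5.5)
The lemma has three parts: the degree bound on $B_{ij}$, the edge-count bound, and the matching bound with its rescaling. I will prove them in that order.

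\emph{Degree bound.} For $u\in\mathcal U_i$, its neighbours in $B_{ij}$ are exactly $\mathcal N(u)\cap\mathcal U_j$. Hence $\deg_{B_{ij}}(u)=\widetilde\Box(u,i,j)$. Symmetrically, $\deg_{B_{ij}}(w)=\widetilde\Box(w,j,i)$ for $w\in\mathcal U_j$. Taking the maximum over both sides and comparing with the definition of $\varpi_{\max}(i,j)$ in Table~\ref{tab:definitions} gives $\operatorname{deg}_{\max}(B_{ij})\le\varpi_{\max}(i,j)$; in fact this is an equality.

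\emph{Edge count.} By definition, $\xi_i(i,j)$ and $\xi_j(i,j)$ are the non-isolated vertices of $B_{ij}$ on each side. Each such vertex is incident to at least one edge of $E(\mathcal U_i,\mathcal U_j)$. Since $B_{ij}$ is bipartite, summing degrees over one side counts every edge exactly once. Therefore
\[
|E(\mathcal U_i,\mathcal U_j)|\ \ge\ \max\{|\xi_i|,|\xi_j|\}\ \ge\ \tfrac12\bigl(|\xi_i|+|\xi_j|\bigr)\ =\ \tfrac12\,\Xi_{ij}.
\]

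\emph{Matching bound.} I use the standard fact that a bipartite graph with maximum degree $\Delta$ can be properly edge-coloured with $\Delta$ colours (K\"onig's edge-colouring theorem). Each colour class is a matching, so some class has at least $|E|/\Delta$ edges. This gives $\mathfrak m(i,j)\ge |E(\mathcal U_i,\mathcal U_j)|/\operatorname{deg}_{\max}(B_{ij})$. If one prefers to avoid K\"onig's theorem, the alternative is: a maximum matching $M$ yields a vertex cover $V(M)$, and each covering vertex meets at most $\Delta$ edges, so $|E|\le 2\Delta\,\mathfrak m$. This loses a factor of $2$, so I will use K\"onig's theorem.

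Substituting the two bounds from \eqref{eq:Delta-max-UB} gives \eqref{eq:m-lb}. Dividing by $\varrho_{\max}$ and recalling $\sho_{\max}=\varpi_{\max}\varrho_{\max}$ and $\mathfrak C_4(i,j)=\Xi_{ij}/\sho_{\max}(i,j)$ gives \eqref{eq:S-lb-by-C4}.

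\emph{Degenerate case.} When $\Xi_{ij}=0$, $B_{ij}$ has no edges and $\varpi_{\max}$ may vanish. I will treat this separately: all the right-hand sides are $0$ (interpreting $\mathfrak C_4(i,j)=0$), so the inequalities hold trivially. The only real obstacle is invoking the edge-colouring step cleanly; everything else is bookkeeping.
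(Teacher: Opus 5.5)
Your proposal is correct and follows essentially the same route as the paper: you bound $\deg_{B_{ij}}$ by $\widetilde\Box$, get $|E(\mathcal U_i,\mathcal U_j)|\ge \Xi_{ij}/2$ by counting the non-isolated vertices on each side, use K\"onig's edge-colouring theorem (the paper cites West, Theorem 7.1.7) to obtain $\mathfrak m(i,j)\ge |E(\mathcal U_i,\mathcal U_j)|/\operatorname{deg}_{\max}(B_{ij})$, and rescale by $\varrho_{\max}$. Your separate handling of the degenerate case $\Xi_{ij}=0$ is a small improvement that the paper leaves implicit; your claimed equality $\operatorname{deg}_{\max}(B_{ij})=\varpi_{\max}(i,j)$ holds only under the table's definition of $\widetilde\Box$ (the paper's proof of the Box-Count Bound counts over $\mathcal N(j)\setminus\{i\}$, which gives only $\le$), but the argument needs only the inequality.
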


\begin{proof}
Fix $u\in\mathcal U_i$. By definition,
\(
\deg_{B_{ij}}(u)=|\mathcal N(u)\cap\mathcal U_j|
 \le |\mathcal N(u)\cap(\mathcal N(j)\setminus\{i\})|
=\widetilde\Box(u,i,j).
\)
The analogous bound holds for every $w\in\mathcal U_j$, $\deg_{B_{ij}}(w)\le \widetilde\Box(w,j,i)$. Taking the maximum over all vertices of $B_{ij}$ yields
\[
\operatorname{deg}_{\max}(B_{ij})
\le 
\max\left\{\max_{u\in\mathcal U_i}\widetilde\Box(u,i,j),\ \max_{w\in\mathcal U_j}\widetilde\Box(w,j,i)\right\}
=\varpi_{\max}(i,j),
\]
proving the first inequality of \eqref{eq:Delta-max-UB}. For the second, recall
\[
\xi_i(i,j)=\{u\in\mathcal U_i:\exists\,w\in\mathcal U_j\ \text{with}\ (u,w)\in E\}=\{u\in\mathcal U_i:\deg_{B_{ij}}(u)\ge 1\}.
\]
Hence 
\[
\sum_{u\in\mathcal U_i}\mathds 1_{\{\deg_{B_{ij}}(u)\ge 1\}}\le \sum_{u\in\mathcal U_i}\deg_{B_{ij}}(u)=|E(\mathcal U_i,\mathcal U_j)|.
\]
The same argument on the $\mathcal U_j$-side gives $|\xi_j(i,j)|\le |E(\mathcal U_i,\mathcal U_j)|$. Adding,
\[
\Xi_{ij}=|\xi_i(i,j)|+|\xi_j(i,j)|\le 2\,|E(\mathcal U_i,\mathcal U_j)|,
\]
i.e. $|E(\mathcal U_i,\mathcal U_j)|\ge \Xi_{ij}/2$, establishing \eqref{eq:Delta-max-UB}. For \eqref{eq:m-lb}, use \citet[Theorem 7.1.7]{west_introduction_2001} for bipartite graphs: the edges of a bipartite graph admit a proper edge-coloring with $\operatorname{deg}_{\max}$ colors, therefore the largest matching has size at least $|E|/\operatorname{deg}_{\max}$. Combine this with \eqref{eq:Delta-max-UB}. Finally divide by $\varrho_{\max}$ and recall $\sho_{\max}=\varpi_{\max}\varrho_{\max}$ to get \eqref{eq:S-lb-by-C4}.
\end{proof}

\begin{lemma}
\label{lem:leftover-rigorous}
If $\mathfrak c_{\rm BF}(i,j)\ge \zeta$, then
\begin{equation}
\label{eq:budget}
\mathfrak T(i,j)\,\triangle(i,j)\ +\ \mathfrak C_4(i,j)\ \ge\ \zeta-\mathfrak S(i,j).
\end{equation}
Moreover, using the structural cap $\triangle(i,j)\le \varrho_{\min}-1$ for simple graphs,
\begin{equation}
\label{eq:C4-leftover-rigorous}
\begin{aligned} 
\mathfrak C_4(i,j)& \ge\ \bigl[\zeta-\mathfrak S(i,j)-\mathfrak T(i,j)\,(\varrho_{\min}-1)\,\bigr]_+
\\
\mathscr S(i,j)& \ge\ \frac12\,\bigl[\zeta-\mathfrak S(i,j)-\mathfrak T(i,j)\,(\varrho_{\min}-1)\,\bigr]_+
\end{aligned}.
\end{equation}
\end{lemma}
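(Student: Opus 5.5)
The first inequality \eqref{eq:budget} should fall out of the definition \eqref{eq:BF-lower-decomp} once we split on the indicator. If $\varrho_{\min}>1$, the indicator equals $1$. Then $\mathfrak c_{\rm BF}(i,j)=\mathfrak S(i,j)+\mathfrak T(i,j)\triangle(i,j)+\mathfrak C_4(i,j)$, so the hypothesis $\mathfrak c_{\rm BF}(i,j)\ge\zeta$ rearranges directly to \eqref{eq:budget}.

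The degenerate case $\varrho_{\min}=1$ needs separate care. There $\mathfrak c_{\rm BF}=0$, so the hypothesis only gives $\zeta\le 0$. Both sides of \eqref{eq:budget} must then be controlled directly. The left side is nonnegative, since $\triangle(i,j)=0$ and $\mathfrak C_4(i,j)=0$ (because $\mathcal U$ on the degree-one side is empty, so $\Xi_{ij}=0$). For the right side, note $\mathfrak S=2/\varrho_{\max}\ge 0$ when $\varrho_{\min}=1$. I expect this edge case to be the only obstacle. I would handle it by declaring the lemma to be stated under the standing convention $\varrho_{\min}\ge 2$ used in the transfer theorems, or by observing that $\zeta\le 0\le\mathfrak S$ gives $\zeta-\mathfrak S\le 0\le$ LHS. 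Since $\mathfrak S\ge0$, the latter route closes the case.

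For the first line of \eqref{eq:C4-leftover-rigorous}, I will use the structural cap $\triangle(i,j)\le\varrho_{\min}-1$, which holds since $\mathcal C\subseteq\mathcal N(i)\setminus\{j\}$ and symmetrically. Because $\mathfrak T(i,j)>0$, this gives $\mathfrak T\triangle\le\mathfrak T(\varrho_{\min}-1)$. Substituting into \eqref{eq:budget} yields $\mathfrak C_4\ge \zeta-\mathfrak S-\mathfrak T(\varrho_{\min}-1)$. Combined with $\mathfrak C_4\ge 0$ (a ratio of nonnegative counts, with $0/0$ read as $0$ when $\Xi_{ij}=0$), this gives the positive-part bound.

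For the second line, I will invoke \eqref{eq:S-lb-by-C4} from Lemma~\ref{lem:matching-from-C4-rigorous}, namely $\mathscr S(i,j)\ge\tfrac12\mathfrak C_4(i,j)$, and chain it with the first line. This needs a small check when $\Xi_{ij}=0$: then $\mathfrak C_4=0$ and $\mathscr S\ge0$ trivially, so the chain still holds.
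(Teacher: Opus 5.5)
Your proof is correct and follows the paper's argument: rearrange the BF decomposition \eqref{eq:BF-lower-decomp} to get \eqref{eq:budget}, bound $\triangle(i,j)\le\varrho_{\min}-1$ using $\mathfrak T(i,j)>0$ and $\mathfrak C_4(i,j)\ge 0$, then chain with $\mathscr S(i,j)\ge\frac12\mathfrak C_4(i,j)$ from Lemma~\ref{lem:matching-from-C4-rigorous}. Your separate treatment of the $\varrho_{\min}=1$ case and of the $0/0$ convention fills in details the paper's proof silently drops. In the $\varrho_{\min}=1$ case the indicator forces $\mathfrak c_{\rm BF}=0$, and $\mathfrak S=2/\varrho_{\max}\ge 0$ closes the gap.
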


\begin{proof}
Start from \eqref{eq:BF-lower-decomp}:
\[
\mathfrak c_{\rm BF}(i,j)
=\underbrace{\Big(\tfrac{2}{\varrho_i}+\tfrac{2}{\varrho_j}-2\Big)}_{=\ \mathfrak S(i,j)}
+\underbrace{\Big(\tfrac{2}{\varrho_{\max}}+\tfrac{1}{\varrho_{\min}}\Big)}_{=\ \mathfrak T(i,j)}\triangle(i,j)
+\underbrace{\frac{\Xi_{ij}}{\sho_{\max}(i,j)}}_{=\ \mathfrak C_4(i,j)}.
\]
If $\mathfrak c_{\rm BF}(i,j)\ge \zeta$, then we can write
\(
\mathfrak T(i,j)\,\triangle(i,j)+\mathfrak C_4(i,j)\ge \zeta-\mathfrak S(i,j),
\)
i.e. \eqref{eq:budget}. Since $\triangle(i,j)\le \varrho_{\min}-1$, the minimum value the LHS can take is attained when $\triangle(i,j)$ is as large as possible; therefore
\(
\mathfrak C_4(i,j)\ \ge\ \bigl[\zeta-\mathfrak S(i,j)-\mathfrak T(i,j)(\varrho_{\min}-1)\bigr]_+,
\)
which is the first inequality in \eqref{eq:C4-leftover-rigorous}. The second follows by composing with Lemma~\ref{lem:matching-from-C4-rigorous}, which gives $\mathscr S(i,j)\ge \mathfrak C_4(i,j)/2$.
\end{proof}

\begin{lemma}
\label{lem:Z-range}
Assuming $\mathfrak c_{\rm BF}(i,j)\ge \zeta$, then $0\le \mathscr Z^{(i,j)}(\zeta)\le \varrho_{\min}-1$ and
\begin{equation}
\label{eq:Z-lb}
\triangle(i,j)\ \ge\ \mathscr Z^{(i,j)}(\zeta)
\implies
\mathfrak Z^{(i,j)}_{\bullet}\ \ge\ \overline{\mathscr Z}^{(i,j)}_{\bullet}(\zeta),\quad \bullet\in\{\min,\max\}.
\end{equation}
\end{lemma}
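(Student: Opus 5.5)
The plan is to show first that the budget inequality \eqref{eq:budget} of Lemma~\ref{lem:leftover-rigorous} forces the actual triangle count to dominate $\mathscr Z^{(i,j)}(\zeta)$, i.e.\ $\triangle(i,j)\ge \mathscr Z^{(i,j)}(\zeta)$. Both claims of the lemma then follow from this one fact.

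\emph{Generic case $\varrho_{\min}>1$.} Here the indicator in \eqref{eq:BF-lower-decomp} equals $1$, so Lemma~\ref{lem:leftover-rigorous} gives $\mathfrak T(i,j)\,\triangle(i,j)\ge \zeta-\mathfrak S(i,j)-\mathfrak C_4(i,j)$. Since $\mathfrak T(i,j)=2/\varrho_{\max}+1/\varrho_{\min}>0$, dividing by it yields
\[
\triangle(i,j)\ \ge\ \frac{\zeta-\mathfrak S(i,j)-\mathfrak C_4(i,j)}{\mathfrak T(i,j)} .
\]
Combined with $\triangle(i,j)\ge 0$, this gives $\triangle(i,j)\ge \mathscr Z^{(i,j)}(\zeta)$, because $\mathscr Z^{(i,j)}(\zeta)$ is the maximum of these two lower bounds.

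\emph{Degenerate case $\varrho_{\min}=1$.} Here $\mathfrak c_{\rm BF}(i,j)=0$, so the hypothesis reads $\zeta\le 0$. Also $\triangle(i,j)=0$, since the degree-one endpoint has no neighbour other than the opposite endpoint. Moreover $\mathfrak S(i,j)=2/\varrho_{\max}\ge 0$, and $\mathfrak C_4(i,j)\ge 0$. Hence $\zeta-\mathfrak S-\mathfrak C_4\le 0$ and $\mathscr Z^{(i,j)}(\zeta)=0=\triangle(i,j)$, so the same domination holds. This case is the only place I would be careful, because the budget inequality of Lemma~\ref{lem:leftover-rigorous} silently assumes the indicator equals one.

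\emph{Range claim.} Nonnegativity of $\mathscr Z^{(i,j)}(\zeta)$ is built into the outer $\max\{0,\cdot\}$. The upper bound comes from the chain $\mathscr Z^{(i,j)}(\zeta)\le \triangle(i,j)\le \varrho_{\min}-1$. The last inequality is the structural cap for simple graphs: common neighbours of $i$ and $j$ lie in $\mathcal N(i)\setminus\{j\}$ and in $\mathcal N(j)\setminus\{i\}$.

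\emph{Implication \eqref{eq:Z-lb}.} Given $\triangle(i,j)\ge \mathscr Z^{(i,j)}(\zeta)$, divide both sides by the positive quantity $\bullet\{\varrho_i,\varrho_j\}$ for $\bullet\in\{\min,\max\}$. By the definitions $\mathfrak Z_\bullet(i,j)=\triangle(i,j)/\bullet\{\varrho_i,\varrho_j\}$ and $\overline{\mathscr Z}^{(i,j)}_\bullet(\zeta)=\mathscr Z^{(i,j)}(\zeta)/\bullet\{\varrho_i,\varrho_j\}$, this is exactly $\mathfrak Z^{(i,j)}_\bullet\ge \overline{\mathscr Z}^{(i,j)}_\bullet(\zeta)$. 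Since the antecedent was established above, the conclusion holds unconditionally under $\mathfrak c_{\rm BF}(i,j)\ge\zeta$.

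I do not expect a real obstacle beyond the $\varrho_{\min}=1$ bookkeeping.
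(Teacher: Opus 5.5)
Your proof is correct and follows the paper's argument. You obtain $\triangle(i,j)\ge\mathscr Z^{(i,j)}(\zeta)$ from the budget inequality \eqref{eq:budget} together with $\triangle(i,j)\ge 0$, get the range from the structural cap $\triangle(i,j)\le\varrho_{\min}-1$, and divide by $\bullet\{\varrho_i,\varrho_j\}$. Your separate treatment of $\varrho_{\min}=1$ is a careful extra that the paper skips, though not strictly needed: there the unindicated sum $\mathfrak S(i,j)+\mathfrak T(i,j)\triangle(i,j)+\mathfrak C_4(i,j)=2/\varrho_{\max}$ still exceeds $\mathfrak c_{\rm BF}(i,j)=0$, so \eqref{eq:budget} remains valid.
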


\begin{proof}
From \eqref{eq:budget} we have 
\[
\triangle(i,j)\ge \frac{\zeta-\mathfrak S(i,j)-\mathfrak C_4(i,j)}{\mathfrak T(i,j)}.
\]
Imposing the structural lower bound $\triangle(i,j)\ge 0$ produces
\[
\triangle(i,j)\ge \max\left\{0,\frac{\zeta-\mathfrak S(i,j)-\mathfrak C_4(i,j)}{\mathfrak T(i,j)}\right\}=: \mathscr Z^{(i,j)}(\zeta).
\]
As $\triangle(i,j)\le \varrho_{\min}-1$, this yields $0\le \mathscr Z^{(i,j)}(\zeta)\le \varrho_{\min}-1$. Dividing by $\max\{\varrho_i,\varrho_j\}$ or by $\min\{\varrho_i,\varrho_j\}$ gives \eqref{eq:Z-lb}.
\end{proof}

\begin{lemma}[Monotonicity of the Bound]
\label{lem:JL-monotone-rigorous}
The function
\[
(\mathfrak Z_{\min},\mathfrak Z_{\max},s)\ \longmapsto\ 
-\bigl[\mathfrak K(i,j)-\mathfrak Z_{\max}-s\bigr]_+ - \bigl[\mathfrak K(i,j)-\mathfrak Z_{\min}-s\bigr]_+ + \mathfrak Z_{\max}
\]
is nondecreasing in each argument on $\mathbb R_{+}^3$. In particular, the lower bound of Theorem~\ref{thm:JL-plus-squares} is nondecreasing in each of
$\mathfrak Z_{\min}^{(i,j)},\ \mathfrak Z_{\max}^{(i,j)}$, and $\mathscr S(i,j)$.
\end{lemma}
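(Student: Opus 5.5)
The monotonicity claim is a statement about an elementary piecewise-affine function, so I will argue it one coordinate at a time. Write
\[
F(a,b,s):=-\bigl[\mathfrak K-b-s\bigr]_+-\bigl[\mathfrak K-a-s\bigr]_++b,
\]
where $a=\mathfrak Z_{\min}$, $b=\mathfrak Z_{\max}$, and $\mathfrak K=\mathfrak K(i,j)$ is held fixed. The only tool needed is that $t\mapsto[t]_+$ is nondecreasing. Consequently, $x\mapsto -[c-x]_+$ is nondecreasing in $x$ for every constant $c$.

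First, $a$ enters only through $-[\mathfrak K-a-s]_+$, which is nondecreasing in $a$ by this remark. Second, $s$ enters through both negative positive-part terms, each of the form $-[c-s]_+$ with $c$ independent of $s$. Each is therefore nondecreasing in $s$, and so is their sum.

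The only coordinate needing care is $b$, since it appears both inside a positive part and linearly. Here I use the identity
\[
-[\mathfrak K-b-s]_++b=\min\{b,\;2b+s-\mathfrak K\}.
\]
This holds because $-[t]_+=\min\{0,-t\}$, applied with $t=\mathfrak K-b-s$ and then shifted by $b$. Both expressions inside the minimum have positive slope in $b$, so the minimum is nondecreasing in $b$, and the remaining term $-[\mathfrak K-a-s]_+$ does not depend on $b$. Equivalently, on each affine piece the slope in $b$ is either $1$ or $2$, never negative. Since $F$ is nondecreasing in each argument, it is nondecreasing on $\mathbb R_+^3$, and in fact on all of $\mathbb R^3$.

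For the "in particular" statement, I substitute $(a,b,s)=(\mathfrak Z_{\min}^{(i,j)},\mathfrak Z_{\max}^{(i,j)},\mathscr S(i,j))$. The resulting expression is exactly the right-hand side of \eqref{eq:JL-aug} in Theorem~\ref{thm:JL-plus-squares}, so the bound inherits the coordinatewise monotonicity. This is the property I will then combine with Lemma~\ref{lem:Z-range} and Lemma~\ref{lem:leftover-rigorous}: replacing each argument by its lower estimate ($\overline{\mathscr Z}_\bullet^{(i,j)}(\zeta)$ and $\underline{\mathscr S}^{(i,j)}(\zeta)$) can only decrease the bound. The main (mild) obstacle is the $b$-coordinate, where the linear term $+b$ must be shown to dominate the negative positive part; the min-representation above handles it directly.
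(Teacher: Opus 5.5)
Your proof is correct and follows the paper's argument, which likewise uses only that each positive part $[\mathfrak K-\mathfrak Z_\bullet-s]_+$ is nonincreasing in every argument while the term $+\mathfrak Z_{\max}$ is increasing. The $\mathfrak Z_{\max}$-coordinate actually needs no special care: $-[\mathfrak K-b-s]_+$ and $+b$ are each nondecreasing in $b$, so there is no domination to establish, and your min-identity, though valid, is unnecessary.
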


\begin{proof}
If any of $(\mathfrak Z_{\min},\mathfrak Z_{\max},s)$ increases, each positive part $[\mathfrak K-\mathfrak Z_{\bullet}-s]_+$ weakly decreases, while the final $+\mathfrak Z_{\max}$ weakly increases. Hence the whole expression weakly increases in each coordinate.
\end{proof}
We now prove the theorem statement. By Lemma~\ref{lem:leftover-rigorous}, 
\(\mathfrak T(i,j)\,\triangle(i,j)+\mathfrak C_4(i,j)\ge \zeta-\mathfrak S(i,j),
\)
and
\[
\triangle(i,j)\ \ge\ \max\!\left\{0,\frac{\zeta-\mathfrak S(i,j)-\mathfrak C_4(i,j)}{\mathfrak T(i,j)}\right\}
=:\ \mathscr Z^{(i,j)}(\zeta),
\]
whence $\mathfrak Z_{\bullet}^{(i,j)}\ge \overline{\mathscr Z}^{(i,j)}_{\bullet}(\zeta)$ by Lemma~\ref{lem:Z-range}. In parallel,
\(
\mathscr S(i,j)\ \ge\ \frac12\,\mathfrak C_4(i,j),\)
using Lemma~\ref{lem:matching-from-C4-rigorous} and
\(
\mathscr S(i,j)\ \ge\ \left(\left[\zeta-\mathfrak S(i,j)-\mathfrak T(\varrho_{\min}-1)\right]_+\right)/2,
\)
by Lemma~\ref{lem:leftover-rigorous}. Taking the maximum of these two independent lower bounds yields \eqref{eq:S-underline-rigorous},
\(
\mathscr S(i,j)\ge \underline{\mathscr S}^{(i,j)}(\zeta).
\)
Theorem~\ref{thm:JL-plus-squares} gives, for the \emph{non-lazy} neighbor-uniform case,
\[
\mathfrak c_{\rm OR-0}(i,j) \ge
-\bigl[\mathfrak K-\mathfrak Z_{\max}^{(i,j)}-\mathscr S(i,j)\bigr]_+
-\bigl[\mathfrak K-\mathfrak Z_{\min}^{(i,j)}-\mathscr S(i,j)\bigr]_+
+\mathfrak Z_{\max}^{(i,j)}.
\]
By Lemma~\ref{lem:JL-monotone-rigorous}, this lower bound is coordinatewise nondecreasing in $(\mathfrak Z_{\min},\mathfrak Z_{\max},\mathscr S)$. Therefore we may \emph{substitute}
\(
\mathfrak Z_{\bullet}^{(i,j)}\) with \(\overline{\mathscr Z}^{(i,j)}_{\bullet}(\zeta),
\) and \(
\mathscr S(i,j)\) with \(\underline{\mathscr S}^{(i,j)}(\zeta),
\)
to obtain the valid lower bound \eqref{eq:phi-nonlazy-square-rigorous}:
\(
\mathfrak c_{\rm OR-0}(i,j)\ge \varphi^{(i,j)}_{\rm BF\to OR-0}(\zeta).
\)
Lastly by using the results of \eqref{eq:cOR-tri-quad-bound} we have
\(\mathfrak c_{\rm OR}(i,j) \ge (1-\alpha_\star)\varphi^{(i,j)}_{\rm BF\to OR-0}(\zeta) - \Delta_{ij}(\alpha)
 \ge \varphi^{(i,j)}_{\rm BF\to OR}(\zeta),
\)
which is exactly \eqref{eq:phi-lazy-square-rigorous}.
\hfill $\square$

\subsection{Proof of Theorem \ref{thm:BF-to-OR-lazy}}
From \eqref{eq:BF-lower-decomp} and $\mathfrak c_{\rm BF}(i,j)\le\zeta$ we have
\(
\mathfrak T(i,j)\triangle(i,j)+\Xi_{ij}/\sho_{\max}(i,j)\le \zeta-\mathfrak S(i,j).
\)
By definition $\triangle(i,j)\ge0$ and $\Xi_{ij}\ge 0$, hence replacing a negative RHS with zero is harmless, yielding
\(
\mathfrak T(i,j)\triangle+\Xi_{ij}/\sho_{\max}(i,j)\le \mathscr b(\zeta).
\)
Combining with $0\le \triangle(i,j)\le \varrho_{\min}-1$
gives the first bound in \eqref{eq:bf-budget-region}, with the second following by applying Lemma~\ref{lem:box_count} to replace $\sho_{\max}$ by $\sho_{\max}^\star$ and using the structural cap \eqref{eq:Xi-max-structural-sharp}, which together yield \eqref{eq:Xi-upper-two-sources}.

\noindent
By Proposition~\ref{prop:lazy-envelope-sharpened}, for any admissible $(\triangle(i,j),\Xi_{ij})$
\begin{equation}
\label{eq:OR-core}
\mathfrak c_{\rm OR}(i,j)
\ \le\
-1+2(z_i+z_j)
+(r_i+\bar r_i+r_j+\bar r_j)
+2\triangle(i,j)\,w^{(\alpha)}_\wedge
+m^{(\alpha)}_{\mathrm{UU}}
+m^{(\alpha)}_{\triangle}.
\end{equation}
The same proposition provides simultaneous bounds
\(
m^{(\alpha)}_{\mathrm{UU}}
\le
\min\!\left\{A_{\min}(\triangle(i,j)),\ {\Xi_{ij}}/{\Sigma^{(\alpha)}_{i,j}}\right\},
\)
\(
m^{(\alpha)}_{\triangle}
\le
C_\alpha(\triangle(i,j)),
\)
with $A_{\min},C_\alpha$ defined in \eqref{eq:Amin},\eqref{eq:Calph}. Applying the two sources in
\eqref{eq:Xi-upper-two-sources} and dividing by $\Sigma^{(\alpha)}_{i,j}$ gives
\(
{\Xi_{ij}}/{\Sigma^{(\alpha)}_{i,j}}
\le
\min\Bigl\{B_\alpha(\triangle(i,j)),\ D_\alpha(\triangle(i,j))\Bigr\},
\)
with $B_\alpha,D_\alpha$ as in \eqref{eq:Balph}, \eqref{eq:Dalph}.
Therefore
\(
m^{(\alpha)}_{\mathrm{UU}}
\le
\min\!\left\{\,A_{\min}(\triangle),\ B_\alpha(\triangle),\ D_\alpha(\triangle)\right\},
\) and \(
m^{(\alpha)}_{\triangle} \le C_\alpha(\triangle).
\)
Substituting these bounds into \eqref{eq:OR-core} and recalling all slacks are nonnegative yields \eqref{eq:widehat-Psi}. Each of $A_i,A_j,B_\alpha,D_\alpha$ is affine in $\triangle$, hence $A_{\min}=\min\{A_i,A_j\}$ is piecewise-affine with a single potential kink at $\triangle_{\rm swap}$ where $A_i=A_j$ (if $w^{(\alpha)}_i\neq w^{(\alpha)}_j$). Likewise 
\[
C_\alpha=\min\{\triangle|w^{(\alpha)}_i-w^{(\alpha)}_j|,\,A_i+A_j\}
\]
is piecewise-affine with a single potential kink at the demand-supply intersection $\triangle_{\mathscr s\cap}$.
The map $\min\{A_{\min},B_\alpha,D_\alpha\}$ is piecewise-affine with potential kinks at all pairwise intersections
\begin{itemize}
\item[(i)] $A_i=B_\alpha$
\item[(ii)] $A_j=B_\alpha$
\item[(iii)] $A_i=D_\alpha$
\item[(iv)] $A_j=D_\alpha$
\item[(v)] $B_\alpha=D_\alpha$,
\end{itemize}
and at the points
\begin{itemize}
\item[(vi)] $\triangle_{i\cap B}$
\item[(vii)] $\triangle_{j\cap B}$
\item[(viii)] $\triangle_{i\cap D}$
\item[(ix)] $\triangle_{j\cap D}$
\item[(x)] $\triangle_{B\cap D}$,
\end{itemize}
recorded in \eqref{eq:knot-set} whenever the corresponding denominators are nonzero. The positive part $[\,\cdot\,]_+$ introduces at most one additional kink where the minimum crosses $0$, but on the domain $[0,\triangle_{\max}]$ this occurs only at boundary values already contained in $\{0,\triangle_{\max}\}$. Indeed $A_{\min}\ge 0$ until 
\(
\triangle(i,j)=\varrho_{\min}-1,
\)
$B_\alpha\ge 0$ until 
\(
\triangle(i,j)={\mathscr b(\zeta)}/{\mathfrak T},
\)
and $D_\alpha\ge 0$ until 
\(\triangle(i,j)={(\varrho_i+\varrho_j-2)}/{2}\ge \varrho_{\min}-1.
\)
Hence $\widehat\Psi_\alpha$ is continuous and piecewise-affine on the compact interval $[0,\triangle_{\max}]$, and every affine piece attains its maximum at one of its endpoints. Therefore a maximizer of $\widehat\Psi_\alpha$ over $[0,\triangle_{\max}]$ is attained at an element of the finite knot set $\mathcal K$ defined in \eqref{eq:knot-set}; this proves \eqref{eq:BF-to-OR-claim-sharp} together with the ``finite maximizer'' claim.

\noindent
To avoid any residual existence gap, we verify that every element of $\mathcal K$ that lies in $[0,\triangle_{\max}]$ corresponds to a \emph{feasible} choice in the relaxation used to upper-bound $m^{(\alpha)}_{\mathrm{UU}}$:
for any such $\triangle^\circ$ we may define
\(
\Xi^\circ := \min\bigl\{\,\sho_{\max}^\star\bigl(\mathscr b(\zeta)-\mathfrak T\,\triangle^\circ\bigr)_+\ ,\ \varrho_i+\varrho_j-2-2\triangle^\circ \bigr\} \ge 0.
\)
By construction $(\triangle^\circ,\Xi^\circ)$ satisfies \eqref{eq:bf-budget-region}-\eqref{eq:Xi-upper-two-sources}, hence is admissible for the envelope bounds used in Step~2; in particular $\Xi^\circ/\Sigma^{(\alpha)}_{i,j}$ matches the active term among $\{B_\alpha(\triangle^\circ),D_\alpha(\triangle^\circ)\}$, while $A_{\min}(\triangle^\circ)$ is trivially feasible as it depends only on degrees.
Therefore no candidate point in $\mathcal K\cap[0,\triangle_{\max}]$ is spurious from the standpoint of the relaxation, and the maximum over $\mathcal K$ genuinely controls the maximum over $[0,\triangle_{\max}]$ of the relaxed envelope.
\hfill $\square$

\subsection{Proof of Theorem \ref{thm:OR-to-BF-lower}}
By Proposition~\ref{prop:coverage-envelope-monotone}, $\mathfrak c_{\rm OR}(i,j)\le \Theta_\alpha\bigl(\triangle(i,j)\bigr)$
and $\Theta_\alpha$ is non-decreasing. Hence $\mathfrak c_{\rm OR}(i,j)\ge \vartheta$ forces $\triangle(i,j)\ge t_{\min}(\vartheta)$ defined in \eqref{eq:tmin}.
Invoking the decomposition \eqref{eq:BF-lower-decomp} and $\mathfrak C_4(i,j)\ge 0$,
\[
\begin{aligned}
\mathfrak c_{\rm BF}(i,j)&=\mathfrak S(i,j)+\mathfrak T(i,j)\triangle(i,j)+\mathfrak C_4(i,j) \\ 
&\ge \mathfrak S(i,j)+\mathfrak T(i,j)\triangle(i,j)
\\ 
&\ge \mathfrak S(i,j)+\mathfrak T(i,j)\,t_{\min}(\vartheta),
\end{aligned}
\]
which is \eqref{eq:OR-to-BF-lower}. Item (a) follows from $\Theta_\alpha(0)=\mathrm{Const}_\alpha$ and monotonicity.
Item (b) uses the triangle range $t\le\varrho_{\min}-1$ and \eqref{eq:Theta-alpha-def}.
Item (c) is immediate from the use of the structural bound \eqref{eq:Xi-max-structural-sharp} in Proposition~\ref{prop:coverage-envelope-monotone}.
\hfill $\square$

\subsection{Proof of Theorem \ref{thm:or-to-bf}}
By Proposition~\ref{prop:sharper-transfer}, for any $\vartheta\in\mathbb R$,
\begin{equation}
\label{eq:lazy-to-nonlazy}
\mathfrak c_{\rm OR}(i,j)\ \le\ \vartheta
\implies
\mathfrak c_{\rm OR-0}(i,j)\ \le\ \mathfrak s_0^{(i,j)}(\vartheta)
:= \frac{\vartheta+\Delta_{ij}(\alpha)}{\,1-\alpha_\star\,}.
\end{equation}
Thus it suffices to bound $\triangle(i,j)$ in terms of an upper bound on $\mathfrak c_{\rm OR-0}(i,j)$. By Theorem~\ref{thm:JL-plus-squares},
\begin{equation}
\label{eq:JL-aug-restated}
\mathfrak c_{\mathrm{OR}-0}(i,j)
\ \ge\
-\bigl[\mathfrak K(i,j)-\mathfrak Z_{\max}^{(i,j)}-\mathscr S(i,j)\bigr]_+
-\bigl[\mathfrak K(i,j)-\mathfrak Z_{\min}^{(i,j)}-\mathscr S(i,j)\bigr]_+
+\mathfrak Z_{\max}^{(i,j)}.
\end{equation}
We view the RHS as a function of $\triangle:=\triangle(i,j)$ and denote
\begin{equation}
\label{eq:g-square-def}
\mathfrak g(\triangle)
:=
-\left[\mathfrak K(i,j)-\frac{\triangle}{\varrho_{\max}}-\mathscr S(i,j) \right]_+
-\left[\mathfrak K(i,j)-\frac{\triangle}{\varrho_{\min}}-\mathscr S (i,j)\right]_+
+\frac{\triangle}{\varrho_{\max}}.
\end{equation}
Define the \emph{effective deficit}
\(
\mathfrak K_{\square}(i,j) :=\bigl[\mathfrak K (i,j)-\mathscr S(i,j)\bigr]_+\quad(\ge 0),
\)
then two regimes arise:
\begin{itemize}
\item[(i)] $\mathfrak K(i,j)\le \mathscr S(i,j)$: 
For every $\triangle\ge 0$,
\(
\mathfrak K(i,j)-{\triangle}/{\varrho_{\max}}-\mathscr S(i,j)
 \le
\mathfrak K(i,j)-\mathscr S(i,j) \le 0
\)
and
\(\mathfrak K(i,j)-{\triangle}/{\varrho_{\min}}-\mathscr S(i,j)
\le 
\mathfrak K(i,j)-\mathscr S(i,j) \le 0,
\)
so both positive parts in \eqref{eq:g-square-def} vanish, and
\begin{equation}
\label{eq:g-square-regime0}
\mathfrak g(\triangle)=\frac{\triangle}{\varrho_{\max}}.
\end{equation}
This is linear, nondecreasing, and passes through the origin.
\item[(ii)] $\mathfrak K(i,j)> \mathscr S(i,j)$: Define the breakpoints
\(
\underline{\triangle}_1^{\square}:=\varrho_{\min}\,\mathfrak K_{\square}(i,j),
\) and \(
\overline{\triangle}_2^{\square}:=\varrho_{\max}\,\mathfrak K_{\square}(i,j),
\)
then $\mathfrak g$ is continuous, piecewise linear and strictly increasing on $\triangle\ge 0$, with:
\begin{align}
\label{eq:g-square-piece1}
\mathfrak g(\triangle)
&=
-2\mathfrak K_{\square}(i,j)+\mathfrak T(i,j)\,\triangle,
&& 0\le \triangle\le \underline{\triangle}_1^{\square},
\\[3pt]
\label{eq:g-square-piece2}
\mathfrak g(\triangle)
&=
-\mathfrak K_{\square}(i,j)+\frac{2}{\varrho_{\max}}\,\triangle,
&& \underline{\triangle}_1^{\square}\le \triangle\le \overline{\triangle}_2^{\square},
\\[3pt]
\label{eq:g-square-piece3}
\mathfrak g(\triangle)
&=
\frac{1}{\varrho_{\max}}\,\triangle,
&& \triangle\ge \overline{\triangle}_2^{\square}.
\end{align}
\emph{Derivation}: On $[0,\underline{\triangle}_1^{\square}]$ both brackets in \eqref{eq:g-square-def} are positive, expanding gives
\[
\begin{aligned}
\mathfrak g(\triangle)
&= -\left(\mathfrak K(i,j)-\frac{\triangle}{\varrho_{\max}}-\mathscr S(i,j)\right)-\left(\mathfrak K(i,j)-\frac{\triangle}{\varrho_{\min}}-\mathscr S(i,j)\right)+\tfrac{\triangle}{\varrho_{\max}}\\
&= -2\left(\mathfrak K(i,j)-\mathscr S(i,j)\right) + \left(\frac{2}{\varrho_{\max}}+\frac{1}{\varrho_{\min}}\right)\triangle,
\end{aligned}
\]
i.e.\ \eqref{eq:g-square-piece1}. On $[\underline{\triangle}_1^{\square},\overline{\triangle}_2^{\square}]$ only the first bracket is positive, yielding \eqref{eq:g-square-piece2}. For $\triangle\ge \overline{\triangle}_2^{\square}$ both brackets vanish, giving \eqref{eq:g-square-piece3}, hence continuity at the breakpoints is immediate:
\(
\mathfrak g(0)=-2\mathfrak K_{\square}(i,j),\) 
\(\mathfrak g_{\square}(\underline{\triangle}_1^{\square})
=\mathfrak K_{\square}(i,j)\Bigl(2{\varrho_{\min}}/{\varrho_{\max}}-1\Bigr)=:\mathfrak s_{\mathfrak u}^{\square}(i,j),
\) \(
\mathfrak g_{\square}(\overline{\triangle}_2^{\square})=\mathfrak K_{\square}(i,j).
\)
\end{itemize}
Combining \eqref{eq:lazy-to-nonlazy} and \eqref{eq:JL-aug-restated} we must solve
\begin{equation}
\label{eq:ineq-to-invert}
\mathfrak g(\triangle)\ \le\ \mathfrak s_0^{(i,j)}(\vartheta)
\end{equation}
for the largest feasible $\triangle\ge 0$. We invert separately in the two regimes.
\begin{itemize}
\item[(i)] $\mathfrak K_{\square}(i,j)=0$:
\eqref{eq:g-square-regime0} gives
\[
\frac{\triangle}{\varrho_{\max}}\ \le\ \mathfrak s_0^{(i,j)}(\vartheta).
\]
If $\mathfrak s_0^{(i,j)}(\vartheta)<0$ then the feasible set is empty (and the supremum is $0$); if $\mathfrak s_0^{(i,j)}(\vartheta)\ge 0$ we obtain
\(
\triangle\ \le\ \varrho_{\max}\,\mathfrak s_0^{(i,j)}(\vartheta),
\)
which is exactly the last line of \eqref{eq:u-max-square} for $\mathfrak K_{\square}(i,j)=0$.
\item[(ii)] $\mathfrak K_{\square}(i,j)>0$:
Inverting \eqref{eq:ineq-to-invert} against the three strictly increasing pieces \eqref{eq:g-square-piece1}-\eqref{eq:g-square-piece3} yields the envelope:
\begin{itemize}
\item[(a)] If $\mathfrak s_0^{(i,j)}(\vartheta) < -2\,\mathfrak K_{\square}(i,j)$, there is no $\triangle\ge 0$ solving \eqref{eq:ineq-to-invert}; by the same convention we take the supremum to be $0$.
\item[(b)] If $-2\,\mathfrak K_{\square}(i,j)\le \mathfrak s_0^{(i,j)}(\vartheta)\le \mathfrak s_{\mathfrak u}^{\square}(i,j)$, we are on the first segment \eqref{eq:g-square-piece1}, and
\[
-2\mathfrak K_{\square}(i,j)+\mathfrak T(i,j)\,\triangle\ \le\ \mathfrak s_0^{(i,j)}(\vartheta)
\iff
\triangle\ \le\ \frac{\mathfrak s_0^{(i,j)}(\vartheta)+2\mathfrak K_{\square}(i,j)}{\ \mathfrak T(i,j)\ }.
\]
\item[(c)] If $\mathfrak s_{\mathfrak u}^{\square}(i,j)\le \mathfrak s_0^{(i,j)}(\vartheta)\le \mathfrak K_{\square}(i,j)$, we are on the middle segment \eqref{eq:g-square-piece2}, and
\[
-\mathfrak K_{\square}(i,j)+\frac{2}{\varrho_{\max}}\triangle\ \le\ \mathfrak s_0^{(i,j)}(\vartheta)
\iff
\triangle\ \le\ \frac{\varrho_{\max}}{2}\,\bigl(\mathfrak s_0^{(i,j)}(\vartheta)+\mathfrak K_{\square}(i,j)\bigr).
\]
\item[(d)] If $\mathfrak s_0^{(i,j)}(\vartheta)\ge \mathfrak K_{\square}(i,j)$, we are on the last segment \eqref{eq:g-square-piece3}, and
\[
\frac{1}{\varrho_{\max}}\triangle\ \le\ \mathfrak s_0^{(i,j)}(\vartheta)
\iff
\triangle \le \varrho_{\max}\,\mathfrak s_0^{(i,j)}(\vartheta).
\]
\end{itemize}
Collecting these cases yields precisely \eqref{eq:u-max-square}. Note $\mathfrak s_{\mathfrak u}^{\square} (i,j)= \mathfrak g(\underline{\triangle}_1^{\square})$ and $\mathfrak K_{\square}(i,j)=\mathfrak g(\overline{\triangle}_2^{\square})$, so the piecewise inversion is consistent and continuous.
\end{itemize}
By \eqref{eq:BF-lower-decomp},
\(
\mathfrak c_{\rm BF}(i,j)\ =\ \mathfrak S(i,j)\ +\ \mathfrak T(i,j)\,\triangle(i,j)\ +\ \mathfrak C_4(i,j),
\)
which is nondecreasing in $\triangle(i,j)$, and therefore the envelope $\triangle(i,j)\le \mathfrak u_{\max}^{(i,j)}(\vartheta)$ gives
\[
\mathfrak c_{\rm BF}(i,j)
\ \le\ \mathfrak S(i,j)\ +\ \mathfrak T(i,j)\,\mathfrak u_{\max,\square}^{(i,j)}(\vartheta)\ +\ \mathfrak C_4(i,j)
\ =\ \psi^{(i,j)}_{\rm OR\to BF}(\vartheta),
\]
which is \eqref{eq:psi-square}.
\hfill $\square$

\end{document}